\newtheorem{theorem}{\indent Theorem}
\newtheorem{proposition}{\indent Proposition}
\newtheorem{lemma}{\indent Lemma}
\newtheorem{remark}{\indent Remark}
\newtheorem{problem}{\indent Problem}
\newtheorem{corollary}{\indent Corollary}
\begin{document}

\preprint{APS/123-QED}

\title{Unified formalism and adaptive algorithms for optimal quantum state, detector and process tomography}

\author{Shuixin Xiao}
\email{These authors contributed equally to this work.}
 \affiliation{School of Engineering, Australian National University, Canberra, ACT 2601, Australia}
 \author{Xiangyu Wang}
 \email{These authors contributed equally to this work.}
 \affiliation{Laboratory of Quantum Information, University of Science and Technology of China, Hefei 230026, China}
 \affiliation{CAS Center For Excellence in Quantum Information and Quantum Physics, University of Science and Technology of China, Hefei 230026, China}
 \author{Yuanlong Wang}\email{wangyuanlong@amss.ac.cn}
 \affiliation{State Key Laboratory of Mathematical Sciences, Academy of Mathematics and Systems Science, Chinese Academy of Sciences, Beijing 100190, China}
 \affiliation{School of Mathematical Sciences, University of Chinese Academy of Sciences, Beijing 100049, China}
 \author{Zhibo Hou}\email{houzhibo@ustc.edu.cn}
 \affiliation{Laboratory of Quantum Information, University of Science and Technology of China, Hefei 230026, China}
 \affiliation{CAS Center For Excellence in Quantum Information and Quantum Physics, University of Science and Technology of China, Hefei 230026, China}
  \author{Jun Zhang}
   \affiliation{University of Michigan--Shanghai Jiao Tong University Joint Institute, Shanghai Jiao Tong
		University, Shanghai, 200240, China}
 \author{Ian R. Petersen}
 \affiliation{School of Engineering, Australian National University, Canberra, ACT 2601, Australia}
\author{Wen-Zhe Yan}
 \affiliation{Laboratory of Quantum Information, University of Science and
 		Technology of China, Hefei 230026, China}
 \affiliation{CAS Center For Excellence in Quantum Information and Quantum Physics, University of Science and Technology of China, Hefei 230026, China}
  \author{Hidehiro Yonezawa}
   \affiliation{Center for Quantum Computing, RIKEN, Wakoshi, Saitama 351-0198, Japan}
   \author{Franco Nori}
      \affiliation{Center for Quantum Computing, RIKEN, Wakoshi, Saitama 351-0198, Japan}
         \affiliation{Physics Department, The University of Michigan, Ann Arbor, Michigan, 48109-1040, USA}
 \author{Guo-Yong Xiang}
 \affiliation{Laboratory of Quantum Information, University of Science and
 		Technology of China, Hefei 230026, China}
 \affiliation{CAS Center For Excellence in Quantum Information and Quantum Physics, University of Science and Technology of China, Hefei 230026, China}
 \author{Daoyi Dong}\email{daoyidong@gmail.com}
 \affiliation{Australian Artificial Intelligence Institute, Faculty of Engineering and Information Technology, University of Technology Sydney, NSW 2007, Australia}

\date{\today}
\begin{abstract}

Quantum tomography is a standard technique for characterizing, benchmarking and verifying quantum systems/devices and plays a vital role in advancing quantum technology and understanding the foundations of quantum mechanics. Achieving the highest possible tomography accuracy remains a central challenge. Here we unify the infidelity metrics for quantum state, detector and process tomography in a single index $1-F(\hat S,S)$, where $S$ represents the true density matrix, POVM element, or process matrix, and $\hat S$ is its estimator. We establish a sufficient and necessary condition for any tomography protocol to attain the optimal scaling
$1-F= O(1/N) $  where $N$ is the number of state copies consumed, in contrast to the $O(1/\sqrt{N})$ worst-case scaling of static methods. Guided by this result, we propose adaptive algorithms with provably optimal infidelity scalings for state, detector, and process
tomography. Numerical simulations and quantum optical experiments validate the proposed methods, with our experiments reaching, for the first time, the optimal infidelity scaling in ancilla-assisted process tomography.

\end{abstract}

\maketitle

\emph{Introduction.---}Within quantum technologies, a crucial objective lies in the development of precise estimation and identification algorithms to capture comprehensive information about the quantum systems in question. This endeavor is commonly referred to as quantum tomography \cite{Jullien2014,qci}, which encompasses three primary tasks as Fig.~\ref{qt} shows: quantum state tomography (QST), quantum detector tomography (QDT), and quantum process tomography (QPT). These tasks involve the reconstruction of an unknown quantum state, detector, or process—yielding a quantum estimator—from known quantities and experimental measurement outcomes through some reconstruction algorithms~\cite{qci,You2011,Gebhart2023,PhysRevA.88.022101,Lundeen2009,Bialczak2010,burgarth_2011} (a more detailed illustration is presented in Supplementary
Section I). Numerous efficient or highly accurate protocols have been proposed to address various challenges in these three tasks, as highlighted in~\cite{pure2,PhysRevA.90.062123,high2021,Hou2016,Zhu2022,PhysRevB.92.075312,PhysRevLett.108.080502,Gross2010,PhysRevLett.113.190404,chen2019nc,PhysRevLett.127.140502,PhysRevLett.131.113601}.

For QST, to evaluate the accuracy of the estimator $\hat \rho$, several metrics---such as mean squared error (MSE), trace distance, and infidelity ($1-$fidelity)---have been proposed. Infidelity  quantifies (inversely proportional to) the resource number (i.e., the total number of state copies) required to reliably distinguish two quantum states \cite{che1,7956181,PhysRevLett.111.183601} and is widely adopted. The optimal scaling of QST infidelity between $\hat\rho$ and $\rho$ has been proved to be $ O(1/N) $  using $N$ identical and independent probes, while static (non-adaptive) QST algorithms only give $O(1/\sqrt{N})$ infidelity in the worst case  
\cite{PhysRevLett.111.183601, zhu}. Similar scaling hierarchies in terms of standard deviation exist in quantum metrology~\cite{science1,Giovannetti2011,PhysRevLett.96.010401}, though quantum tomography requires distinct theoretical treatment due to its multiparameter nature.

Adaptivity is a powerful tool to enhance the parameter estimation accuracy, e.g., in optical phase estimation \cite{w1,w2,hy2012}.
Thus to achieve optimal scaling, various adaptive algorithms have been proposed, in QST by updating measurement operators according to a state estimator \cite{PhysRevLett.111.183601,pereira2020high,b1,b2,Qi2017,PhysRevA.98.012339,PhysRevLett.122.100404}, and in QDT by selecting probe states based on a preliminary  detector characterization \cite{xiao2021optimal}. For QPT, there are 
generally three classes based on different system architectures: Standard Quantum Process Tomography \cite{qci}, Ancilla-Assisted Process Tomography (AAPT) \cite{aaqpt,AAPT2} and Direct Characterization of Quantum Dynamics \cite{dcqd}. Most recently, adaptive approaches focused on adaptive standard QPT \cite{Wang2016,inada}. 

Despite these achievements, three key challenges persist in this field, inhibiting the design of optimal tomography algorithms: 
(i) Few of the existing tomography algorithms can be proven to achieve optimal infidelity scaling in a general scenario. Bayesian adaptive QST \cite{b1,b2} was numerically and experimentally demonstrated to reach optimal scaling without a theoretical proof. Using the specific algorithm of Maximum Likelihood Estimation (MLE), Ref.~\cite{PhysRevLett.111.183601} presented a sufficient condition to achieve $ O(1/N) $ infidelity in single-qubit QST. Ref.~\cite{PhysRevA.98.012339} further studied adaptive single-qudit QST and Ref. \cite{xiao2021optimal} proposed adaptive QDT algorithms, while, regretfully missing the degenerate scenario. Research in more general cases, especially for QPT, is scarce, and the importance of MSE converging in $O(1/N)$ is seldom noticed. (ii) Little is known about non-trivial necessary conditions an optimal tomography scheme should satisfy, particularly in QDT and QPT, not to mention closing the gap between sufficiency and necessity. (iii) The existing research on tomography algorithms has largely focused on individual types of QST, QDT or QPT. The mathematical similarity between density matrices, POVM elements, and process matrices has not been fully utilized, and a unified analysis on optimality across the three tasks is still missing.

To address these problems, in this work, we extend the infidelity definition from states to arbitrary positive semidefinite operators, thus also including POVM elements and process matrices, and propose a sufficient and necessary condition on when the estimator of any QST, QDT or QPT algorithm can \emph{achieve the optimal infidelity scaling} $ O(1/N) $. This result applies to any finite-dimension quantum systems, including degenerate cases. To the best of our knowledge, this is the first equivalent characterization on optimal infidelity scaling of tomography outcomes, providing a unified formalism for the optimal three tomography tasks and representing a significant progress in filling the gaps and loopholes aforementioned in the field of quantum tomography.

Our result significantly facilitates the study of performance limits in existing methods and provides a principled approach for designing effective or even optimal algorithms. In particular, we introduce novel adaptive algorithms for QST, QDT, and AAPT (as schematically illustrated in Fig. \ref{qt}) to \emph{achieve optimal infidelity scaling}. For AAPT, we experimentally implement this algorithm on a quantum optical system, which, based on available literature, is the first work to prove and realize optimal infidelity scaling in AAPT experiments.

\begin{figure}[t]     
    \centering 
    \includegraphics[width=3.3in]{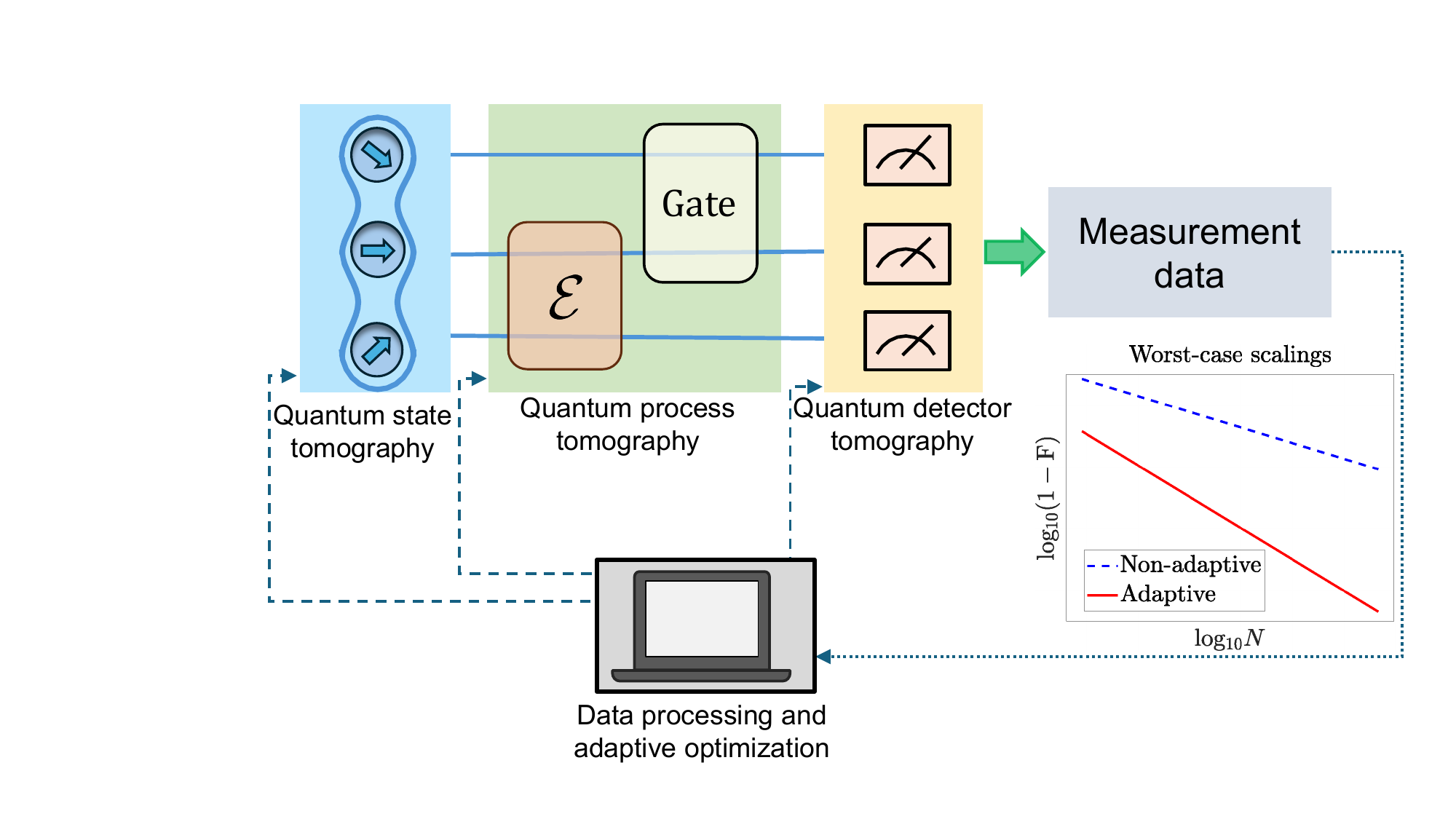}
    \centering{\caption{
    Three quantum tomography tasks: QST, QDT, and QPT \cite{Gebhart2023}, with their targets (quantum states, detectors and processes) constituting the essential elements of a complete quantum measurement experiment. Non-adaptive  tomography is constrained to $O(1/\sqrt{N})$ infidelity in the worst case after measuring $N$ state copies, in contrast to adaptive approaches that can reach $O(1/N)$.
    }\label{qt}}     
\end{figure}

\emph{A unified characterization of optimal infidelity scaling.---}
As presented in Fig. \ref{qt},
the tasks of QST, QDT, and QPT are to fully infer an unknown quantum state $\rho$, POVM $\{P_i\}_{i=1}^n$ and process $\mathcal{E}$, respectively. For QPT, multiple representations exist for the same $\mathcal{E}$, and here we employ the widely-used process matrix $0\leq X \in \mathbb{C}^{d^2\times d^2}$, satisfying $ \operatorname{Tr}_{1}\left(X\right)\leq I_d $ and in one-to-one correspondence with $ \mathcal{E} $ \cite{qci}, where $\text{Tr}_1(X)$ takes the partial trace of $ X \in \mathbb H_1\otimes \mathbb H_2$ on $\mathbb H_1$. When $ \operatorname{Tr}_{1}\left(X\right)= I_d $, the process is called trace-preserving, and otherwise non-trace-preserving, both of which are considered in this paper.

The fidelity between a quantum state $\rho\in\mathbb{C}^{d\times d}$ and its estimator $\hat \rho$ is defined as $F_{s}(\hat \rho,\rho)\triangleq(\operatorname{Tr} \sqrt{\sqrt{\hat \rho} \rho \sqrt{\hat \rho}})^{2}$, with $0\leq F_{s}(\hat \rho,\rho) \leq 1$. To unify the error characterization across the three typical tomography tasks, we extend $F_s$ to QDT and QPT by defining a more general fidelity 
\begin{equation}\label{definition2}
    F(\hat{S},S)\triangleq \frac{F_1(\hat{S}, S)-\inf F_1(\hat{S}, S)}{1-\inf F_1(\hat{S}, S)},
\end{equation}
where $F_1$ subtracts a second-order term $[\text{Tr}(S-\hat S)]^2/d^2$ from the existing detector/process fidelity definition $F_{d,p}(\hat{S},S)|_{S= P_i,X}\triangleq {(\operatorname{Tr} \sqrt{\sqrt{\hat{S}} S \sqrt{\hat{S}}})^{2}}/{\operatorname{Tr}(\hat{S})\operatorname{Tr}(S)}$~\cite{Lundeen2009,PhysRevA.82.042307} to solve the distortion problem \cite{newnote2}. Infidelity $1-F$ takes values in $[0,1]$, applies universally to density matrices, process matrices and POVM elements, reduces to $1-F_s$ for quantum states, and introduces an additional advantage of identifying zero infidelity with two equal arguments, compared with $1-F_{d,p}$. Crucially, when MSE converges in $O(1/N)$, the scaling behavior of $1-F_{s,d,p}$ up to the optimal order is \textit{preserved by} $1-F$. 
For QST, QDT, and QPT tasks with $N$ state copies consumed, the optimal scalings of $1-F_{s,d,p}$ and $1-F$ are all $O(1/N)$, which can be achieved for full-rank $S$ using merely static tomography methods. However, for rank-deficient $S$, without introducing extra resources such as adaptivity or prior knowledge, the infidelity scaling is often only  $O(1/\sqrt{N})$ (see Supplementary Section II-A,B for details of all the above analysis). Thus, we focus on how to achieve  $1-F(\hat{S}, S)= O\left(1/N\right) $ for an arbitrary operator $S\geq 0$. We answer by presenting the following sufficient and necessary condition (with its proof in Supplementary Section II-B), where $\mathbb{E}(\cdot)$ denotes the expectation over all possible measurement outcomes and $\|\cdot\|$ denotes the Frobenius norm.

\begin{theorem}	\label{theorem1}
For any unknown positive semidefinite operator $S \in \mathbb C^{d\times d}$ encoded in quantum tomography, denote its spectral decomposition as $$ S=\sum_{j=1}^d \lambda_j\left\lvert\lambda_j\right\rangle\left\langle\lambda_j\right\lvert$$
	where $ \lambda_1 \geq  \cdots \geq \lambda_r >0, \lambda_{r+1}=\cdots=\lambda_d=0 $. From the measurement results of $ N $ state copies, an estimate $ \hat S \geq 0$ is inferred,  with eigenvalues $ \hat{\lambda}_i $ also in non-increasing order.
	The infidelity $ \mathbb{E}(1-F(\hat S, {S}))$ scales as $O\left(1/{N}\right)$  if and only if the following conditions are both satisfied:
	\begin{enumerate}
		\item[C1:] The MSE $ \mathbb{E}\|   \hat S-{S}\|   ^2$  scales as  $ O(1/N) $;
		\item[C2:] The partial sum of the eigenvalues of  $\hat S$  scales as $\mathbb{E}\sum_{j=r+1}^{d}\hat{\lambda}_j= O(1/N)  $.
	\end{enumerate}
\end{theorem}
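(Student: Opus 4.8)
My plan is to reduce the whole statement to a single two-sided estimate of the infidelity in terms of the error blocks of $\hat S$, from which both implications follow by taking expectations. Write $P=\sum_{j=1}^{r}\lvert\lambda_j\rangle\langle\lambda_j\rvert$ for the projector onto the support of $S$ and $Q=I_d-P$ for the projector onto its kernel, and split the error $\Delta\triangleq\hat S-S$ into $\Delta_{PP}=P\Delta P$, $\Delta_{PQ}=P\Delta Q$, and $\hat S_{QQ}=Q\hat S Q$ (here $S_{QQ}=0$). The first step is a reduction: by symmetry of the fidelity and $\sqrt S=\sqrt S\,P$, the Uhlmann numerator obeys $\operatorname{Tr}\sqrt{\sqrt{\hat S}S\sqrt{\hat S}}=\operatorname{Tr}\sqrt{\sqrt S\hat S\sqrt S}=\operatorname{Tr}\sqrt{\sqrt S\,\hat S_{PP}\,\sqrt S}$, so $F_{d,p}$ depends on $\hat S$ only through the compression $\hat S_{PP}$ (numerator) and the scalar $\operatorname{Tr}\hat S$ (normalization). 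This isolates the two mechanisms of fidelity loss---a smooth distortion inside the support and an inflation of the normalization by kernel weight---which I expect to correspond exactly to C1 and C2.

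Next I would expand the numerator. Because $S$ is positive definite on its support, $M\mapsto(\operatorname{Tr}\sqrt{\sqrt S M\sqrt S})^{2}$ is smooth at $M=S$, and a second-order expansion gives $(\operatorname{Tr}\sqrt{\sqrt S\,\hat S_{PP}\sqrt S})^{2}=\operatorname{Tr}(S)\operatorname{Tr}(\hat S_{PP})-Q_S(\Delta_{PP})+O(\|\Delta_{PP}\|^{3})$, where the linear term is fixed by $\tfrac12\operatorname{Tr}(\Delta_{PP})$ and $Q_S$ is a Bures-type quadratic form with $c_1\|\Delta_{PP}\|^{2}\le Q_S(\Delta_{PP})\le c_2\|\Delta_{PP}\|^{2}$ for constants set by $\lambda_1,\lambda_r>0$. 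Dividing by $\operatorname{Tr}(\hat S)\operatorname{Tr}(S)=\operatorname{Tr}(S)[\operatorname{Tr}(\hat S_{PP})+\operatorname{Tr}(\hat S_{QQ})]$ and expanding, the leading behaviour is $1-F_{d,p}\asymp\|\Delta_{PP}\|^{2}+\operatorname{Tr}(\hat S_{QQ})$. Finally I would rewrite the kernel trace in the form of C2: by a Schur-complement argument the $d-r$ smallest eigenvalues of $\hat S$ are those of $\hat S_{QQ}-\hat S_{QP}\hat S_{PP}^{-1}\hat S_{PQ}$, so $\operatorname{Tr}(\hat S_{QQ})=\sum_{j=r+1}^{d}\hat\lambda_j+\Theta(\|\Delta_{PQ}\|^{2})$. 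Combining yields the target bound $1-F_{d,p}\asymp\|\Delta_{PP}\|^{2}+\|\Delta_{PQ}\|^{2}+\sum_{j=r+1}^{d}\hat\lambda_j$.

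I would then transfer this bound to $F$ and average. Since $\inf F_1(\hat S,S)$ is an $N$-independent constant with $1-\inf F_1>0$, $F$ is an order-preserving affine rescaling of $F_1$ and $1-F=\Theta(1-F_1)$, while $1-F_1=1-F_{d,p}+[\operatorname{Tr}(S-\hat S)]^{2}/d^{2}$ with the correction of order $O(\|\Delta\|^{2})$. As valid estimators have bounded trace, $\|\hat S_{QQ}\|^{2}\le\operatorname{Tr}(\hat S)\operatorname{Tr}(\hat S_{QQ})=O(\operatorname{Tr}\hat S_{QQ})$, so $\|\Delta\|^{2}=\|\Delta_{PP}\|^{2}+2\|\Delta_{PQ}\|^{2}+\|\hat S_{QQ}\|^{2}=O(1-F_{d,p})$ and hence $1-F=\Theta(1-F_{d,p})$ pointwise. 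Taking expectations, $\mathbb E(1-F)=\Theta\big(\mathbb E\|\Delta_{PP}\|^{2}+\mathbb E\|\Delta_{PQ}\|^{2}+\mathbb E\sum_{j=r+1}^{d}\hat\lambda_j\big)$. For necessity, $\mathbb E(1-F)=O(1/N)$ forces each nonnegative term to be $O(1/N)$: the last one is C2, and then $\mathbb E\|\Delta\|^{2}=\mathbb E\|\Delta_{PP}\|^{2}+2\mathbb E\|\Delta_{PQ}\|^{2}+O(\mathbb E\operatorname{Tr}\hat S_{QQ})=O(1/N)$ is C1. For sufficiency, C1 bounds the two norm terms and C2 the eigenvalue sum, so $\mathbb E(1-F)=O(1/N)$.

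The hard part is upgrading these perturbative $\asymp$ relations into genuine inequalities $c(\cdots)\le 1-F_{d,p}\le C(\cdots)$ that are uniform over all admissible $\hat S$, so that they survive under $\mathbb E$. The Taylor expansion and the Schur-complement identity are valid only while the eigenvalues of $\hat S_{PP}$ stay bounded away from zero, so I would control the remainder via operator-norm bounds on valid estimators, and then dispose of the large-deviation regime by splitting $\mathbb E$ over $\{\|\Delta\|\le\delta\}$ and its complement, where $1-F\le1$ and $\sum_{j=r+1}^{d}\hat\lambda_j$ are simultaneously $\Theta(1)$. The most delicate point is securing the global lower bound $1-F\ge c\sum_{j=r+1}^{d}\hat\lambda_j$, because the coherences $\Delta_{PQ}$ can partially mask kernel weight through the Schur complement---and this linear-in-kernel-eigenvalue term is precisely what distinguishes C2 from the quadratic MSE condition C1.
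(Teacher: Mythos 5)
Your architecture---support/kernel block decomposition, a two-sided perturbative bound on the Uhlmann numerator, and a Schur-complement identity linking $\operatorname{Tr}(\hat S_{QQ})$ to $\sum_{j>r}\hat\lambda_j$---is a genuinely different route from the paper's, which instead combines a cited Taylor expansion of the state infidelity with the Davis--Kahan $\sin\Theta$ theorem, the Fuchs--van de Graaf inequalities, and a normalization $\hat S\mapsto\hat S/\operatorname{Tr}(\hat S)$ that reduces the general case to the state case. However, there is a genuine gap at the step where you assert $c_1\|\Delta_{PP}\|^2\le Q_S(\Delta_{PP})$ and conclude $1-F_{d,p}\gtrsim\|\Delta_{PP}\|^2$ and ``$1-F=\Theta(1-F_{d,p})$ pointwise.'' The quadratic form $Q_S$ is only positive \emph{semi}definite: it vanishes identically along the scaling direction $\Delta_{PP}\propto S$. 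Concretely, for $\hat S=aS$ with $a\neq 1$ one has $F_{d,p}(\hat S,S)=1$ while $\|\Delta_{PP}\|=|1-a|\,\|S\|\neq 0$, so neither $1-F_{d,p}\ge c\|\Delta_{PP}\|^2$ nor $1-F=\Theta(1-F_{d,p})$ can hold. This is precisely the ``distortion'' phenomenon that motivates the paper's modified fidelity: the subtracted term $[\operatorname{Tr}(S-\hat S)]^2/d^2$ in $F_1$ is not a nuisance correction of order $O(\|\Delta\|^2)$ to be absorbed, it is the \emph{only} source of coercivity along the scaling direction, and your necessity argument for C1 collapses without it whenever $\operatorname{Tr}(\hat S)$ is not pinned to $\operatorname{Tr}(S)$, i.e., exactly in the QDT and non-trace-preserving QPT cases the theorem is meant to cover. (For QST the constraint $\operatorname{Tr}\Delta=0$ forces the component of $\Delta_{PP}$ along $S$ to have size $O(\operatorname{Tr}\hat S_{QQ})$, whose square is dominated by the first-order term, so there your bound survives.)

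The repair is essentially the paper's: since $1-F_1=(1-F_{d,p})+[\operatorname{Tr}(S-\hat S)]^2/d^2$ is a sum of two nonnegative terms, $\mathbb E(1-F)=O(1/N)$ yields \emph{separately} $\mathbb E(1-F_{d,p})=O(1/N)$ and $\mathbb E[\operatorname{Tr}(S-\hat S)]^2=O(1/N)$; apply your two-sided bound to the normalized pair $\hat S/\operatorname{Tr}(\hat S)$ and $S/\operatorname{Tr}(S)$ (where the traceless constraint restores coercivity) to control $\|\hat S/\operatorname{Tr}(\hat S)-S/\operatorname{Tr}(S)\|^2$, and use the second estimate to control the trace mismatch, recombining as in the paper's Eq.~\eqref{qdt4}. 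With that amendment---and with the uniformity and tail-splitting issues you already flag handled (note that the global lower bound $1-F\gtrsim\operatorname{Tr}(\hat S_{QQ})\ge\sum_{j>r}\hat\lambda_j$ follows from Ky Fan's extremal characterization of the sum of the smallest eigenvalues, which sidesteps your worry about coherences $\Delta_{PQ}$ masking kernel weight through the Schur complement)---your plan would go through.
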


When $S$ is a density matrix/POVM element/process matrix, Theorem \ref{theorem1} corresponds to the typical tomography scenarios QST/QDT/QPT. The $ O(1/N) $ scaling in C1 is in fact optimal for MSE, established by quantum Cram\'{e}r-Rao bound \cite{Liu_2020} and can be achieved by plain tomography through MLE \cite{qstmle,paris2004quantum,PhysRevA.64.024102} or Linear Regression Estimation (LRE) \cite{Qi2013}. However, C1 alone can only guarantee $O(1\sqrt{N})$ accuracy for the estimated eigenvalues. If the true $S$ has zero eigenvalues, the estimated zero eigenvalues usually scale as $O(1/\sqrt{N})$ rather than $O(1/N)$. Since infidelity exhibits high sensitivity to errors in estimating small eigenvalues \cite{PhysRevLett.111.183601}, it now only scales as $O(1/\sqrt{N})$. To improve the infidelity scaling to $O\left(1/N\right)$, Theorem~\ref{theorem1} requires to (and it sufficies to) accurately (in $O(1/N)$ scaling) estimate both the entirety of the target $S$ and its zero eigenvalues.

Existing related research is mostly sporadic in individual tomography tasks, and the theoretical analysis has been focusing on giving algorithm examples only sufficient for achieving optimal scaling, with the necessary direction seldom discussed formally. Most intimately associated is \cite{PhysRevLett.111.183601}, where the sufficiency of accurately estimating the small eigenvalues (C2) was proposed for single-qubit QST, without noticing the significance of accurately estimating the entirety of $\rho$ (C1). Current QST and QDT algorithms achieving the optimal infidelity scaling in \cite{PhysRevLett.111.183601,PhysRevA.98.012339,xiao2021optimal,pereira2020high} now can thus all be covered and explained by Theorem \ref{theorem1}. Since $S$ can be any non-degenerate or degenerate (several eigenvalues can be equal, a case important yet seldom covered before) finite-dimension quantum state, POVM element, or process matrix, this theorem serves as a unified framework for characterizing the optimal scaling of infidelity universally in quantum tomography. 
Therefore, Theorem~\ref{theorem1} provides a substantial advance beyond prevailing tomography methodologies.

The infidelity definition for detectors is not unique, with an example given in \cite{Hou2018}. A similar characterization of its optimal scaling is presented in Theorem 2 of Supplementary Section II-B, as an extension of Theorem~\ref{theorem1}.

\emph{Adaptive tomography algorithms with provably optimal infidelity scaling.---}Theorem~\ref{theorem1} provides design principles for tomography algorithms achieving optimal infidelity scaling, guided by which we propose three exemplary algorithms. First we present a two-step adaptive QST method. In Step-1, we apply MLE \cite{qstmle}  or LRE \cite{Qi2013} (without employing the correction algorithm in \cite{effqst}) with $ N_0 =\alpha N$ copies and obtain a preliminary estimator $$ \tilde\rho=\tilde U \operatorname{diag}(\tilde\lambda_1,\cdots,\tilde\lambda_d) \tilde U^{\dagger} $$
where $ \tilde U=[\lvert\tilde{\lambda}_{1}\rangle,\cdots,\lvert\tilde{\lambda}_{d}\rangle] $.
Here, when LRE is solely employed, $  \tilde\rho $ may  have negative eigenvalues, which is acceptable because in the next step only $ \tilde{ U} $ excluding $ \{\tilde\lambda_i\} $ will be used. Step-1 lays the foundation for C1, i.e., $\tilde\rho$ already accurately estimates the entirety of $\rho$.

Then in Step-2, we  use the eigenbasis of $\tilde\rho$, $\{\lvert\tilde{\lambda}_{i}\rangle\langle\tilde{\lambda}_{i}\lvert\}_{i=1}^{d}$, as the new measurement operators consuming the remaining $ N-N_0 $ state copies, and obtain the corresponding  new measurement frequency data $ \{\hat p_i\}_{i=1}^{d} $.
These adaptive measurement operators  correspond to a POVM because $ \sum_{i=1}^{d}\lvert\tilde{\lambda}_{i}\rangle\langle\tilde{\lambda}_{i}\lvert=I_{d}$ and we thus have $ \sum_{i=1}^{d} \hat p_i=1$. We set the final estimated eigenvalues to be $ \hat{\lambda}_{i}= \hat p_i$ and the final estimator to be
\begin{equation*}
	\hat{\rho}=\tilde{U} \operatorname{diag}(\hat{\lambda}_1, \cdots, \hat{\lambda}_d) \tilde{U}^{\dagger}.
\end{equation*}
Step-2 measures almost in the eigenbases of $\rho$, and directly setting the measurement frequency as the estimated eigenvalues can satisfy C2. The final estimator $\hat\rho$ is physical because $ \hat{\lambda}_{i}= \hat p_i\geq0 $ and $\operatorname{Tr}(\hat{\rho})= \sum_{i=1}^{d} \hat{\lambda}_{i}=\sum_{i=1}^{d} \hat p_i=1$.
The total procedure is as follows:
\begin{equation*}
	\rho\;\xrightarrow[\text{MLE/LRE}]{\text{Step-1:}} \;\tilde{ \rho}\;\xrightarrow[\text{ adaptive measurement}]{\text{Step-2:}}\; \hat{\rho}.
\end{equation*}

The proof of $O\left({1}/{N}\right)$ infidelity scaling using this algorithm is presented in Supplementary Section III-A. The actual performance of this algorithm also depends on the resource allocation proportion $\alpha$, and its optimal value remains open. References \cite{PhysRevLett.111.183601,PhysRevA.98.012339} chose $\alpha=\frac{1}{2}$ for their versions of two-step adaptive QST algorithms. In this work, we simulate both $\alpha=\frac{1}{2}$ and $\frac{9}{10}$ in Supplementary Section IV.

\begin{figure*}[htbp]     
    \centering 
\includegraphics[width=6.5in]{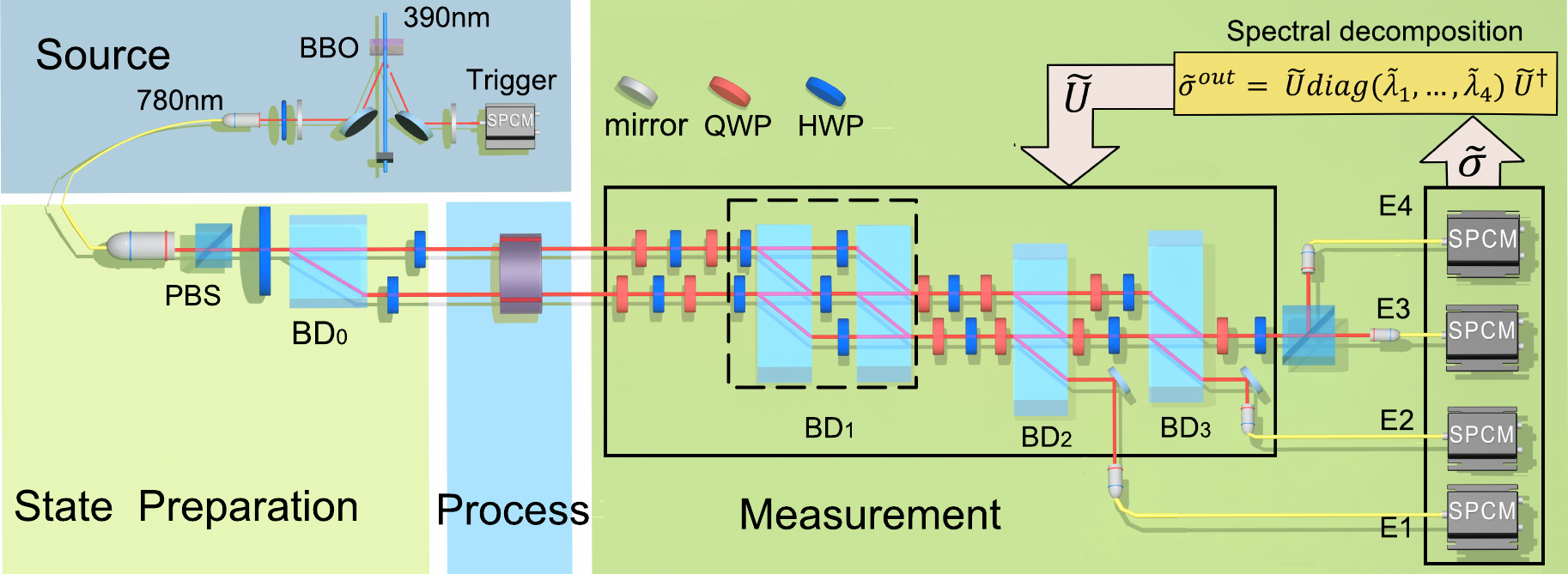}
    \centering{\caption{Key components include a polarizing beam splitter (PBS), half-wave plates (HWPs), quarter-wave plates (QWPs), Single-Photon Counting Module (SPCM), and beam displacer (BD). Spontaneous parametric down-conversion (SPDC) process is employed to generate a single-photon source. In state preparation, the polarization qubit and the path qubit of the photons are entangled to prepare a maximally entangled state. Arbitrary unitary process can be achieved using a combination of wave plates QWP-HWP-QWP, and a quartz crystal can be utilized to implement a phase damping process. By adjusting the rotation angles of the wave plates in the measurement module, arbitrary two-qubit projective measurements can be performed.  }\label{experimental1}}
\end{figure*}

Our two-step adaptive approach requires only one POVM set with $d$ elements in Step-2, reducing the number of measurement operators in comparison with the method in \cite{pereira2020high,PhysRevA.98.012339} which requires $2d-1+(d\ \text{mod}\ 2)$ POVM sets, each with at least $d$ elements \cite{newnote1}.

For QDT, we propose a similar two-step adaptive algorithm. In Step-1 a static tomography  gives  an accurate estimation of the entirety of each POVM element, and then in Step-2 we measure adaptively by employing all the eigenbases of the estimated $n$ POVM elements as pure probe states, and finally correct the estimator to satisfy the physical requirements of the POVMs. Hence, both C1 and C2 are satisfied and Theorem 1 yields $\mathbb{E}(1-F(\hat{P}_{i}, P_{i})) = O(1/N) $ for $1\leq i\leq n$, with the details  in Supplementary Section III-B. This algorithm reduces the required number of probe states from $nd^2$ (as in \cite{xiao2021optimal}) to $nd$, while still achieving $O(1/N)$ infidelity.

Finally, we propose an adaptive  AAPT algorithm allowing for both trace-preserving and non-trace-preserving processes. In
AAPT  \cite{aaqpt},  apart from the principal system $A$, an ancilla system $B$ is introduced, with the input state $ \sigma^{\text{in}} $ and the measurement of the output state $ \sigma^{\text{out}} $  both on the composite space $\mathbb H_A\otimes \mathbb H_B$, as illustrated in Supplementary Section I-D. 
We take $\text{dim}(\mathbb H_A)=\text{dim}(\mathbb H_B)= d $ and $ \sigma^{\text{in}} $  pure with the Schmidt number $\operatorname{Sch}(\sigma^{\text{in}})=d^2$. We first apply our two-step adaptive QST method to reconstruct $\hat{\sigma}^{\text{out}}$, where the condition $\text{Tr}(\hat{\sigma}^{\text{out}})=1$ is dropped for non-trace-preserving processes. The well-established property of our QST algorithm already ensures the optimal infidelity scaling of $\hat\sigma^{\text{out}}$. Then we transform $\hat\sigma^{\text{out}}$ to a pseudo process matrix and finally correct it to satisfy the physical requirements, with the transformation and correction procedures carefully designed such that the estimated entirety and zero eigenvalues of the process matrix maintains accurate. The detailed procedures and proof of optimal infidelity scaling are in Supplementary Section III-C.

\emph{Experimental results of adaptive ancilla-assisted process tomography.---}The experimental setup for performing adaptive AAPT is shown in Fig. \ref{experimental1}. We employ a two-qubit Bell state as the fixed input state, with one qubit encoded in the photon's polarization (principal) and the other in its path (ancilla), allowing the unknown quantum process to act exclusively on the principal qubit. Both unitary and non-unitary processes are investigated. For the unitary case, we implement the Hadamard gate using a QWP--HWP--QWP wave plate cofiguration.
For the non-unitary case, we employ a quartz crystal to realize a phase damping process described by Kraus operators $\mathcal{A}_1 = \operatorname{diag}(1, \sqrt{1 - \lambda}), \mathcal{A}_2 = \operatorname{diag}(0, \sqrt{\lambda})$,
where $\lambda = 0.989$.

For comparison, we also realize a non-adaptive AAPT using two-qubit Cube measurements on ${\sigma}^{\text{out}}$. There are nine detectors, each with four POVM elements, detected by the four single-photon counters in Fig. \ref{experimental1}. Each detector consumes ${N}/9$ state copies. Recall that our adaptive AAPT scheme requires the adaptive QST on ${\sigma}^{\text{out}}$, and static Cube measurement is also performed in Step-1 of this QST, consuming $N_0=0.5N$ copies. Each measurement in our experiment is repeated $100$ times to obtain the average infidelity and error bars.

\begin{figure}[!ht]
    \centering
    \subfigure{
       \begin{minipage}[b]{1\textwidth}
          \centering \includegraphics[width=3.3in]{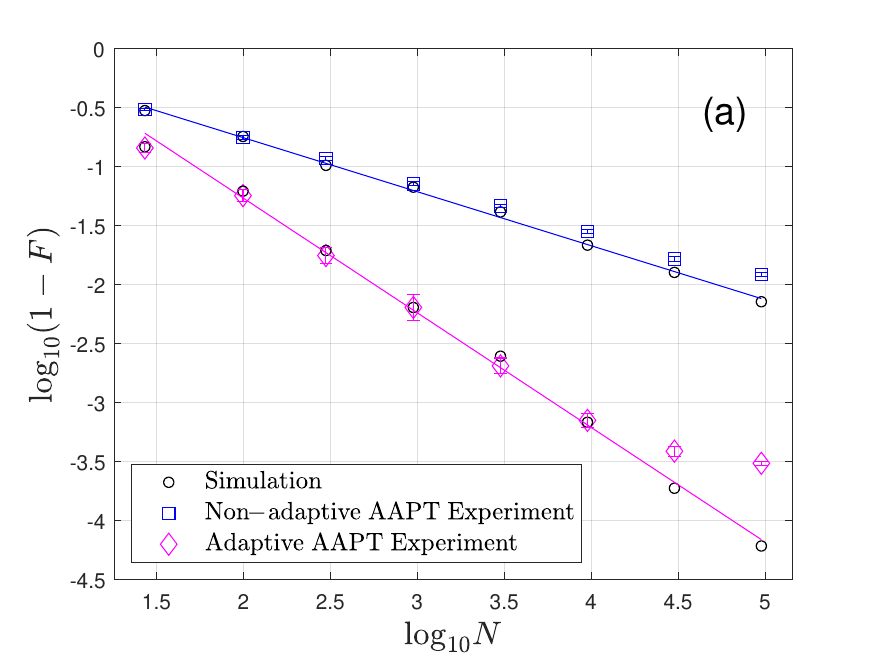}
       \end{minipage}
    }
   \subfigure{
      \begin{minipage}[b]{1\textwidth}
        \centering \includegraphics[width=3.3in]{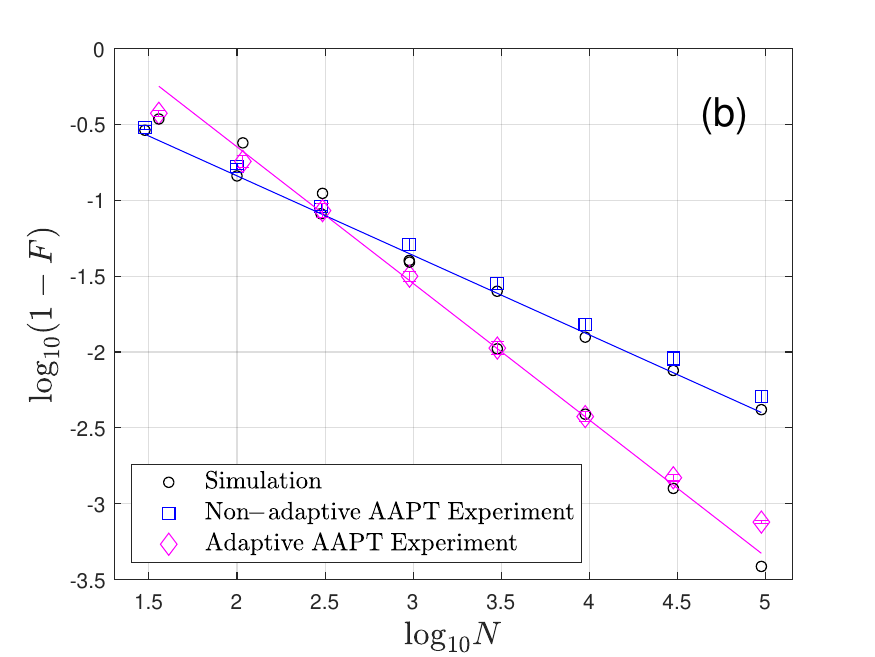}
        \end{minipage}
   }
    \caption{Experimental results for adaptive and non-adaptive AAPT. The unknown process is (a) the Hadamard gate and (b) a phase damping process.  Solid lines represent fitted simulations.  The adaptive method achieves $O(1/N)$ infidelity scaling, in alignment with \textbf{Theorem \ref{theorem1}}, and \textbf{Theorem 5} in Supplementary Section III-C, whereas the non-adaptive method scales only as $O(1/\sqrt{N})$.}\label{adaaptinf}
\end{figure}

From Fig. \ref{adaaptinf}, we observe that in both numerical and experimental settings, the infidelity of the reconstructed process matrix $\mathbb{E}(1 - F(\hat{X}, X))$ scales as $O(1/\sqrt{N})$  using the non-adaptive AAPT, because it cannot satisfy C2. In contrast, the adaptive AAPT method achieves $O(1/N)$ scaling both numerically and experimentally. Experimental outcomes  closely match numerical predictions in a large range and across unitary and non-unitary processes, validating the effectiveness of our proposed tomography approach. For the unitary process Hadamard gate, the infidelity of the adaptive AAPT is consistently lower than that of the non-adaptive method, whereas in the case of phase damping process, when $N<10^{2.5}$, the infidelity of the non-adaptive approach is comparatively lower. This is because the adaptive method relies on an initial estimator in the first step, which is inaccurate when $N$ is relatively small,  impacting the subsequent outcome in the adaptive step.

\emph{Conclusions.---}In this work, we developed a unified framework for analyzing the achievable infidelity scaling in quantum tomography. We established a sufficient and necessary condition to attain the optimal infidelity scaling of $O(1/N)$, applicable to QST, QDT, and QPT in arbitrary finite dimension, including degenerate cases and allowing the process to be trace-preserving or non-trace-preserving. This appears to be the first universal equivalent characterization of its kind. Our result generalizes and extends previous adaptive tomography studies~\cite{PhysRevLett.111.183601, PhysRevA.98.012339, xiao2021optimal}. Based on this insight, we proposed adaptive algorithms for QST and QDT that use fewer POVM elements or probe states than existing methods~\cite{PhysRevA.98.012339, xiao2021optimal, pereira2020high}, while still reducing the infidelity in $O(1/N)$. We also proposed and experimentally verified the first adaptive AAPT algorithm with optimal infidelity scaling. Our findings not only strengthen adaptivity's critical role in quantum
tomography, but also lay foundations for further investigations, such as exploring the sufficiency of mixed input states in AAPT or designing tomography algorithms with optimal infidelity values.

\emph{Acknowledgments.---}We thank Dr. Clemens Gneiting, Dr. Yanming Che, and Prof. Yuerui Lu for valuable comments on the manuscript. This research was supported by the Innovation Program for Quantum
Science and Technology 2023ZD0301400, the Australian Research Council Future Fellowship Funding Scheme under Project FT220100656, the Discovery Project Funding Scheme under Projects DP200102945, DP210101938, DP240101494, the National Natural Science Foundation of China (62173229, 12288201, 62222512, 12104439, and 12134014), and the Anhui Provincial Natural Science Foundation (Grant No.2208085J03).
F.N. is supported in part by the Japan Science and Technology Agency (JST) (via the CREST Quantum Frontiers program Grant No. JPMJCR24I2,
the Quantum Leap Flagship Program (Q-LEAP), and the Moonshot R\&D Grant Number JPMJMS2061),
and the Office of Naval Research (ONR) Global (via Grant No. N62909-23-1-2074).

\bibliography{adaptive}

\onecolumngrid

\newpage 


\setcounter{equation}{0}
\renewcommand{\theequation}{S.\arabic{equation}}

\setcounter{section}{0}

\setcounter{figure}{0}
\renewcommand{\thefigure}{S\arabic{figure}}

\begin{center}
	\large {\bf Supplemental Material}
\end{center}

\section*{Notation and key symbols}
We present all relevant notations below, with the acronyms summarized in Table~\ref{tablet2}, and the key symbols used in our proposed adaptive algorithms listed in Table~\ref{tablet1}.

\textbf{Notation:} The $ i $-th row and $ j $-th column of a matrix $ X $ is $(X)_{ij} $. The $ j $-th column of $ X $ is $ \operatorname{col}_{j}(X) $. The transpose of $X$ is $X^T$. The conjugate $ (*) $ and transpose of $X$ is $X^\dagger$. The sets of  real and complex numbers are $\mathbb{R}$ and $\mathbb{C}$, respectively. The sets of  $d$-dimension complex vectors and  $d\times d$ complex matrices are $\mathbb{C}^d$ and $\mathbb{C}^{d\times d}$, respectively. The identity matrix is $ I $. $\rm i=\sqrt{-1}$.     The Dirac Delta function is $ \delta(\cdot) $. As $ N $ tends to infinity, the convergence of the sequence $ a(N) $ to $ b $
is denoted as $ a(N)\rightarrow b $. 
The trace of $X$ is $\text{Tr}(X)$. The Frobenius norm of a matrix $X$ is denoted as $||X||$ and the 2-norm of a vector $ x $ is $||x||$.  We use the density matrix $ \rho $ to  represent a quantum state and use a unit complex vector  $|\psi\rangle$ to represent a pure state. The conjugate $|\psi\rangle^{\dagger}$  is usually denoted as $\langle\psi|$, and $\langle\psi|\cdot |\phi\rangle $ is often simplified as  $\langle\psi|\phi\rangle $.
Denote the standard basis as $ \{|i\rangle\}_{i=1}^n $,  such that $ \langle i| j\rangle=\delta_{ij}$.
The estimate of $X$ is $ \hat X $. The inner product of two matrices $X$ and $Y$ is defined as $\langle X, Y\rangle\triangleq\text{Tr}(X^\dagger Y)$.
The inner product of two vectors $x$ and $y$ is defined as $\langle x,
y\rangle\triangleq x^\dagger y$. The tensor product of $A$ and $B$ is $A\otimes B$. We define a Hilbert space by $\mathbb{H}$.
The partial trace of $ X \in \mathbb H_1\otimes \mathbb H_2$ on the space $\mathbb H_1$ is $\text{Tr}_1(X)$.
The diagonal elements of the diagonal matrix  $\operatorname{diag}(a)$ is the vector $a$.
For any positive semidefinite $X_{d \times d}$ with spectral decomposition $X=U P U^{\dagger}$, define $\sqrt{X}$ or $X^{1/2}$ as $U \operatorname{diag}(\sqrt{(P)_{11}}, \sqrt{(P)_{22}}$, $\ldots, \sqrt{(P)_{d d}}) U^{\dagger}$.  For any positive definite $X_{d \times d}$, define $X^{-1/2}$ as $U \operatorname{diag}(1/\sqrt{(P)_{11}}, 1/\sqrt{(P)_{22}},$ $\ldots, 1/\sqrt{(P)_{d d}}) U^{\dagger}$. Pauli matrices are
\begin{equation*}
	\sigma_x=\begin{bmatrix}
		0 & 1 \\
		1 & 0
	\end{bmatrix},\quad
	\sigma_y=\begin{bmatrix}
		0 & -\mathrm{i} \\
		\mathrm{i} & 0
	\end{bmatrix},\quad
	\sigma_z=\begin{bmatrix}
		1 & 0 \\
		0 & -1
	\end{bmatrix}.
\end{equation*}

\begin{table}[htbp]
	\centering
	\caption{List of acronyms.}
	\label{tablet2}
	\renewcommand{\arraystretch}{1.2}
	\begin{tabular}{ll}
		\hline
		\textbf{Acronym} & \textbf{Full Name} \\
		\hline
		AAPT & Ancilla-Assisted Process Tomography \\
		BD & Beam Displacer \\
		GM Bound & Gill–Massar Bound \\
		HWPs & Half-Wave Plates \\
		LRE & Linear Regression Estimation \\
		MLE & Maximum Likelihood Estimation \\
		MSE & Mean Squared Error \\
		PBS & Polarizing Beam Splitter \\
		POVM & Positive Operator-Valued Measure \\
		QDT & Quantum Detector Tomography \\
		QPT & Quantum Process Tomography \\
		QPST & Quantum Pseudo-State Tomography \\
		QST & Quantum State Tomography \\
		QWPs & Quarter-Wave Plates \\
		SPDC & Spontaneous Parametric Down-Conversion \\
		SVD  & Singular Value Decomposition \\
		\hline
	\end{tabular}
\end{table}

\begin{table}[htbp]
	\centering
	\caption{List of key symbols.}
	\label{tablet1}
	\renewcommand{\arraystretch}{1.3}
	\begin{tabular}{lll}
		\hline
		\textbf{Symbol} & \textbf{Description} & \textbf{Defined in} \\
		\hline
		$N$ & Total number of state copies used in QST/QDT/QPT & -- \\
		$S$, $\hat{S}$ & A positive semidefinite matrix and its estimate & Sec.~\ref{defs} \\
		$F_{d,p}(\hat{S}, S)$ & Previously defined fidelity between $S$ and $\hat{S}$ & Eq.~\eqref{s1} \\
		$F(\hat{S}, S)$ & Proposed fidelity between $S$ and $\hat{S}$ & Eq.~\eqref{s3} \\
		$\rho$ & True quantum state (density operator) & Sec.~\ref{pre1} \\
		$F_{s}(\sigma, \rho)$ & Conventional fidelity between two quantum states $\sigma$ and $\rho$& Eq.~\eqref{deinf} \\
		$\tilde{\rho}$ & Intermediate estimate of $\rho$ after Step 1 (adaptive QST) & Sec.~\ref{sec61} \\
		$\tilde\lambda_i$, $\lvert\tilde{\lambda}_{i}\rangle$ & $i$-th eigenvalue and eigenvector of $\tilde{\rho}$ & Sec.~\ref{sec61} \\
		$\hat{\rho}$ & Final estimate of $\rho$ (adaptive QST) & Sec.~\ref{sec61} \\
		$\hat\lambda_i$, $\lvert\hat{\lambda}_{i}\rangle$ & $i$-th eigenvalue and eigenvector of $\hat{\rho}$ & Sec.~\ref{sec61} \\
		$P_i$ & True $i$-th POVM element & Sec.~\ref{pre2} \\
		$\bar{P}_i$ & Step-1 estimate of $P_i$ (adaptive QDT) & Eq.~\eqref{qq1} \\
		$\bar\lambda_j^i$, $\lvert\bar{\lambda}_j^i\rangle$ & $j$-th eigenvalue and eigenvector of $\bar{P}_i$ & Eq.~\eqref{qq1} \\
		$\tilde{P}_i$ & Step-2 estimate of $P_i$ (adaptive QDT) & Eq.~\eqref{qq2} \\
		$\tilde\lambda_j^i$, $\lvert\tilde{\lambda}_j^i\rangle$ & $j$-th eigenvalue and eigenvector of $\tilde{P}_i$ & Eq.~\eqref{qq2} \\
		$\tilde{I}_{d}$& Sum of $\{\tilde{P}_i\}$, $\sum_{i} \tilde{P}_i=\tilde{I}_{d}$ & Eq.~\eqref{tildei} \\
		$\hat{P}_i$ & Final estimate of $P_i$ (adaptive QDT) & Eq.~\eqref{qdtpre} \\
		$\hat{\sigma}^{\text{out}}$ & Estimated output state in adaptive AAPT & Eq.~\eqref{qqpst} \\
		$X$ & True process matrix of the quantum channel $\mathcal{E}$ & Sec.~\ref{pre3} \\
		$\tilde{X}_0$ & Intermediate estimate of $X$ in adaptive AAPT & Eq.~\eqref{ghzprue} \\
		$\hat{X}$ & Final estimate of $X$ in adaptive AAPT & Eq.~\eqref{aaptpro} \\
		$\operatorname{vec}$ & Column-vectorization operator & Eq.~\eqref{vec} \\
		\hline
	\end{tabular}
\end{table}

\section{Preliminaries on quantum tomography }\label{Sec2}
There are three typical problems in quantum tomography; i.e., quantum state/detector/process tomography (QST/QDT/QPT),
aiming at fully estimating an unknown state/detector/process \cite{qci,You2011,Gebhart2023,Lundeen2009,Bialczak2010}.
Here we introduce several existing quantum tomography algorithms.

\subsection{Quantum state tomography and linear regression estimation}\label{pre1}
Quantum state tomography (QST)  aims to estimate an unknown quantum state $ \rho \in \mathbb{C}^{d\times d} $ satisfying $ \rho=\rho^{\dagger} $, $ \rho\geq0 $, and $ \operatorname{Tr}(\rho)=1 $. When $\rho=|\psi\rangle\langle\psi|$, where $|\psi\rangle\in\mathbb{C}^{d}$ is a unit vector, $ \rho $ is called a pure state. 
When we apply a set of measurement operators $\left\{P_i\right\}_{i=1}^n$  to the quantum state $\rho$, the probability of obtaining the $i$-th result is given by the Born's rule
\begin{equation}
	p_i=\operatorname{Tr}\left(P_i \rho\right).
\end{equation}
In practical experiments, suppose that $N$ (also called the resource number; i.e., the number of state copies) identical copies of $\rho$ are prepared and the $i$-th results occur $N_i$ times. Then the measured frequency  $\hat p_i=\frac{N_i}{N}$ gives an experimental
estimation of the true value $p_i$ and the measurement error is $ e_i=\hat p_i- p_i $.
According to the central
limit theorem, the distribution of $ e_i $ converges to a normal
distribution \cite{MU2020108837,Qi2013} with mean zero and variance $\left(p_i-p_i^{2}\right)/N$.
The task of QST is to identify an unknown density operator $ \rho $ using the measurement results. Among various algorithms to perform QST such as Maximum Likelihood Estimation (MLE) \cite{qstmle}, Bayesian Mean Estimation \cite{qstbme} and Linear Regression Estimation (LRE)  \cite{Qi2013}, here we consider the  LRE method since it can give  analytical formulas for the error upper bound and computational complexity.

Ref. \cite{Qi2013} formulated QST  into a  linear equation
\begin{equation}
	\mathcal{Y}=\mathcal{X}\phi +\mathcal{e},
\end{equation}
where $ \mathcal{Y} $ is a vector containing all of the measured frequency, $ \mathcal{X}$ is the parameterization matrix for the measurement operators, $ \phi $ is the parameterization vector of $\rho$, and $\mathcal{e}  $ is the vector of the measurement errors.
We assume that the measurement operators are informationally complete, i.e., $ \mathcal{X} $ has a full column rank. Thus, the unique least squares solution is 
\begin{equation}\label{lse}
	\tilde\phi=(\mathcal{X}^{\dagger}\mathcal{X})^{-1}\mathcal{X}^{\dagger}\mathcal{Y}.
\end{equation}
Using $ \tilde\phi $, we can reconstruct the estimated quantum state $ \tilde{ \chi} $.
Ref. \cite{Qi2013} proved that the computational complexity of this algorithm is $ O(Ld^2) $, where $ L $ is the type number of different measurement operators, and an error upper bound for the MSE is
\begin{equation}\label{rhomse}
	\mathbb{E}\|\tilde{\chi}-{\rho}\|^2 \leq \frac{J}{4 N} \operatorname{Tr}\left(\mathcal{X}^{\dagger}\mathcal{X} \right)^{-1},
\end{equation}
where  $ J $ is the number of the POVM sets (i.e., measurement basis sets), $ N $ is the total resource number, and $ \mathbb{E} $ denotes  expectation w.r.t. all possible measurement
results. 

The LRE estimate $ \tilde{ \chi} $ may not satisfy the positive semidefinite constraint and a fast  correction algorithm in \cite{effqst} can subsequently be  applied. We assume that the spectral decomposition of $ \tilde{ \chi} $ is
$$ \tilde{ \chi}=\tilde U \operatorname{diag}(\tilde\lambda_1,\cdots,\tilde\lambda_d) \tilde U^{\dagger}, $$
where $ \tilde\lambda_1 \geq \cdots \geq \tilde\lambda_d $ and $ \sum_{i=1}^{d}\tilde\lambda_{i}=1 $.
Using the fast algorithm in \cite{effqst}, we need to find the maximum $ k \:(1\leq k \leq d)  $ such that 
\begin{equation}\label{qstcorrect}
	\tilde{\lambda}_k+\frac{\sum_{i=k+1}^d \tilde{\lambda}_i}{k} \geq 0, \quad \tilde{\lambda}_{k+1}+\frac{\sum_{i=k+2}^d \tilde{\lambda}_i}{k+1}<0.
\end{equation}
Then $ \hat{\lambda}_j=\tilde{\lambda}_j+\frac{\sum_{i=k+1}^d \tilde{\lambda}_j}{k} $ for $ 1\leq j \leq k $ and $ \hat{\lambda}_j=0 $ for $ k+1\leq j \leq d $. Thus, the final physical estimate is
$$ \hat\chi=\tilde U \operatorname{diag}(\hat\lambda_1,\cdots,\hat\lambda_d) \tilde U^{\dagger}. $$
Ref. \cite{Qi2013} utilized the above algorithm without proving the MSE scaling.
Here we show that the scaling of the MSE is 
\begin{equation}\label{qst1}
	\mathbb{E}\|\hat\chi-{\rho}\|^2 =O\left(\frac{1}{N}\right).
\end{equation}
We first introduce the following/ lemma.
\begin{lemma}(\cite{bhatia2007perturbation} Theorem 8.1 and Theorem 28.3)\label{lemma3}
	Let $X$, $Y$ be Hermitian matrices with eigenvalues $\lambda_1(X)\geq\cdots\geq\lambda_n(X)$ and $\lambda_1(Y)\geq\cdots\geq\lambda_n(Y)$, respectively. Then
	\begin{equation}\label{weyl}
		\max_j|\lambda_j(X)-\lambda_j(Y)|\leq||X-Y||,
	\end{equation}
	and
	\begin{equation}\label{weyl2}
		\sum_{j=1}^{n}\left(\lambda_j(X)-\lambda_j(Y)\right)^2\leq||X-Y||^2.
	\end{equation}
\end{lemma}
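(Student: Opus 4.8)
The plan is to treat the two inequalities separately, since \eqref{weyl} and \eqref{weyl2} invoke different classical tools. For the eigenvalue-gap bound \eqref{weyl}, I would start from the Courant--Fischer min-max characterization of the ordered eigenvalues, $\lambda_j(X)=\max_{\dim V=j}\ \min_{0\neq v\in V}\langle v,Xv\rangle/\langle v,v\rangle$. Applying this to $Y=X+(Y-X)$ and using additivity of the Rayleigh quotient on a common optimizing subspace yields the Weyl perturbation inequality $\lambda_j(X)-\lambda_j(Y)\leq\lambda_1(X-Y)$; swapping the roles of $X$ and $Y$ gives the reverse direction, so $|\lambda_j(X)-\lambda_j(Y)|\leq\|X-Y\|_{\mathrm{op}}$, where $\|\cdot\|_{\mathrm{op}}$ denotes the spectral (largest-singular-value) norm. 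Since the largest singular value never exceeds the Frobenius norm, $\|X-Y\|_{\mathrm{op}}\leq\|X-Y\|$, and \eqref{weyl} follows immediately.

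For the Hoffman--Wielandt bound \eqref{weyl2}, I would diagonalize both matrices as $X=U D_X U^\dagger$ and $Y=V D_Y V^\dagger$, with $D_X=\operatorname{diag}(\lambda_1(X),\ldots,\lambda_n(X))$ and $D_Y$ analogous, and set $W\triangleq U^\dagger V$, which is unitary. Using the unitary invariance of the Frobenius norm and expanding the square,
\begin{equation*}
\|X-Y\|^2=\|D_X-W D_Y W^\dagger\|^2=\sum_i\lambda_i(X)^2+\sum_j\lambda_j(Y)^2-2\sum_{i,j}\lambda_i(X)\lambda_j(Y)\,|W_{ij}|^2.
\end{equation*}
The crucial observation is that the matrix $S$ with entries $S_{ij}=|W_{ij}|^2$ is doubly stochastic, because the rows and columns of the unitary $W$ have unit norm and $|W_{ij}|^2\geq 0$. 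Hence the cross term is a linear functional of $S$ over the Birkhoff polytope of doubly stochastic matrices, and obtaining \eqref{weyl2} amounts to an upper bound on that functional.

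By the Birkhoff--von Neumann theorem this linear functional attains its maximum at an extreme point of the polytope, i.e., at a permutation matrix, where it equals $\sum_i\lambda_i(X)\lambda_{\pi(i)}(Y)$ for some permutation $\pi$; and since both eigenvalue lists are sorted in non-increasing order, the rearrangement inequality shows this sum is largest for the identity permutation. Therefore $\sum_{i,j}\lambda_i(X)\lambda_j(Y)S_{ij}\leq\sum_i\lambda_i(X)\lambda_i(Y)$, and substituting this bound into the display gives $\|X-Y\|^2\geq\sum_i(\lambda_i(X)-\lambda_i(Y))^2$, which is exactly \eqref{weyl2}. The bound \eqref{weyl} is essentially immediate once min-max is in hand; I expect the main obstacle to lie entirely in the second part, namely the optimization over doubly stochastic matrices, where one must justify the reduction to permutations via Birkhoff and then single out the identity permutation via rearrangement, rather than in any analytic estimation.
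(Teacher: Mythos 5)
Your proposal is correct, but note that the paper does not actually prove this lemma at all: it is stated purely as a citation to Bhatia's \emph{Perturbation Bounds for Matrix Eigenvalues} (Theorems 8.1 and 28.3), i.e., Weyl's perturbation theorem and the Hoffman--Wielandt inequality. What you have written is essentially the standard textbook derivation of both cited results, and it is sound. For \eqref{weyl}, the chain Courant--Fischer $\Rightarrow$ Weyl's inequality $|\lambda_j(X)-\lambda_j(Y)|\leq\|X-Y\|_{\mathrm{op}}$ $\Rightarrow$ domination of the spectral norm by the Frobenius norm is exactly right; the only slightly loose phrase is ``additivity of the Rayleigh quotient on a common optimizing subspace''---the clean version is that for every unit vector $v$ one has $\langle v,Yv\rangle\leq\langle v,Xv\rangle+\lambda_1(Y-X)$, so the inequality survives the inner minimum and outer maximum in the min-max formula. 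For \eqref{weyl2}, the reduction of the cross term $\sum_{i,j}\lambda_i(X)\lambda_j(Y)|W_{ij}|^2$ to a linear functional on the Birkhoff polytope, the passage to permutation matrices via Birkhoff--von Neumann, and the selection of the identity permutation by the rearrangement inequality (which holds for arbitrary real, possibly negative, sorted sequences) together give precisely the Hoffman--Wielandt bound. In short: the paper buys these facts by reference, while your argument supplies a self-contained proof along the classical lines; there is no gap.
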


If $ \sum_{j=k+1}^d \tilde{\lambda}_j > 0 $, we have $\tilde{\lambda}_{k+1} >0$ and
$$ \tilde{\lambda}_{k+1}+\frac{\sum_{i=k+2}^d \tilde{\lambda}_i}{k+1}=\frac{k\tilde{\lambda}_{k+1}+\sum_{i=k+1}^d \tilde{\lambda}_i}{k+1}>0, $$ which conflicts with Eq.~\eqref{qstcorrect}.
Therefore, $ \sum_{j=k+1}^d \tilde{\lambda}_j \leq 0 $. Without loss of
generality, we assume that $ \tilde{\lambda}_{k+1}\geq \cdots \geq\tilde{\lambda}_{h}\geq 0  $ and $0>\tilde{\lambda}_{h+1}\geq \cdots \geq\tilde{\lambda}_{d}.  $
For $\tilde{\lambda}_j \leq 0 $, using Lemma  \ref{lemma3} and $ \lambda_j\geq 0 $,
we have
\begin{equation}
	\tilde{\lambda}_j^2 \leq\left(\tilde{\lambda}_j-\lambda_j\right)^2 \leq \|\tilde\chi-{\rho}\|^2.
\end{equation}
Thus,  $ \mathbb E\tilde{\lambda}_j^{2}=O(1 / {N}) $ and  $\mathbb E\left(\sum_{j=h+1}^d \tilde{\lambda}_j\right)^2=O(1 / {N})$. Since $ \sum_{j=k+1}^d \tilde{\lambda}_j= \sum_{j=k+1}^h \tilde{\lambda}_j +\sum_{j=h+1}^d \tilde{\lambda}_j \leq 0$, we have $\mathbb E(\sum_{j=k+1}^h \tilde{\lambda}_j)^2=O(1 / {N})$.
Due to $\sum_{j=k+1}^h \tilde{\lambda}_j^2  \leq \left(\sum_{j=k+1}^h \tilde{\lambda}_j\right)^2  $, we have
$\mathbb{E}\left(\sum_{j=k+1}^h \tilde{\lambda}_j^2 \right) =O(1 / {N}) $.
Therefore, for $ k+1\leq j \leq d $, $ \mathbb E\tilde{\lambda}_j^2=O(1 / {N}) $ and thus
the  MSE $ \mathbb{E}\|\hat{\rho}-\tilde{ \rho}\|^2 $ scales as
\begin{equation}\label{rho2}
	\begin{aligned}
		\mathbb{E}\|\hat\chi-\tilde{ \rho}\|^2& =\mathbb{E} \left(\sum_{j=1}^{k} \left(\frac{\sum_{i=k+1}^d \tilde{\lambda}_i}{k}\right)^2+\sum_{j=k+1}^{d}\tilde{\lambda}_j^2\right)= O\left(\frac{1}{N}\right).
	\end{aligned}
\end{equation}
Since $ \|\hat\chi-{ \rho}\|\leq \|\hat\chi-\tilde{\chi}\|+\|\tilde\chi-{ \rho}\| $, using Eqs.~\eqref{rhomse} and \eqref{rho2}, the final MSE $ \mathbb{E} \|\hat\chi-{ \rho}\|^2 $  scales as $ O(1/N) $ after applying the fast  correction algorithm in \cite{effqst}.

\subsection{Quantum detector tomography and the two-stage algorithm}\label{pre2}
In quantum physics, measurement is almost everywhere and the measurement device is usually called a detector, which can be characterized by a set of measurement operators $\left\{P_i\right\}_{i=1}^n$.
These operators are named a Positive-Operator-Valued Measure (POVM) and  each POVM element $P_i\in \mathbb{C}^{d\times d}$ satisfies $ P_i=P_i^{\dagger} $ and $ P_i \geq 0 $. Moreover, together they satisfy the completeness constraint $\sum_{i=1}^n P_i=I_{d}$.
The target for quantum detector tomography (QDT) is to identify the unknown POVM elements $\{P_i\}_{i=1}^n $ using known probe states $ \{\rho_j\}_{j=1}^{{M}} $ where ${M} $ is the type number of different probe states. We assume that these probe states are informationally complete and thus $ M\geq d^2 $.
Ref. \cite{wang2019twostage} proposed an analytical two-stage algorithm to solve QDT where  the computational complexity is $ O(nd^{2}M) $, and the MSE scales as
\begin{equation}\label{pmse}
	\mathbb{E}\left(\sum_{i=1}^{n}\left\|\hat{\mathcal{P}}_i-P_i\right\|^2\right)=O\left(\frac{1}{N}\right),
\end{equation}
where $\hat{\mathcal{P}}_i$ is the estimate of $P_i$ and $ N $ is the total resource number. For each probe state, the resource number is $ N/M $.
\subsection{Standard quantum process tomography and the two-stage solution}\label{pre3}
We firstly introduce the column-vectorization function $\operatorname{vec}: \mathbb{C}^{m\times n}\mapsto \mathbb{C}^{mn}$. For a matrix $A_{m\times n}$, 
\begin{equation}\label{vec}
	\begin{aligned}
		\operatorname{vec}(A_{m\times n})\triangleq&[(A)_{11},(A)_{21},\cdots,(A)_{m1},(A)_{12},\cdots,(A)_{m2},\cdots,(A)_{1n},\cdots,(A)_{mn}]^T.
	\end{aligned}
\end{equation}
We also define $\text{vec}^{-1}(\cdot)$ which maps a $d^2\times 1$ vector into a $d\times d$ square matrix. The common properties of $ \text{vec}(\cdot) $ are listed as follows \cite{horn_johnson_2012,watrous2018theory}:
\begin{equation}\label{property2}
	\text{vec}(ABC)=(C^T\otimes A)\text{vec}(B),
\end{equation}
\begin{equation}\label{property4}
	\text{Tr}_1(\text{vec}(A)\text{vec}(B)^{\dagger})=AB^{\dagger}.
\end{equation}

For a $ d $-dimensional quantum system, its dynamics can be described by a completely-positive (CP) linear map
$ \mathcal{E} $ and quantum process tomography (QPT) aims to identify the unknown $ \mathcal{E} $.
If we input a quantum state $\rho^{ \text{in}}\in \mathbb{C}^{d \times d}$, using the
Kraus operator-sum representation \cite{qci}, the output state $\rho^{ \text{out}}$ is given by 
\begin{equation}\label{Kraus}
	\rho^{ \text{out }}=\mathcal{E}\left(\rho^{ \text{in }}\right)=\sum_{i=1}^{d^2} \mathcal A_{i} \rho^{\text{in }}\mathcal  A_{i}^{\dagger},
\end{equation}
where  $\mathcal A_{i} \in \mathbb{C}^{d \times d}$ and they satisfy
\begin{equation}\label{aleq}
	\sum_{i=1}^{d^2}\mathcal  A_{i}^{\dagger}\mathcal  A_{i}\leq I_{d}.
\end{equation}
Choosing $\{E_i\}_{i=1}^{d^2}$  as the natural basis  $\{|j\rangle\langle k|\}_{1\leq j,k\leq d}$, where $ i=(j-1)d+k $ \cite{qci,8022944}, we expand  $\left\{\mathcal A_i\right\}_{i=1}^{d^2}$ as
\begin{equation}
	\mathcal{A}_i=\sum_{j=1}^{d^2} c_{{ij}} E_j.
\end{equation}
We define the matrix $ (C)_{ij}=c_{{ij}} $ and the matrix $ X $ as $ X\triangleq C^T C^* $, which is called the process matrix \cite{qci,8022944}. Then $ X \in \mathbb{C}^{d^2\times d^2}$  is in a one-to-one correspondence with $ \mathcal{E} $. In addition, it satisfies $X=X^{\dagger}, X\geq0, \operatorname{Tr}_{1}\left(X\right)\leq I_d $. When the equality in Eq.~\eqref{aleq} holds, we have $ \operatorname{Tr}_{1}\left(X\right)= I_d $ in the natural basis \cite{qci,8022944} and the process $ \mathcal{E}  $ or $ X $ is trace-preserving. Otherwise, the process is non-trace-preserving.

The target for standard QPT is to identify the unknown process matrix $X $ using the known input states $ \{\rho_m^{\text{in}}\}_{m=1}^{ M } $ and the measurement operators $ \{P_l\}_{l=1}^{ L } $ where $ M $ and $  L $ are the type numbers of different input states and measurement operators, respectively. Standard QPT also assumes that the input states and measurement operators are both informationally complete and thus $  L\geq d^2, M\geq d^2 $. Ref. \cite{xiaoqpt} formulated the QPT into the following optimization problem.
\begin{problem}\label{problem2}
	Given the parameterization matrix $V$ of all the input states, the permutation matrix $ R $ and the reconstructed parameterization matrix of all the output states $\hat Y$, find a Hermitian and positive semidefinite estimated process matrix $\hat{\mathbb{X}}$ minimizing
	$$\left\|\hat{\mathbb{X}}-\operatorname{vec}^{-1}\left(R^{T}\left(I_{d^2} \otimes\left(V^{*} V^{T}\right)^{-1} V^{*}\right) \operatorname{vec}(\hat{Y})\right)\right\|,$$ 
	such that $\operatorname{Tr}_1(\hat{\mathbb{X}})\leq I_d$.
\end{problem}

Ref. \cite{xiaoqpt} proposed a  two-stage solution algorithm to solve Problem \ref{problem2}.   Let $$\hat D\triangleq\operatorname{vec}^{-1}\left(R^{T}\left(I_{d^2} \otimes\left(V^{*} V^{T}\right)^{-1} V^{*}\right) \operatorname{vec}(\hat{A})\right)$$ be a given matrix.  Stage-1 finds a Hermitian and positive semidefinite $d^2\times d^2$ matrix $\hat G$ minimizing $||\hat G-\hat D||$ by performing the spectral decomposition $\frac{\hat D+\hat D^{\dagger}}{2}=U\hat{K}U^{\dagger}  $  where $\hat{K}=\operatorname{diag}\left(k_{1}, \cdots, k_{d^{2}}\right)$ is a  diagonal matrix.
The unique optimal solution is $ \hat G=U\operatorname{diag}(z)U^{\dagger}  $, where
\begin{equation}
	z_{i}= \begin{cases}
		k_{i}, & k_{i} \geq 0, \\ 
		0, & k_{i}<0.
	\end{cases}
\end{equation}

Then, in Stage-2, define $ \hat{Q}\triangleq\operatorname{Tr}_{1}(\hat{G}) $. For trace-preserving processes, one can assume that  $\hat Q >0  $ because $ \hat Q $  converges to $ I_{d} $ as $ N $ tends to infinity. Thus, the final estimate is 
\begin{equation}\label{tss21}
	\hat{\mathbb{X}}=(I_{d} \otimes \hat{Q}^{-1/2}) \hat{G}(I_{d} \otimes \hat{Q}^{-1/2})^{\dagger}.
\end{equation}
For non-trace-preserving processes, let the spectral decomposition of $ \hat Q $ be
\begin{equation}\label{hatf}
	\hat{Q}=\hat W \operatorname{diag}\left(\hat f_{1}, \cdots, \hat f_{d}\right) \hat W^{\dagger},
\end{equation}
where $ \hat f_{1} \geq\cdots \geq\hat f_{c}>0$ and $ \hat f_{c+1}= \cdots =\hat f_{d}=0 $, i.e., the rank of $ \hat Q $ is $ \operatorname{Rank}(\hat Q)=c $. 
Then define
\begin{equation}\label{barf}
	\bar{Q}\triangleq \hat W \operatorname{diag}\left(\bar f_{1}, \cdots, \bar f_{d}\right) \hat W^{\dagger},
\end{equation}
where $ \bar f_{i}=\hat f_{i} $ for $ 1\leq i \leq c $, $ \bar f_{i}=\frac{\hat f_{c}}{N} $ for $ c+1\leq i \leq d$, and $ N $ is the number of copies. Since $ \bar{Q} $ is invertible, $ \bar{Q}^{-1/2} $ is well defined and we also define
\begin{equation}
	\tilde{Q}\triangleq \hat W \operatorname{diag}\left(\tilde{f}_{1}, \cdots, \tilde{f}_{d}\right) \hat W^{\dagger},
\end{equation}
where $ \tilde{f}_{i}=\min\left({\bar f}_{i},1\right) $ for $ 1\leq i\leq d $. Thus, $ \tilde{f}_{i} \leq 1 $ for $ 1\leq i \leq d $.
The final estimate is 
\begin{equation}\label{tss22}
	\hat{\mathbb{X}}=(I_{d} \otimes  \tilde Q^{1/2}\bar Q^{-1/2}) \hat{G}(I_{d} \otimes  \tilde  Q^{1/2}\bar Q^{-1/2})^{\dagger},
\end{equation}
which satisfies $\hat{\mathbb{X}} \geq 0$ and $\operatorname{Tr}_{1}(\hat{\mathbb{X}}) \leq I_{d}$.
The detailed analysis can be found in \cite{xiaoqpt}. 

Stage-2 in the two-stage solution will be utilized as an important correction method in this work and
Ref. \cite{xiaoqpt} proved that
\begin{equation}\label{stage2}
	\mathbb{E}\left\|\hat{G}-X\right\|^2=O\left(\frac{1}{N}\right),
\end{equation}
\begin{equation}\label{stage22}
	\mathbb{E}\left\|\hat{\mathbb{X}}-\hat G\right\|^2=O\left(\frac{1}{N}\right).
\end{equation}
In addition,
Ref. \cite{xiaoqpt} also proved that the computational complexity of the two-stage solution is $ O(MLd^2) $ and the MSE scales as 
\begin{equation}\label{processmse}
	\mathbb{E}\left\|\hat{\mathbb{X}}-X\right\|^2=O\left(\frac{1}{N}\right),
\end{equation}
where $ N $ is the total number of copies in SQPT.

\subsection{Ancilla-assisted quantum process tomography and the two-stage solution} 
\begin{figure}
	\centering
	\includegraphics[width=3.5in]{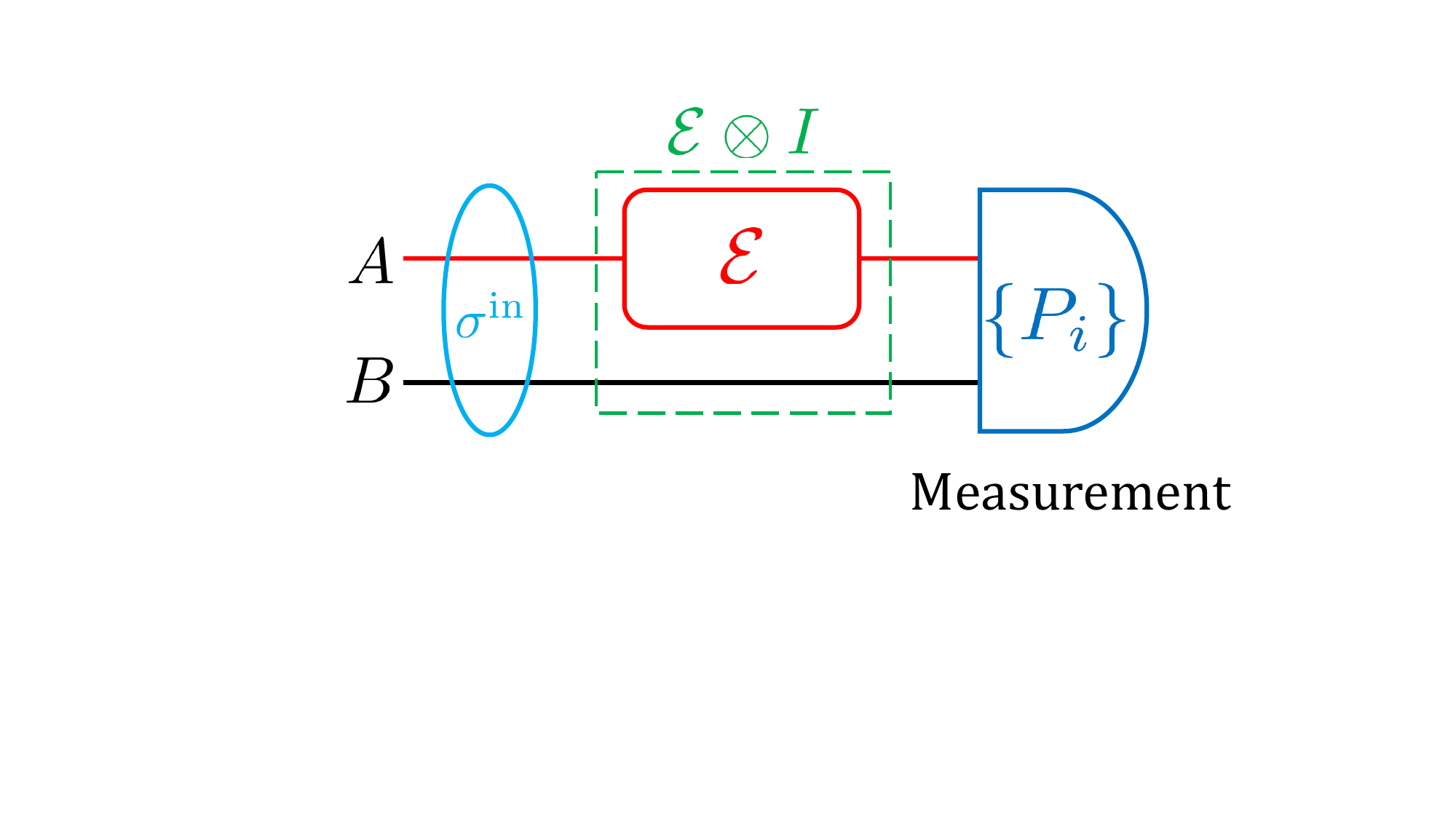}
	\centering{\caption{Schematic diagram of ancilla-assisted quantum process tomography (AAPT). The measurement operators $ \{P_i\} $ can be separable operators or entangled operators.}\label{AAPT}}
\end{figure}
Another framework for QPT is ancilla-assisted quantum process tomography (AAPT). In AAPT, an auxiliary system (ancilla) $B$ is attached to the principal system $ A$, and the input states and the measurements on the outputs are both on the extended Hilbert space as shown in Fig. \ref{AAPT}. The Hilbert space dimension of $ B $ is not smaller than that of $A$, i.e., $ d_B \geq d_A $ \cite{aaqpt}.
We firstly review the AAPT procedure  in \cite{aaqpt}.
For the input state $ \sigma^{\text{in}} $, its operator--Schmidt decomposition \cite{PhysRevA.67.052301} is
\begin{equation}\label{instate}
	\sigma^{\text{in}}=\sum_{l=1}^{d_{A}^{2}} s_l A_l \otimes B_l,
\end{equation}
where $s_l$ are non-negative real numbers and the sets $\left\{A_l\right\}$ and $\left\{B_l\right\}$ form orthonormal operator bases for systems $A$ and $B$, respectively \cite{aaqpt}. The number of nonzero terms $ s_l $ in the Schmidt decomposition is defined as 
the Schmidt number of the input state $\operatorname{Sch}(\sigma^{\text{in}})$. AAPT requires $\operatorname{Sch}(\sigma^{\text{in}})=d_{A}^2$, i.e., $ s_l>0$, $\forall l$ that satisfy $1\leq l \leq d_A^2 $, where $ \sigma^{\text{in}} $ can be separable or entangled \cite{aaqpt}.  Almost all of the states  on the combined space $A B$ can be used for AAPT because  the set of states with the Schmidt number less than $d_A^2$ is of zero measure \cite{qptre}. However, a pure $\sigma^{\text{in}}$ with $\operatorname{Sch}(\sigma^{\text{in}})=d^2  $ is necessarily entangled, because an input pure state is separable if and only if  $\operatorname{Sch}(\sigma^{\text{in}})=1  $.

After the process $ \mathcal{E} $, the output state becomes
\begin{equation}
	\sigma^{\text{out}}=(\mathcal{E} \otimes I)(\sigma^{\text{in}})=\sum_{l=1}^{d_{A}^{2}} s_l\;
	\mathcal{E}\left(A_l\right) \otimes B_l.
\end{equation}
Since
\begin{equation}\label{ptraceout}
	\operatorname{Tr}_B\left[\left(I_{d_A} \otimes B_m^{\dagger}\right) \sigma^{\text{out}}\right]=\sum_{l=1}^{d_{A}^{2}} s_l\; \mathcal{E}\left(A_l\right) \operatorname{Tr}\left(B_m^{\dagger} B_l\right)=s_m \;\mathcal{E}\left(A_m\right),
\end{equation}
we have
\begin{equation}\label{ea}
	\mathcal{E}\left(A_m\right)=\operatorname{Tr}_B\left[\left(I_{d_A} \otimes B_m^{\dagger}\right) \sigma^{\text{out}}\right] /s_m.
\end{equation}
Therefore, we can obtain $ d_{A}^2 $ linearly independent input-output relationship equations. From these equations, one can design proper algorithm to reconstruct the process $\mathcal{E}$ or $X$. Ref. \cite{xiaoaapt} also applied the two-stage solution to identify the process matrix. 
The computational complexity is $O(Ld_A^2d_B^2)$, where $  L$ is the type number of different measurement operators on the output state,
and the MSE scales as
\begin{equation}\label{aptmse}
	\mathbb{E}\|\hat{\mathbb{X}}-X\|^2=O\left(\frac{1}{{N}}\right),
\end{equation}
where $ N $ is the resource number for the input state.
\subsection{Maximally entangled state}

For a quantum process $ \mathcal{E} $,
using the Choi-Jamiołkowski isomorphism \cite{PhysRevA.95.012302,CHOI1975285}, we have
\begin{equation}\label{choi}
	\rho_{\mathcal{E}}\triangleq({\mathcal{E}} \otimes {I})\left(|\Psi\rangle\langle\Psi | \right)=\frac{1}{d}\sum_{i, j=1}^{d, d} \mathcal{E}(|i\rangle\langle j|) \otimes|i\rangle\langle j|,
\end{equation}
where $ |\Psi\rangle=\sum_{j=1}^d|j\rangle \otimes|j\rangle / \sqrt{d} $ is a maximally entangled state ($ d=2 $ is a Bell state). When the process is trace-preserving, $ \rho_{\mathcal{E}} $ is a density matrix.
The relationship between $ \rho_{\mathcal{E}} $ and the process matrix $ X $ is \cite{PhysRevA.95.012302}
\begin{equation}\label{ghzx}
	X=d\times\rho_{\mathcal{E}}.
\end{equation}

Using the maximally entangled state $ |\Psi\rangle $, we give the following proposition.
\begin{proposition}\label{pro1}
	For any  $A, B, C, D\in \mathbb{C}^{d\times d}  $, we have
	\begin{equation}
		\begin{aligned}
			&\sum_{i, j=1}^{d, d} A B|i\rangle\langle j|D^{\dagger} C^{\dagger} \otimes| i\rangle\langle j| =\sum_{i, j=1}^{d, d} A|i\rangle\langle j\left|C^{\dagger} \otimes B^T\right| i\rangle\langle j| D^*.
		\end{aligned}
	\end{equation}
\end{proposition}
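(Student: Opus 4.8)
The plan is to recognize both sides of the claimed identity as the same operator "sandwich" around the (unnormalized) maximally entangled projector, and then to reduce the equality to the ricochet (transpose) property of maximally entangled states. Writing $|\Omega\rangle \triangleq \sum_{k=1}^{d} |k\rangle\otimes|k\rangle = \sqrt{d}\,|\Psi\rangle$, one has $|\Omega\rangle\langle\Omega| = \sum_{i,j=1}^{d}|i\rangle\langle j|\otimes|i\rangle\langle j|$, so the double sums in Proposition~\ref{pro1} are exactly this fixed operator pre- and post-multiplied by tensor factors. Concretely, I would first verify the two bookkeeping identities
\[
\sum_{i,j}\bigl(AB|i\rangle\langle j|D^{\dagger}C^{\dagger}\bigr)\otimes|i\rangle\langle j| = (AB\otimes I)\,|\Omega\rangle\langle\Omega|\,(D^{\dagger}C^{\dagger}\otimes I),
\]
\[
\sum_{i,j}\bigl(A|i\rangle\langle j|C^{\dagger}\bigr)\otimes\bigl(B^{T}|i\rangle\langle j|D^{*}\bigr) = (A\otimes B^{T})\,|\Omega\rangle\langle\Omega|\,(C^{\dagger}\otimes D^{*}),
\]
each of which follows immediately by pulling the left and right matrix factors through the tensor product on every term of the sum.

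The key lemma is the ricochet identity: for any $M\in\mathbb{C}^{d\times d}$,
\[
(M\otimes I)\,|\Omega\rangle = (I\otimes M^{T})\,|\Omega\rangle,
\]
which I would prove in one line by expanding both sides in the standard basis and relabeling indices, since $\sum_{k}M|k\rangle\otimes|k\rangle = \sum_{k,l}(M)_{lk}|l\rangle\otimes|k\rangle$ coincides with $\sum_{k,l}(M^{T})_{lk}|k\rangle\otimes|l\rangle$ after swapping $k\leftrightarrow l$. Applying this to the ket appearing on the left-hand side gives
\[
(AB\otimes I)|\Omega\rangle = (A\otimes I)(B\otimes I)|\Omega\rangle = (A\otimes I)(I\otimes B^{T})|\Omega\rangle = (A\otimes B^{T})|\Omega\rangle,
\]
which already converts the left tensor factors into those of the right-hand side.

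To finish, I would treat the bra side by taking the adjoint of the same ricochet identity applied with $M=CD$: from $(CD\otimes I)|\Omega\rangle = (C\otimes D^{T})|\Omega\rangle$ one obtains, upon conjugate transposition,
\[
\langle\Omega|(D^{\dagger}C^{\dagger}\otimes I) = \langle\Omega|\bigl(C^{\dagger}\otimes (D^{T})^{\dagger}\bigr) = \langle\Omega|(C^{\dagger}\otimes D^{*}),
\]
where the last equality uses $(D^{T})^{\dagger}=D^{*}$ in the paper's convention $X^{\dagger}=(X^{*})^{T}$. Substituting the two converted factors into the sandwich form of the left-hand side yields $(A\otimes B^{T})|\Omega\rangle\langle\Omega|(C^{\dagger}\otimes D^{*})$, which is precisely the sandwich form of the right-hand side, completing the proof. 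The only genuinely delicate point is this last step: keeping the transpose/conjugate bookkeeping straight on the bra side, where $D$ passes through the tensor product as $D^{T}$ on the ket but, after the adjoint, reappears as $D^{*}=(D^{T})^{\dagger}$ on the bra; everything else is routine rewriting.
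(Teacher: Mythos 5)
Your proposal is correct and follows essentially the same route as the paper: both rewrite the double sums as $(AB\otimes I)\,|\Omega\rangle\langle\Omega|\,(D^{\dagger}C^{\dagger}\otimes I)$ and $(A\otimes B^{T})\,|\Omega\rangle\langle\Omega|\,(C^{\dagger}\otimes D^{*})$ and reduce the claim to the identity $(M\otimes I)|\Omega\rangle=(I\otimes M^{T})|\Omega\rangle$ applied to $M=B$ and (via the adjoint) to $M=CD$. The only cosmetic difference is that the paper justifies this identity through the vectorization property $\operatorname{vec}(ABC)=(C^{T}\otimes A)\operatorname{vec}(B)$ with $|\Omega\rangle=\operatorname{vec}(I_d)$, while you verify it by direct index relabeling; these are equivalent.
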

\begin{proof}
	For any matrix $ B\in \mathbb{C}^{d\times d} $, using Eq.~\eqref{property2}, we have
	\begin{equation}
		\begin{aligned}
			&(B \otimes I_{d}) \sum_{j=1}^d|j\rangle \otimes|j\rangle=(B \otimes I_{d}) \operatorname{vec}\left(I_d\right)= \operatorname{vec}\left(B^T\right)=(I_{d} \otimes B^{T}) \operatorname{vec}\left(I_d\right)
			=\left(I_{d} \otimes B^T\right) \sum_{j=1}^d|j\rangle \otimes|j\rangle.
		\end{aligned}
	\end{equation}
	Therefore, for any matrix $ A\in \mathbb{C}^{d\times d} $, we have
	\begin{equation}
		\begin{aligned}
			(A B \otimes I_{d}) \sum_{j=1}^d|j\rangle \otimes|j\rangle=&(A \otimes I_{d})(B \otimes I_{d}) \sum_{j=1}^d|j\rangle \otimes|j\rangle \\
			=&(A \otimes I_{d})\left(I_{d} \otimes B^T\right) \sum_{j=1}^d|j\rangle \otimes|j\rangle\\
			=&(A \otimes B^T) \sum_{j=1}^d|j\rangle \otimes|j\rangle.
		\end{aligned}
	\end{equation}
	Similarly, for any matrices  $ C\in \mathbb{C}^{d\times d},  D\in \mathbb{C}^{d\times d} $, we have
	\begin{equation}
		\begin{aligned}
			&(CD \otimes I_{d}) \sum_{j=1}^d|j\rangle \otimes|j\rangle=\left(C \otimes D^T\right) \sum_{j=1}^d|j\rangle \otimes|j\rangle.
		\end{aligned}
	\end{equation}
	Therefore, 
	\begin{equation}
		\begin{aligned}
			\sum_{i, j=1}^{d, d} A B|i\rangle\langle j|D^{\dagger} C^{\dagger} \otimes| i\rangle\langle j| &=(A B \otimes I_{d}) \sum_{i, j=1}^{d, d}|i\rangle\langle j|\otimes| i\rangle\langle j|\left(D^{\dagger} C^{\dagger} \otimes I_{d}\right) \\
			&=\left((A B \otimes I_{d}) \sum_{i=1}^d|i\rangle \otimes|i\rangle\right)\left(\sum_{j=1}^d\langle j| \otimes\langle j|\left(D^{\dagger} C^{\dagger} \otimes I_{d}\right)\right) \\
			&=\left(\left(A \otimes B^T\right) \sum_{i=1}^d|i\rangle \otimes|i\rangle\right)\left(\sum_{j=1}^d\langle j| \otimes\langle j|\left(C^{\dagger} \otimes D^*\right)\right) \\
			&=\left(A \otimes B^T\right) \sum_{i, j=1}^{d, d}|i\rangle\langle j|\otimes| i\rangle\langle j|\left(C^{\dagger} \otimes D^*\right) \\
			&=\sum_{i, j=1}^{d, d} A|i\rangle\langle j|C^{\dagger} \otimes B^T| i\rangle\langle j| D^*.
		\end{aligned}
	\end{equation}
\end{proof}
Now, letting $ D=B$, $C=A $, we have
\begin{equation}\label{sp1}
	\begin{aligned}
		&\sum_{i, j=1}^{d, d} A B|i\rangle\langle j|B^{\dagger} A^{\dagger} \otimes| i\rangle\langle j| =\sum_{i, j=1}^{d, d} A|i\rangle\langle j|A^{\dagger} \otimes B^T| i\rangle\langle j| B^*.
	\end{aligned}
\end{equation}

\section{On the optimal scaling of infidelity}\label{Sec4}
In this section, we modify the definition of the fidelity/infidelity for two arbitrary positive semidefinite operators to avoid distortion, and propose a sufficient and necessary  condition for tomography methods to reach the optimal infidelity scaling $ O(1/N) $.
\subsection{Definition of fidelity/infidelity in quantum tomography}\label{defs}

The fidelity between two arbitrary states $ \sigma $ and $ \rho $ is defined as
\begin{equation}\label{deinf}
	F_s(\sigma,\rho)\triangleq\left[\operatorname{Tr} \sqrt{\sqrt{\sigma} \rho \sqrt{\sigma}}\right]^{2},
\end{equation}
which has three basic properties:
\begin{eqnarray}
	(i)&  F_s(\sigma,\rho)=F_s(\rho,\sigma);\\
	(ii)&  0\leq F_s(\sigma,\rho) \leq 1;\\
	(iii)&  F_s(\sigma,\rho) = 1 \ \ \Leftrightarrow \ \ \sigma=\rho. \label{property3}
\end{eqnarray}
To perform QDT and QPT, a new fidelity definition between one positive semidefinite  operator $S\in \mathbb{C}^{d\times d}$ and its estimate $ \hat S $ is needed. A natural idea is to normalize them to the form of density operators, which leads to the definition 
\begin{equation}\label{s1}
	\begin{aligned}
		F_{d,p}\left(\hat{S}, S\right)\triangleq&\left[\operatorname{Tr}\sqrt{\sqrt{\hat S} {S} \sqrt{\hat S}}\right]^{2} \Big{/}\left[\operatorname{Tr}\left(S\right) \operatorname{Tr}\left(\hat{S}\right)\right].
	\end{aligned}
\end{equation}
The definition Eq.~\eqref{s1} has been applied in both   QDT \cite{Lundeen2009,Zhang_2012} and QPT \cite{PhysRevA.82.042307}. However, Ref. \cite{xiao2021optimal} pointed out that this definition may result in a \emph{distortion} in QDT; i.e., in certain circumstances the property \eqref{property3} does not hold for $F_{d,p}(\hat S, S)$. For example, 
\begin{equation}\label{disqdt}
	\begin{aligned}
		F_{d,p}\left(\hat{P}_{1}=\frac{I}{3}, P_{1}=\frac{I}{4}\right)=1,\quad F_{d,p}\left(\hat{P}_{2}=\frac{I}{3}, P_{2}=\frac{I}{4}\right)=1, \quad F_{d,p}\left(\hat{P}_{3}=\frac{I}{3}, P_{3}=\frac{I}{2}\right)=1.
	\end{aligned}
\end{equation}
Although all the fidelities are one in Eq.~\eqref{disqdt}, the estimated and true POVM elements  are not  equal to each other. The condition that distortion exists in QDT is provided in Proposition 3 of \cite{xiao2021optimal}. 

In QPT, Ref. \cite{PhysRevA.82.042307} also normalized quantum processes to quantum states and the corresponding definition is 
\begin{equation}\label{infqpt1}
	F_{p}\left(\hat X,X \right)\triangleq\frac{\left(\operatorname{Tr}\sqrt{\sqrt{\hat{X}} X \sqrt{\hat{X}}}\right)^{2}}{\operatorname{Tr}(\hat{X}) \operatorname{Tr}(X)}.
\end{equation}
However, for a non-trace-preserving process $ X $, the property \eqref{property3} does not hold for $F_p$ in Eq.~\eqref{infqpt1}; as e.g., $F_{p}\left(X, a{X} \right)= F_{p}\left(X, X\right)=1 $, for any $ 0<a\leq 1 $. We also call this phenomenon \emph{distortion} in QPT. Note that for trace-preserving processes, distortion will not happen, considering the Choi-Jamio\l{}kowski isomorphism \cite{PhysRevA.95.012302,CHOI1975285} as in Eq.~\eqref{ghzx}.

Therefore, to avoid distortion,  we propose the following new definition for the fidelity between $\hat{S}$ and $S$:
\begin{equation}\label{s2}
	\begin{aligned}
		F_1\left(\hat{S}, S\right)\triangleq&\frac{\left(\operatorname{Tr}\sqrt{\sqrt{\hat S} {S} \sqrt{\hat S}}\right)^{2}}{ \operatorname{Tr}\left(S\right) \operatorname{Tr}\left(\hat{S}\right)}-\frac{\left(\operatorname{Tr}\left(S-\hat{S}\right)\right)^{2}}{d^{2}},
	\end{aligned}
\end{equation}
and clearly $F_1\left(\hat{S}, S\right)=1$, if and only if $S=\hat S$.  Given that we subtract a new term in Eq.~\eqref{s2}, its lower bound is no longer zero. 
For QDT, we can designate $S=P_i$
and $F_1\left(\hat{S}, S\right)$ takes values in $(\frac{1}{d}-1,1]$ \cite{xiao2021optimal}. 
When it comes to QPT, we can assign $S=X$. Since distortion does not exist in trace-preserving QPT, the nontrivial case is non-trace-preserving QPT. It is clear that  $F_{d,p}(\hat X, X)$ is in the interval $ [0,1] $ and $ \left(\operatorname{Tr}\left(X- \hat X\right)\right)^{2}<d^2 $. Thus, $ 1\geq F_{1}\left(\hat X, X\right) > -1 $. Furthermore, consider for example $ X=\operatorname{diag}(0_{d^2-d},I_{d}) $ and $\hat X=\operatorname{diag}({\tau},0,0, \cdots,0)  $, where $ \tau>0 $. Then, as $ \tau $ tends to zero, $ F_1\left(\hat X,  X\right) $ can be arbitrarily close to $ -1 $. Hence, $ -1 $ is a tight but unattainable lower bound. Here, we assume that the tight lower bound is denoted as $f\triangleq \inf F_1(\hat{S}, S)$. Hence, $f=d^{-1}-1$ in QDT and $f=-1$ in QPT.
To normalize the range of Eq.~\eqref{s2} into the interval $[0,1]$, we define 
\begin{equation}\label{s3}
	F\left(\hat{S}, S\right)\triangleq\frac{1}{1-f}F_1\left(\hat{S}, S\right)-\frac{f}{1-f},
\end{equation}
and the corresponding infidelity is defined as $1-F\left(\hat{S}, S\right)$. Therefore, 
\begin{equation}
	1-F\left(\hat{S}, S\right)=\frac{1}{1-f}\left(1-F_1\left(\hat{S}, S\right)\right).   
\end{equation}
In QST, since $\operatorname{Tr}(\rho)=\operatorname{Tr}(\hat\rho)=1$, we still maintain $ F\left(\hat{\rho}, \rho\right) =  F_s\left(\hat{\rho}, \rho\right)$.

\subsection{A sufficient  and necessary condition on the optimal scaling of infidelity}
For QST, let the spectral decompositions of the true quantum state $ \rho $ and the estimated quantum state $ \hat\rho $ be
\begin{equation}
	\begin{aligned}
		&\rho=\sum_{i=1}^{d} \lambda_{i}\left|\lambda_{i}\right\rangle\left\langle\lambda_{i}\right|, \quad \hat{\rho}=\sum_{i=1}^{d} \hat{\lambda}_{i}\left|\hat{\lambda}_{i}\right\rangle\left\langle\hat{\lambda}_{i}\right|,
	\end{aligned}
\end{equation}
where $ \lambda_{1}\geq \cdots \geq \lambda_{d}\geq 0$ and $ \hat\lambda_{1}\geq \cdots \geq \hat\lambda_{d}\geq 0$.  Assume the true rank of $ \rho $ is $ r\leq d $.  Define $ \Delta=\hat{\rho}-\rho $, and the Taylor series expansion of the infidelity up to second order is \cite{PhysRevA.98.012339}:
\begin{equation}\label{taylorinf}
	\begin{aligned}
		\mathbb{E}\left(1-F(\hat\rho, {\rho})\right)=&\mathbb{E}\left( \sum_{i=r+1}^{d}\left\langle\lambda_{i}|\Delta| \lambda_{i}\right\rangle\right)+\frac{1}{2}\mathbb{E}\left( \sum_{i, k=1}^{r} \frac{\left|\left\langle\lambda_{i}|\Delta| \lambda_{k}\right\rangle\right|^{2}}{\lambda_{i}+\lambda_{k}}\right) \\
		&-\frac{1}{4}\mathbb{E}\left(\left[\sum_{i=r+1}^{d}\left\langle\lambda_{i}|\Delta| \lambda_{i}\right\rangle\right]^{2}\right)+O\left(\mathbb{E}\|\Delta\|^{3}\right).
	\end{aligned}
\end{equation}
Utilizing the quantum Cram\'{e}r-Rao bound, the optimal  MSE scaling in QST, QDT, and QPT has been established as 
$ O(1/N) $ \cite{Liu_2020}. Furthermore, in QST, the optimal scaling of the infidelity has been rigorously shown to be $ O(1/N) $ \cite{7956181}. When the true state is full-rank, the first-order term $ \sum_{i=r+1}^{d}\left\langle \lambda_{i}|\Delta| \lambda_{i} \right\rangle $ vanishes, causing Eq.~\eqref{taylorinf} to be asymptotically dominated by the second-order term. If further the MSE of a given estimation algorithm (e.g., MLE or LRE) achieves $ O(1/N) $ scaling, i.e., $ \mathbb{E}\left\|\Delta\right\|^2 = O(1/N) $, the infidelity then reaches the optimal scaling $ O(1/N) $ \cite{PhysRevA.98.012339,7956181,PhysRevLett.111.183601}. However, when the quantum state $\rho$ is rank-deficient, the first-order term dominates, resulting in an infidelity scaling of $O(1/\sqrt{N})$ in general non-adaptive scenarios. Hence, rank-deficient states represent cases where there is room for improvement in the infidelity scaling. For QPT and QDT, the worst-case and optimal  $1-F_{d,p}$ scalings are obviously still $O(1/\sqrt{N})$ and $O(1/N)$, respectively. As for $F$, the added term in Eq.~\eqref{s2} is $ -\left[\operatorname{Tr}\left(P_{i}-\hat{P}_{i}\right)\right]^{2} / d^{2} $ or $-\left[\operatorname{Tr}\left(X- \hat X\right)\right]^{2}/d^2  $. The scalings of these terms are also $ O\left(1/N\right) $ when $ \mathbb{E}\left\|\Delta\right\|^2=O(1/N)  $, which thus  do not affect the worst-case scalings and optimal scalings. Therefore, for $\mathbb{E}\left(1-F(\hat{\rho},\rho)\right)  $, $ \mathbb{E}\left(1-F\left(\hat{P}_{i}, P_{i}\right)\right) $ and $\mathbb{E}\left( 1-F\left(\hat X, X\right)\right) $, the worst-case scalings are all $ O(1/\sqrt{N}) $, and the optimal scalings are all $ O(1/N) $.
We thus focus on the rank-deficient scenario in this work and investigate how to improve the general $O(1/\sqrt{N})$ scaling to the optimal result $O(1/N)$. Moreover, under the mild assumption of $O(1/N)$ scaling for MSE, from Lemma \ref{lemma3} we know the added term  $ -\left[\operatorname{Tr}\left(S-\hat S\right)\right]^{2} / d^{2} $ in Eq.~\eqref{s2} scales as $O(1/N)$, and hence $1-F$ maintains the infidelity scaling of $1-F_{s,d,p}$ up to the optimal scaling. Hence, in this paper we focus on studying the index $1-F$, and its scaling behavior can effectively reflect those of $1-F_{s,d,p}$.

Here, we introduce the following proposition.
\begin{proposition}\cite{yu2015useful}\label{theoremyu}
	Let $\Sigma, \hat{\Sigma} \in \mathbb{R}^{p \times p}$ be symmetric, with eigenvalues $\lambda_1 \geq \ldots \geq \lambda_p$ and $\hat{\lambda}_1 \geq$ $\ldots \geq \hat{\lambda}_p$, respectively. Fix $1 \leq r \leq s \leq p$ and assume that $\min \left(\lambda_{r-1}-\lambda_r, \lambda_s-\lambda_{s+1}\right)>0$, where $\lambda_0:=\infty$ and $\lambda_{p+1}:=-\infty$. Let $d:=s-r+1$, and let $V=\left(v_r, v_{r+1}, \ldots, v_s\right) \in \mathbb{R}^{p \times d}$ and $\hat{V}=\left(\hat{v}_r, \hat{v}_{r+1}, \ldots, \hat{v}_s\right) \in \mathbb{R}^{p \times d}$ have orthonormal columns satisfying $\Sigma v_j=\lambda_j v_j$ and $\hat{\Sigma} \hat{v}_j=\hat{\lambda}_j \hat{v}_j$, for $j=r, r+1, \ldots, s$. Let $\|\cdot\|_{\mathrm{op}} $ be the operator norm. Then
	\begin{equation}\label{yu1}
		\|\sin \Theta(\hat{V}, V)\| \leq \frac{2 \min \left(d^{1 / 2}\|\hat{\Sigma}-\Sigma\|_{\mathrm{op}},\|\hat{\Sigma}-\Sigma\|\right)}{\min \left(\lambda_{r-1}-\lambda_r, \lambda_s-\lambda_{s+1}\right)}.
	\end{equation}
\end{proposition}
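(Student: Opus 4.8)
The plan is to read this as a perturbation bound for invariant subspaces, i.e.\ a Sylvester-equation form of the Davis--Kahan $\sin\Theta$ theorem, whose one distinctive feature is that the denominator is the \emph{population} gap $\delta:=\min(\lambda_{r-1}-\lambda_r,\lambda_s-\lambda_{s+1})$ rather than any quantity involving the perturbed matrix $\hat\Sigma$. Write $\Delta:=\hat\Sigma-\Sigma$. First I would invoke the geometric fact that if $\hat V_\perp\in\mathbb{R}^{p\times(p-d)}$ is an orthonormal basis for the orthogonal complement of the column space of $\hat V$, then the singular values of $\hat V_\perp^T V$ are exactly the sines of the principal angles between the two subspaces, so that $\|\sin\Theta(\hat V,V)\|=\|\hat V_\perp^T V\|$. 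This reduces the statement to bounding the Frobenius norm of the single matrix $M:=\hat V_\perp^T V$.

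The second step is to derive a Sylvester equation for $M$. Since $\Sigma V=V\Lambda_0$ with $\Lambda_0=\operatorname{diag}(\lambda_r,\ldots,\lambda_s)$, and $\hat\Sigma\hat V_\perp=\hat V_\perp\hat\Lambda_\perp$ with $\hat\Lambda_\perp$ the diagonal matrix of the complementary eigenvalues $\{\hat\lambda_i:i\notin\{r,\ldots,s\}\}$, subtracting $\hat V_\perp^T\Sigma V=M\Lambda_0$ from $\hat V_\perp^T\hat\Sigma V=\hat\Lambda_\perp M$ gives
\begin{equation*}
\hat\Lambda_\perp M-M\Lambda_0=\hat V_\perp^T\Delta V.
\end{equation*}
Passing to the eigenbasis, each entry obeys $M_{ij}(\hat\mu_i-\lambda_{0,j})=(\hat V_\perp^T\Delta V)_{ij}$, where $\hat\mu_i$ ranges over the complementary eigenvalues of $\hat\Sigma$ and $\lambda_{0,j}\in[\lambda_s,\lambda_r]$. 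Dividing and summing gives $\|M\|\le\|\hat V_\perp^T\Delta V\|/g$ with $g:=\min_{i,j}|\hat\mu_i-\lambda_{0,j}|$ the spectral separation, and I would record the two elementary estimates $\|\hat V_\perp^T\Delta V\|\le\|\Delta\|$ and $\|\hat V_\perp^T\Delta V\|\le d^{1/2}\|\Delta\|_{\mathrm{op}}$ (the latter because $V$ has $d$ orthonormal columns, so $\|\hat V_\perp^T\Delta V\|\le\|\Delta\|_{\mathrm{op}}\|V\|=d^{1/2}\|\Delta\|_{\mathrm{op}}$).

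The crux is to lower-bound $g$ by the population gap via Weyl's inequality, Eq.~\eqref{weyl} of Lemma~\ref{lemma3}, which gives $|\hat\lambda_k-\lambda_k|\le\|\Delta\|_{\mathrm{op}}$ for every $k$. For $i<r$ one then has $\hat\mu_i\ge\lambda_{r-1}-\|\Delta\|_{\mathrm{op}}\ge\lambda_r+\delta-\|\Delta\|_{\mathrm{op}}$ while $\lambda_{0,j}\le\lambda_r$, and for $i>s$ one has $\hat\mu_i\le\lambda_{s+1}+\|\Delta\|_{\mathrm{op}}\le\lambda_s-\delta+\|\Delta\|_{\mathrm{op}}$ while $\lambda_{0,j}\ge\lambda_s$; in both cases $g\ge\delta-\|\Delta\|_{\mathrm{op}}$. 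Hence whenever $\|\Delta\|_{\mathrm{op}}\le\delta/2$,
\begin{equation*}
\|\sin\Theta(\hat V,V)\|\le\frac{\min\!\big(d^{1/2}\|\Delta\|_{\mathrm{op}},\|\Delta\|\big)}{\delta-\|\Delta\|_{\mathrm{op}}}\le\frac{2\min\!\big(d^{1/2}\|\Delta\|_{\mathrm{op}},\|\Delta\|\big)}{\delta},
\end{equation*}
which is exactly the claim (both branches of the minimum). In the complementary regime $\|\Delta\|_{\mathrm{op}}>\delta/2$ the operator-norm branch is immediate, since all $d$ singular values of $\sin\Theta$ lie in $[0,1]$, so $\|\sin\Theta\|\le d^{1/2}<2d^{1/2}\|\Delta\|_{\mathrm{op}}/\delta$; the Frobenius branch $2\|\Delta\|/\delta$ in this regime I would treat separately.

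I expect the main obstacle to be precisely this separation estimate together with the two-regime bookkeeping. The natural Davis--Kahan denominator is the gap between $\Sigma$'s target eigenvalues and $\hat\Sigma$'s complementary eigenvalues, which is not directly controllable; converting it to the fixed population gap $\delta$ forces the Weyl step and hence the split at $\|\Delta\|_{\mathrm{op}}=\delta/2$, from which the factor $2$ originates. The remaining delicate point is to confirm that the Frobenius branch $2\|\Delta\|/\delta$ survives in the large-perturbation regime $\|\Delta\|_{\mathrm{op}}>\delta/2$: there the crude $d^{1/2}$ bound on $\|\sin\Theta\|$ must be reconciled with the rank-adaptivity of the rotation (a small $\|\Delta\|$ can only move the subspace within a correspondingly low-rank direction), so that the minimum bound is attained without losing a dimensional factor.
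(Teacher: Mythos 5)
You are proving a result that the paper itself does not prove: Proposition~\ref{theoremyu} is imported verbatim from Yu, Wang and Samworth, and the paper's only ``proof'' is the remark that it can be found in \cite{yu2015useful} and extends to $\mathbb{C}^{p\times p}$. So the relevant comparison is with the source, and your strategy is indeed the one used there: reduce to $\|\hat V_\perp^T V\|$, derive the Sylvester equation $\hat\Lambda_\perp M - M\Lambda_0 = \hat V_\perp^T\Delta V$, bound the numerator by $\min\bigl(d^{1/2}\|\Delta\|_{\mathrm{op}},\|\Delta\|\bigr)$, use Weyl's inequality to replace the mixed gap $\min_{i,j}|\hat\mu_i-\lambda_{0,j}|$ by $\delta-\|\Delta\|_{\mathrm{op}}$, and split at $\|\Delta\|_{\mathrm{op}}=\delta/2$. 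All of those steps are correct as you state them.

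The proof is nevertheless incomplete, and the incompleteness is exactly where you flag it: the Frobenius branch in the regime $\|\Delta\|_{\mathrm{op}}>\delta/2$ is deferred and never supplied. Since the claimed inequality bounds $\|\sin\Theta\|$ by a \emph{minimum} of two quantities, both branches must be established in every regime; the fallback $\|\sin\Theta(\hat V,V)\|\leq d^{1/2}$ only yields $2\|\Delta\|/\delta$ when $\|\Delta\|\geq d^{1/2}\delta/2$, so the claim remains unproven whenever $\|\Delta\|_{\mathrm{op}}>\delta/2$ but $\|\Delta\|<d^{1/2}\delta/2$ --- a nonempty region for every $d\geq 2$ (take $\Delta$ of low rank, so that $\|\Delta\|\approx\|\Delta\|_{\mathrm{op}}$ is just above $\delta/2$). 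This case cannot be dismissed as bookkeeping: a per-column argument fails there (a single $v_j$ with $\Delta v_j=0$ can still end up orthogonal to $\mathrm{span}(\hat V)$ if the perturbation reorders eigenvalues elsewhere), so one genuinely needs an aggregate argument showing that rotating the subspace in $k$ effective directions costs Frobenius mass of order $k^{1/2}\delta$; this is the additional observation the cited reference uses to close the case, and your sketch does not contain it or a substitute. As written, what you have actually established is the unconditional bound $\|\sin\Theta(\hat V,V)\|\leq 2d^{1/2}\|\Delta\|_{\mathrm{op}}/\delta$ together with the full statement only under the extra hypothesis $\|\Delta\|_{\mathrm{op}}\leq\delta/2$ --- and removing precisely that kind of hypothesis is the entire point of this variant of the Davis--Kahan theorem. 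Either complete the large-perturbation Frobenius case or, as the paper does, simply cite \cite{yu2015useful}.
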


More specifically, Ref. \cite{yu2015useful} defines
\begin{equation}
	\hat V_1\triangleq\left[\hat v_1, \cdots, \hat v_{r-1}, \hat v_{s+1}, \cdots, \hat v_p\right],
\end{equation}
and
\begin{equation}\label{yu2}
	\|\sin \Theta(\hat{V}, V)\|\triangleq\left\|\hat{V}_1^T V\right\|,
\end{equation}
which characterizes the distance between subspaces spanned by eigenvectors.
The proof of Proposition \ref{theoremyu} can be found in \cite{yu2015useful}, which may be straightforwardly 
extended to $  \mathbb{C}^{p \times p} $. Using Proposition \ref{theoremyu},  we propose a sufficient and necessary condition on achieving  $ O(1/N) $ optimal scaling of infidelity.
\setcounter{theorem}{0}
\begin{theorem}
	For any unknown positive semidefinite operator $S \in \mathbb C^{d\times d}$ encoded in quantum tomography, denote its spectral decomposition as $$ S=\sum_{j=1}^d \lambda_j\left\lvert\lambda_j\right\rangle\left\langle\lambda_j\right\lvert$$
	where $ \lambda_1 \geq  \cdots \geq \lambda_r >0, \lambda_{r+1}=\cdots=\lambda_d=0 $. From the measurement results of $ N $ state copies, an estimate $ \hat S \geq 0$ is inferred,  with eigenvalues $ \hat{\lambda}_i $ also in non-increasing order.
	The infidelity $ \mathbb{E}(1-F(\hat S, {S}))$ scales as $O\left(1/{N}\right)$  if and only if the following conditions are both satisfied:
	\begin{enumerate}
		\item[C1:] The MSE $ \mathbb{E}\|   \hat S-{S}\|   ^2$  scales as  $ O(1/N) $;
		\item[C2:] The partial sum of the eigenvalues of  $\hat S$  scales as $\mathbb{E}\sum_{j=r+1}^{d}\hat{\lambda}_j= O(1/N)  $.
	\end{enumerate}
\end{theorem}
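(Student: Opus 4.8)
\emph{Proof plan.---}Since $1-F=\tfrac{1}{1-f}(1-F_1)$ and $F_1$ differs from the normalized fidelity only by $[\operatorname{Tr}(S-\hat S)]^2/d^2=O(\|\Delta\|^2)$ with $\Delta\triangleq\hat S-S$, it suffices to analyze $1-F_1$, whose second-order expansion in $\Delta$ has the same structure as Eq.~\eqref{taylorinf}. Writing $P_0\triangleq\sum_{j=r+1}^d|\lambda_j\rangle\langle\lambda_j|$ for the projector onto $\ker S$ and $P_{\mathrm{s}}\triangleq I-P_0$, I would isolate three pieces: the nonnegative first-order term $T_1\triangleq\mathbb{E}\operatorname{Tr}(P_0\hat S)$ (note $\operatorname{Tr}(P_0\Delta)=\operatorname{Tr}(P_0\hat S)$ since $P_0S=0$), the nonnegative second-order term $T_2\triangleq\tfrac12\mathbb{E}\sum_{i,k=1}^r|\langle\lambda_i|\Delta|\lambda_k\rangle|^2/(\lambda_i+\lambda_k)$, and a remainder $R$ gathering the $-\tfrac14$ term, the $O(\mathbb{E}\|\Delta\|^3)$ tail, and the $F$-correction. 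The whole proof then reduces to controlling $T_1,T_2,R$ by C1 and C2.

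The key bridging step links $T_1$ to the partial eigenvalue sum in C2. By Ky~Fan's minimum principle, $\sum_{j=r+1}^d\hat\lambda_j=\min_{\dim W=d-r}\operatorname{Tr}(P_W\hat S)\le\operatorname{Tr}(P_0\hat S)$. For the reverse, I expand $\operatorname{Tr}(P_0\hat S)=\sum_{j=1}^d\hat\lambda_j\|P_0|\hat\lambda_j\rangle\|^2$, bound the $j\le r$ part by $\hat\lambda_1\|P_0\hat V\|^2$ with $\hat V=[|\hat\lambda_1\rangle,\dots,|\hat\lambda_r\rangle]$, and apply Proposition~\ref{theoremyu} to the top-$r$ eigenspaces. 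The relevant gap is $\min(\lambda_0-\lambda_1,\lambda_r-\lambda_{r+1})=\lambda_r>0$, which holds irrespective of degeneracies among the nonzero eigenvalues, since only separation of the support from the kernel is needed; this is exactly why the theorem covers degenerate $S$. This yields $\|P_0\hat V\|^2=\|\sin\Theta\|^2\le 4\|\Delta\|^2/\lambda_r^2$ and hence $0\le\operatorname{Tr}(P_0\hat S)-\sum_{j=r+1}^d\hat\lambda_j\le(4\hat\lambda_1/\lambda_r^2)\|\Delta\|^2$, so $T_1$ and $\mathbb{E}\sum_{j>r}\hat\lambda_j$ agree up to $O(\mathbb{E}\|\Delta\|^2)$ (using $\hat\lambda_1\le\lambda_1+\|\Delta\|$).

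For sufficiency, C1 gives $T_2\le(4\lambda_r)^{-1}\mathbb{E}\|\Delta\|^2=O(1/N)$, the $F$-correction $=O(1/N)$, and $\mathbb{E}\|\Delta\|^3=O(N^{-3/2})$ under the standard moment behaviour of the measurement statistics; C2 together with C1 and the bridging step gives $T_1\le\mathbb{E}\sum_{j>r}\hat\lambda_j+O(\mathbb{E}\|\Delta\|^2)=O(1/N)$. The $-\tfrac14$ term is nonpositive and only improves the upper bound, so $\mathbb{E}(1-F)=T_1+T_2+R=O(1/N)$.

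For necessity I assume $\mathbb{E}(1-F)=O(1/N)$ and aim to peel off $T_1,T_2\ge0$. Granting $T_1,T_2=O(1/N)$, C2 is immediate from $\mathbb{E}\sum_{j>r}\hat\lambda_j\le T_1$, and C1 follows from the block decomposition $\|\Delta\|^2=\|P_{\mathrm{s}}\Delta P_{\mathrm{s}}\|^2+\|P_0\hat S P_0\|^2+2\,(\text{cross})$, where the support block is $\le4\lambda_1T_2$, the kernel block is $\le(\operatorname{Tr}P_0\hat S)^2=O(1/N^2)$, and the cross block is $\le\operatorname{Tr}(P_{\mathrm{s}}\hat S)\operatorname{Tr}(P_0\hat S)=O(1/N)$ by Cauchy--Schwarz for the PSD operator $\hat S$. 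The hard part is precisely the peeling: the remainder $R$ contains $-\tfrac14\mathbb{E}[(\operatorname{Tr}P_0\hat S)^2]$ and $O(\mathbb{E}\|\Delta\|^3)$, both of order $\mathbb{E}\|\Delta\|^2$ or worse, so bounding $T_1+T_2$ by $\mathbb{E}(1-F)$ is circular unless $\mathbb{E}\|\Delta\|^2$ is already under control. I would break the circularity either by an a priori consistency/crude-rate bound on $\mathbb{E}\|\Delta\|^2$ that is then bootstrapped, or---more cleanly---by replacing the truncated expansion with exact operator inequalities lower-bounding $1-F_1$ by fixed multiples of $\operatorname{Tr}(P_0\hat S)$ and $\|P_{\mathrm{s}}\Delta P_{\mathrm{s}}\|^2$, making the argument independent of any expansion validity. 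Establishing such a remainder-free lower bound is the main obstacle.
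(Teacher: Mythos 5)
Your sufficiency argument is essentially the paper's: the bridge between $\operatorname{Tr}(P_0\hat S)$ and $\sum_{j>r}\hat\lambda_j$ via Proposition~\ref{theoremyu} applied to the support/kernel split (with gap $\lambda_r$, hence degeneracy-robust) is exactly how the paper controls the first-order term in Eq.~\eqref{taylorinf}, written there as $\mathbb{E}\operatorname{Tr}(\hat\rho W)=\sum_{j\le r}\hat\lambda_j\langle\hat\lambda_j|W|\hat\lambda_j\rangle+\sum_{j>r}\hat\lambda_j\langle\hat\lambda_j|W|\hat\lambda_j\rangle$ with $W=U_2U_2^\dagger$. That half is fine.

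The necessity direction, however, contains a genuine gap which you yourself flag as ``the main obstacle'' without resolving it: you need an expansion-free lower bound on the infidelity to obtain C1 before the Taylor expansion can be trusted, and you only sketch two possible strategies without carrying either out. The paper closes exactly this hole with the Fuchs--van de Graaf inequalities, Eq.~\eqref{tr2}: the exact (non-perturbative) bound $\tfrac{1}{2}\|\hat\rho-\rho\|_{\mathrm{tr}}\le\sqrt{1-F(\hat\rho,\rho)}$ gives $\mathbb{E}\|\hat\rho-\rho\|_{\mathrm{tr}}^2=O(1/N)$ directly from the infidelity hypothesis, and the norm comparison $\|\hat\rho-\rho\|^2\le\|\hat\rho-\rho\|_{\mathrm{tr}}^2$ then yields C1 with no circularity. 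Only after C1 is in hand does the paper return to the expansion (whose remainder is now controlled, since $\|\Delta\|$ is bounded so $\mathbb{E}\|\Delta\|^3=O(\mathbb{E}\|\Delta\|^2)$) and peel off the non-negative term $\sum_{j>r}\hat\lambda_j\langle\hat\lambda_j|W|\hat\lambda_j\rangle$ to extract C2, using $\mathbb{E}\langle\hat\lambda_j|W|\hat\lambda_j\rangle=1+O(1/N)$ for $j>r$. Your proposal is therefore incomplete as written: the ``remainder-free lower bound'' you seek is precisely Fuchs--van de Graaf, and without it (or an equivalent) the necessity proof does not go through. A secondary omission: for general positive semidefinite $S$ (QDT/QPT) the paper must also pass between $\hat S$ and its normalization $\hat S/\operatorname{Tr}(\hat S)$, using the non-negativity of both $1-F_{d,p}$ and the added term $[\operatorname{Tr}(S-\hat S)]^2/d^2$ to control $\operatorname{Tr}(\hat S)-\operatorname{Tr}(S)$ separately (Eqs.~\eqref{qdt1}--\eqref{qdt4}); your sketch does not address this step.
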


\begin{proof}
	Firstly, we consider a quantum state which is rank-deficient, i.e., $S=\rho$ and $ r<d$. We start from the sufficiency of C1 and C2. Note that since $\hat{S}\geq 0$, the condition C2 is equivalent to that the estimated eigenvalues scale as $ \mathbb{E}(\hat{\lambda}_i)= O(1/N)  $ for $ r+1 \leq j \leq d $.
	Define
	\begin{equation*}
		\begin{aligned}
			U_1&\triangleq\left[|\lambda_{1}\rangle, \cdots,|\lambda_r\rangle\right],U_2\triangleq\left[|\lambda_{r+1}\rangle, \cdots,|\lambda_d\rangle\right],\\
			\hat U_1&\triangleq[|\hat\lambda_{1}\rangle, \cdots,|\hat\lambda_r\rangle],\hat U_2\triangleq[|\hat\lambda_{r+1}\rangle, \cdots,|\hat\lambda_d\rangle],
		\end{aligned}
	\end{equation*}
	and $ W\triangleq U_{2}U_{2}^{\dagger} \geq 0 $. Using Proposition \ref{theoremyu}, we have
	\begin{equation}\label{u1}
		\begin{aligned}
			\|\hat U_2^{\dagger}U_1\|&=\|\sin \Theta(\hat{U}_1, U_1)\| \leq \frac{2 \min \left(d^{1 / 2}\|\hat{\rho}-\rho\|_{\mathrm{op}},\|\hat{\rho}-\rho\|\right)}{ \lambda_{r}},
		\end{aligned}
	\end{equation}
	and
	\begin{equation}\label{u2}
		\begin{aligned}
			\|\hat U_1^{\dagger}U_2\|&=\|\sin \Theta(\hat{U}_2, U_2)\| \leq \frac{2 \min \left(d^{1 / 2}\|\hat{\rho}-\rho\|_{\mathrm{op}},\|\hat{\rho}-\rho\|\right)}{ \lambda_{r}},
		\end{aligned}
	\end{equation}
	where $\|\sin \Theta(\hat{U}_1, U_1)\|$ and $\|\sin \Theta(\hat{U}_2, U_2)\| $ are defined as in Eq.~\eqref{yu2} characterizing the distance between subspaces spanned by the eigenvectors \cite{yu2015useful}.
	If the MSE $ \mathbb{E}\|\hat\rho-{\rho}\|^2=O\left(1/N\right) $, using Eqs.~\eqref{u1} and \eqref{u2}, we have $\mathbb{E}\|\hat U_2^{\dagger}U_1\|^2=O\left(1/N\right)$,  $\mathbb{E}\|\hat U_1^{\dagger}U_2\|^2=O\left(1/N\right)$ and thus
	\begin{equation}\label{orthlam}
		\begin{aligned}
			&\mathbb{E}\left|\left\langle\lambda_{i} \mid \hat{\lambda}_{j}\right\rangle\right|^{2}=O(1 / N), \quad \forall i,j \text{ that satisfy } 1\leq i \leq r,\; r+1\leq j \leq d,\\
			&\mathbb{E}\left|\left\langle\lambda_{i} \mid \hat{\lambda}_{j}\right\rangle\right|^{2}=O(1 / N), \quad \forall i,j \text{ that satisfy } r+1\leq i \leq d,\; 1\leq j \leq r.
		\end{aligned}
	\end{equation}
	Therefore, 
	\begin{equation}\label{w1}
		\mathbb{E}\left\langle\hat{\lambda}_j|W| \hat{\lambda}_j\right\rangle=\mathbb{E}\left\langle\hat{\lambda}_j|U_{2}U_{2}^{\dagger}| \hat{\lambda}_j\right\rangle=O\left(\frac{1}{N}\right)
	\end{equation}
	for $ 1\leq j \leq r $. Moreover,
	\begin{equation}
		\mathbb{E}\left\langle\hat{\lambda}_j|U_{1}U_{1}^{\dagger}| \hat{\lambda}_j\right\rangle=\mathbb{E}\left\langle\hat{\lambda}_j|I_{d}-W| \hat{\lambda}_j\right\rangle=1+O\left(\frac{1}{N}\right)
	\end{equation}
	for $ 1\leq j \leq r $.
	Similarly,
	\begin{equation}\label{w2}
		\mathbb{E}\left\langle\hat{\lambda}_j|U_2 U_{2}^{\dagger}| \hat{\lambda}_j\right\rangle=\mathbb{E}\left\langle\hat{\lambda}_j|W| \hat{\lambda}_j\right\rangle= 1+O\left(\frac{1}{N}\right)
	\end{equation}
	for $r+1 \leq j \leq d$.

	Since the MSE scales as $ O(1/N) $, using Lemma \ref{lemma3}, we have $ \mathbb{E}\left(\hat\lambda_j-\lambda_j\right)^2=O(1/N) $ for $ 1\leq j \leq r $. For $ r+1\leq j\leq d $, if  $ \mathbb{E}(\hat{\lambda}_j)=O(1/N)  $, the first-order term in Eq.~\eqref{taylorinf} scales as
	\begin{equation}\label{firstterm}
		\begin{aligned}
			\left|\mathbb{E}\sum_{i=r+1}^d\left\langle\lambda_i|\Delta| \lambda_i\right\rangle\right|=&\left|\mathbb{E}\sum_{i=r+1}^d\left\langle\lambda_i|\hat{\rho}| \lambda_i\right\rangle-\mathbb{E}\sum_{i=r+1}^d\left\langle\lambda_i|\rho| \lambda_i\right\rangle\right| \\
			=& \mathbb{E}\operatorname{Tr}(\hat{\rho} W)\\
			=&\mathbb{E}\sum_{j=1}^r \hat{\lambda}_j\left\langle\hat{\lambda}_j|W| \hat{\lambda}_j\right\rangle+\mathbb{E}\sum_{j=r+1}^d \hat{\lambda}_j\left\langle\hat{\lambda}_j|W| \hat{\lambda}_j\right\rangle \\
			\leq&\mathbb{E}\sum_{j=1}^r \left\langle\hat{\lambda}_j|W| \hat{\lambda}_j\right\rangle+\mathbb{E}\sum_{j=r+1}^d \hat{\lambda}_j= \sum_{j=1}^r O\left(\frac{1}{N}\right)+\sum_{j=r+1}^d O\left(\frac{1}{N}\right) \\
			=& O\left(\frac{1}{N}\right).
		\end{aligned}
	\end{equation}
	Therefore, the first-order term also scales as $ O(1/N) $ and thus the infidelity $\mathbb{E}\left(1-F(\hat{\rho},\rho)\right)  $ has the optimal scaling $ O(1/N) $. 
	
	Now we come to the necessity of C1 and C2.
	From \cite{761271}, the Fuchs–van de Graaf inequalities are
	\begin{equation}\label{tr2}
		1-\sqrt{F(\hat\rho, \rho)} \leq \frac{1}{2}\|\hat\rho-\rho\|_{\text{tr}}\leq \sqrt{1-F(\hat\rho, \rho)}
	\end{equation}
	where $ \|\cdot\|_{\text{tr}} $ is the  trace norm.
	If $\mathbb{E}\left( 1-F(\hat\rho, \rho)\right)=O(1/N) $, we have $\mathbb{E}\|\hat\rho-\rho\|_{\text{tr}}^2 = O(1/N) $. From \cite{horn_johnson_2012},
	since 
	\begin{equation}\label{tr1}
		\frac{1}{r}\|\hat{\rho}-\rho\|_{\mathrm{tr}}^{2} \leq\|\hat{\rho}-\rho\|^{2 }\leq\|\hat{\rho}-\rho\|_{\mathrm{tr}}^{2},
	\end{equation}
	the MSE $ \mathbb{E}\|\hat\rho-{\rho}\|^2 $ scales as $O(1/N)  $, and both Eqs.~\eqref{w1} and \eqref{w2} hold. Therefore, the two second-order terms in Eq.~\eqref{taylorinf} also scale as $ O(1/N) $. 
	For the first-order term  Eq.~\eqref{firstterm}, each element in the third equality is non-negative and the first term $ \sum_{j=1}^r \hat{\lambda}_j\left\langle\hat{\lambda}_j|W| \hat{\lambda}_j\right\rangle $ scales as $O(1/N)  $ because the MSE scales as $O(1/N)  $.
	Since $ \mathbb{E}(1-F(\hat\rho, \rho))=O(1/N) $, the second term $ \sum_{j=r+1}^d \hat{\lambda}_j\left\langle\hat{\lambda}_j|W| \hat{\lambda}_j\right\rangle $ also scales as $O(1/N)  $.
	Because of $ \mathbb{E}\left\langle\hat{\lambda}_j|W| \hat{\lambda}_j\right\rangle=1+O\left(1/N\right) $ for $ r+1\leq j \leq d $, 
	we must have  $ \mathbb{E}(\hat{\lambda}_j)=O(1/N) $ for $ r+1 \leq j \leq d $.
	Therefore, if the infidelity scales as $O(1/N)$, the MSE $ \mathbb{E}\|\hat\rho-{\rho}\|^2 $ scales as $O\left(1/N\right)$ and  $ \mathbb{E}(\hat{\lambda}_j)=O(1/N) $ for $ r+1 \leq j \leq d $.
	
	Then we consider $\rho$ has full rank; i.e., $r=d$.
	If $ \mathbb{E}\|   \hat \rho-{\rho}\|   ^2= O(1/N) $, using Eq.~\eqref{taylorinf},   we have $ \mathbb{E}(1-F(\hat \rho, {\rho}))=O(1/N)$. Since  $ \mathbb{E}\left(\sum_{j=r+1}^{d}\hat{\lambda}_j\right)=0  $, condition C2 still holds. If $ \mathbb{E}(1-F(\hat \rho, {\rho}))=O(1/N)$, using Eqs.~\eqref{tr2} and \eqref{tr1}, we have $ \mathbb{E}\|   \hat \rho-{\rho}\|   ^2= O(1/N) $ and condition C2 also holds if $\rho$ has full rank.
	
	We then focus on a general positive semidefinite matrix $S$ which is rank-deficient; i.e., $r<d$.
	For the sufficiency of C1 and C2, 
	if the MSE scales as $\mathbb{E}\left\|\hat{S}-S\right\|^2=O\left({1}/{N}\right)$, we have
	\begin{equation}\label{qdt3}
		\mathbb{E}\left(\operatorname{Tr}\left(\hat{S}\right)-\operatorname{Tr}\left(S\right)\right)^2=O\left(\frac{1}{N}\right),
	\end{equation}
	and $\mathbb{E} \operatorname{Tr}\left(\hat{S}\right)\rightarrow \operatorname{Tr}\left(S\right) $.
	Thus, 
	\begin{equation}
		\begin{aligned}
			\left\|\frac{\hat{S}}{\operatorname{Tr}\left(\hat{S}\right)}-\frac{S}{\operatorname{Tr}\left(S\right)}\right\|=&\left\| \frac{\hat{S} \operatorname{Tr}\left(S\right)-S \operatorname{Tr}\left(\hat{S}\right)}{\operatorname{Tr}\left(\hat{S}\right) \operatorname{Tr}\left(S\right)} \right\|\\
			=&\left\| \frac{\hat{S} \operatorname{Tr}\left(S\right)-S \operatorname{Tr}\left(S\right)+S \operatorname{Tr}\left(S\right)-S \operatorname{Tr}\left(\hat{S}\right)}{\operatorname{Tr}\left(\hat{S}\right) \operatorname{Tr}\left(S\right)} \right\|\\
			\leq& \frac{1}{\operatorname{Tr}\left(\hat{S}\right)}\left\|\hat{S}-S\right\|+\frac{\left|\operatorname{Tr}\left(S\right)-\operatorname{Tr}\left(\hat{S}\right)\right|\left\|S\right\|}{\operatorname{Tr}\left(\hat{S}\right) \operatorname{Tr}\left(S\right)}.
		\end{aligned}
	\end{equation}
	Therefore, using the Cauchy–Schwarz inequality, it follows that
	\begin{equation}\label{qdt5}
		\begin{aligned}
			& \mathbb{E}\left\|\frac{\hat{S}}{\operatorname{Tr}\left(\hat{S}\right)}-\frac{S}{\operatorname{Tr}\left(S\right)}\right\|^2\\
			\leq& \mathbb{E}\left(\frac{\left\|\hat{S}-S\right\|}{\operatorname{Tr}\left(\hat{S}\right)}\right)^2+\left(\frac{\left\|S\right\|}{\operatorname{Tr}\left(S\right)}\right)^2 \mathbb{E}\left(\frac{\operatorname{Tr}\left(S\right)-\operatorname{Tr}\left(\hat{S}\right)}{\operatorname{Tr}\left(\hat{S}\right)}\right)^2+2 \frac{\left\|S\right\|}{\operatorname{Tr}\left(S\right)} \mathbb{E}\left(\frac{\left\|\hat{S}-S\right\|}{\operatorname{Tr}\left(\hat{S}\right)} \cdot \frac{\operatorname{Tr}\left(S\right)-\operatorname{Tr}\left(\hat{S}\right)}{\operatorname{Tr}\left(\hat{S}\right)}\right)\\
			\leq & \mathbb{E}\left(\frac{\left\|\hat{S}-S\right\|}{\operatorname{Tr}\left(\hat{S}\right)}\right)^2+\left(\frac{\left\|S\right\|}{\operatorname{Tr}\left(S\right)}\right)^2 \mathbb{E}\left(\frac{\operatorname{Tr}\left(S\right)-\operatorname{Tr}\left(\hat{S}\right)}{\operatorname{Tr}\left(\hat{S}\right)}\right)^2+ 2 \frac{\left\|S\right\|}{\operatorname{Tr}\left(S\right)} \sqrt{\mathbb{E}\left(\frac{\left\|\hat{S}-S\right\|}{\operatorname{Tr}\left(\hat{S}\right)}\right)^2 {\mathbb{E}}\left(\frac{\operatorname{Tr}\left(S\right)-\operatorname{Tr}\left(\hat{S}\right)}{\operatorname{Tr}\left(\hat{S}\right)}\right)^2}\\
			=&O\left(\frac{1}{N}\right).
		\end{aligned}
	\end{equation}
	Since the eigenvalues of $ \hat{S} $ scale as $ \mathbb{E}(\hat{\lambda}_j)= O(1/N)  $ for $ r+1 \leq j \leq d $ and $\mathbb{E} \operatorname{Tr}\left(\hat{S}\right)\rightarrow \operatorname{Tr}\left(S\right) $ from Eq.~\eqref{qdt3}, the eigenvalues of $\hat{S}/\operatorname{Tr}(\hat{S})  $ also scale as $  O(1/N)  $ for $ r+1 \leq j \leq d $.
	After normalization, $ \hat{S}/\operatorname{Tr}(\hat{S}) $ is the same as a quantum state and the added term in the new fidelity Eq.~\eqref{s2} also  scales as $ O\left(1/N\right) $ from Eq.~\eqref{qdt3}. 
	Thus  the remaining analysis is similar to the QST case and the infidelity $\mathbb{E}\left( 1-F\left(\hat{S}, S\right)\right) $ scales as $ O\left(1/N\right) $.
	
	For the necessity part,
	if $ \mathbb{E}\left(1-F\left(\hat{S}, S\right)\right)=O\left(1/N\right) $, we have  $ \mathbb{E}\left(1-F_1\left(\hat{S}, S\right)\right)=O\left(1/N\right) $.
	Noticing that $$\mathbb{E}\left(1-F_1\left(\hat{S}, S\right)\right)= \mathbb{E}\left(1-F_{d,p}\left(\hat{S}, S\right)\right)+\mathbb{E}\left( \left[\operatorname{Tr}\left(S-\hat{S}\right)\right]^{2} / d^{2}\right),$$
	we thus have $ \mathbb{E}\left(1-F_{d,p}\left(\hat{S}, S\right)\right)=O\left(1/N\right) $ and
	\begin{equation}\label{qdt1}
		\mathbb{E} \left(\left[\operatorname{Tr}\left(S-\hat{S}\right)\right]^{2} / d^{2}\right)=O\left(\frac{1}{N}\right).
	\end{equation}
	Since we normalize $\hat{S}$ and $S$ to a quantum state in $1- F_{d,p}\left(\hat{S}, S\right) $,  similarly to the QST case, we have
	\begin{equation}\label{qdt2}
		\mathbb{E}\left\|\frac{\hat{S}}{\operatorname{Tr}\left(\hat{S}\right)}-\frac{S}{\operatorname{Tr}\left(S\right)}\right\|^2=O\left(\frac{1}{N}\right),
	\end{equation}
	and the eigenvalues of  $ \hat{S}/\operatorname{Tr}(\hat{S}) $ scale as $\mathbb{E}\left(\hat{\lambda}_j / \operatorname{Tr}\left(\hat{S}\right)\right)=O\left(1/N\right)  $ for $ r+1 \leq j \leq d $.
	Using Eq.~\eqref{qdt1}, we have $\mathbb{E} \operatorname{Tr}\left(\hat{S}\right)\rightarrow \operatorname{Tr}\left(S\right) $. Therefore,  $\mathbb{E}(\hat{\lambda}_j) =O\left(1/N\right)  $ for $ r+1 \leq j \leq d $. For $ \left\|\hat{S}-S\right\| $, we have
	\begin{equation}\label{qdt4}
		\begin{aligned}
			\left\|\hat{S}-S\right\|=&\operatorname{Tr}\left(S\right)\left\|\frac{\hat{S}}{\operatorname{Tr}\left(S\right)}-\frac{\hat{S}}{\operatorname{Tr}\left(\hat{S}\right)}+\frac{\hat{S}}{\operatorname{Tr}\left(\hat{S}\right)}-\frac{S}{\operatorname{Tr}\left(S\right) \|}\right\| \\
			\leq& \operatorname{Tr}\left(S\right)\left\|\frac{\hat{S}}{\operatorname{Tr}\left(S\right)}-\frac{\hat{S}}{\operatorname{Tr}\left(\hat{S}\right)}\right\|+\operatorname{Tr}\left(S\right)\left\|\frac{\hat{S}}{\operatorname{Tr}\left(\hat{S}\right)}-\frac{S}{\operatorname{Tr}\left(S\right)}\right\| \\
			=&\operatorname{Tr}\left(S\right)\left\|\hat{S}\right\|\left\|\frac{\operatorname{Tr}\left(\hat{S}\right)-\operatorname{Tr}\left(S\right)}{\operatorname{Tr}\left(S\right) \operatorname{Tr}\left(\hat{S}\right)}\right\|+\operatorname{Tr}\left(S\right)\left\|\frac{\hat{S}}{\operatorname{Tr}\left(\hat{S}\right)}-\frac{S}{\operatorname{Tr}\left(S\right)}\right\|.
		\end{aligned}
	\end{equation}
	Then using  Eqs.~\eqref{qdt1} and \eqref{qdt2}, and the Cauchy–Schwarz inequity, it follows that 
	the MSE  scales as  $\mathbb{E}\left\|\hat{S}-S\right\|^2=O(1/N)$.
	
	Finally, we discuss the case in which $S$ is of full rank, i.e., $r=d$. If $ \mathbb{E}\|   \hat S-{S}\|   ^2= O(1/N) $ using Eqs.~\eqref{taylorinf} and \eqref{qdt3}--\eqref{qdt5},   we still have $ \mathbb{E}(1-F(\hat S, {S}))=O(1/N)$ which is similar to QST.  Since  $ \mathbb{E}\left(\sum_{j=r+1}^{d}\hat{\lambda}_j\right)=0  $, condition C2 still holds. If $ \mathbb{E}\left(1-F\left(\hat{S}, S\right)\right)=O\left(1/N\right) $, using Eqs.~\eqref{qdt1}--\eqref{qdt4}, we still have  $\mathbb{E}\left\|\hat{S}-S\right\|^2=O(1/N)$ and condition C2 also holds if $S$ is of full rank.
\end{proof}

In QST and trace-preserving QPT, since $\operatorname{Tr}(\rho)=1$ and $\operatorname{Tr}(X)=d$, respectively, distortion does not exist for these two scenarios and thus Theorem \ref{theorem1} still holds.
For QDT and non-trace-preserving QPT, if we use the direct infidelity definition $ 1-F_{d,p}\left(\hat{P}_{i}, P_{i}\right) $ and $ 1-F_{d,p}\left(\hat X,X \right) $, the sufficiency of C1 and C2 in Theorem \ref{theorem1} still holds but the necessity fails due to distortion. For example, even if $ \mathbb{E}\left(1-F_{d,p}\left(\hat P_{i},P_{i} \right)\right)=O(1/N) $, since Eq.~\eqref{qdt1} does not hold anymore, we cannot conclude that the MSE $ \mathbb{E}\left\|\hat{P}_{i}-P_{i}\right\|^2= O(1/N) $. This situation also occurs when we consider $ 1-F_{d,p}\left(\hat X,X \right) $ in non-trace-preserving QPT.
Therefore, we need to modify the definition of infidelity in QDT and QPT as in Eqs.~\eqref{s2} and \eqref{s3}.
Noting that the direct definition of infidelity $ \mathbb{E}\left(1-F_{d,p}\left(\hat{P}_{i}, P_{i}\right)\right)$ or $ \mathbb{E}\left(1-F_{d,p}\left(\hat X,X \right)\right) $ and the added term $\left[\operatorname{Tr}\left(P_{i}-\hat{P}_{i}\right)\right]^{2} / d^{2} \text{ or } \left[\operatorname{Tr}\left(X- \hat X\right)\right]^{2}/d^2  $ are both non-negative for QDT and QPT, if the modified infidelity $ \mathbb{E}\left(1-F\left(\hat{P}_{i}, P_{i}\right)\right)$ or $\mathbb{E}\left(1-F\left(\hat X,X \right)\right) $ scales as $O\left(1/{N}\right)$,  $ \mathbb{E}\left(1-F_{d,p}\left(\hat{P}_{i}, P_{i}\right)\right)$ or $\mathbb{E}\left(1-F_{d,p}\left(\hat X,X \right)\right) $ also has the optimal scaling $O\left(1/{N}\right)$. 
Furthermore, we highlight that Theorem \ref{theorem1} can be applied to achieve optimal infidelity scaling between any positive semidefinite physical quantity $S$
including but not limited to quantum states, POVM elements, and process matrices, and corresponding estimate $ \hat{S} \geq 0 $ using any reconstruction algorithm. For instance, it can also be applied to quantum assemblage tomography \cite{pepper2024scalable,VillegasAguilar2024}.

The infidelity definition for QDT is not unique and Ref. \cite{Hou2018} gives a new fidelity/infidelity definition $F_H$ for a detector.  
Consider two detectors $\left\{P_j\right\}_{j=1}^n$ and $\left\{\hat P_{j}\right\}_{j=1}^n$ on a $d$-dimensional Hilbert space with the same number of elements. Construct two normalized quantum states as $\sigma=\frac{1}{d} \sum_{j=1}^n \left(P_j \otimes|j\rangle\langle j|\right)$ and $\hat \sigma=\frac{1}{d} \sum_{j=1}^n \left(\hat P_{j} \otimes|j\rangle\langle j|\right)$, where the $\{|j\rangle\}_{i=1}^{n}$ form an orthonormal basis for an ancilla system. The fidelity between the two detectors $\left\{P_j\right\}_{j=1}^n$ and $\left\{\hat P_{j}\right\}_{j=1}^n$ is defined as the fidelity between the two states $\sigma$ and $\hat\sigma$,
\begin{equation}
	F_H\left(\left\{P_j\right\}_{j=1}^n, \left\{\hat P_{j}\right\}_{j=1}^n \right)\triangleq F\left(\sigma, \hat\sigma\right)=\left(\operatorname{Tr} \sqrt{\sqrt{\sigma} \hat\sigma \sqrt{\sigma}}\right)^2.
\end{equation}
A sufficient and necessary condition similar to our Theorem \ref{theorem1} can still be obtained to characterize that infidelity.
\begin{theorem}
	The infidelity $1-F_H\left(\left\{P_j\right\}_{j=1}^n, \left\{\hat P_{j}\right\}_{j=1}^n \right)$ scales as $ O(1/N) $ if and only if the MSE of each POVM element $\mathbb{E}\|\hat P_j-P_j\|^2$ scales as  $ O(1/N) $ and the  estimated null eigenvalues  of each POVM element scale as  $ O(1/N) $.
\end{theorem}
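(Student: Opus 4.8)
The plan is to recognize that $F_H$ is nothing but the state fidelity between the two block-diagonal density matrices $\sigma=\frac1d\sum_j P_j\otimes|j\rangle\langle j|$ and $\hat\sigma=\frac1d\sum_j \hat P_j\otimes|j\rangle\langle j|$, and then to apply Theorem~\ref{theorem1} directly with $S=\sigma$ and $\hat S=\hat\sigma$. First I would verify these are genuine states: the completeness relation $\sum_j P_j=I_d$ gives $\operatorname{Tr}(\sigma)=\frac1d\operatorname{Tr}(I_d)=1$, and assuming the estimate is a physical POVM ($\sum_j\hat P_j=I_d$, as the correction step of the adaptive QDT algorithm guarantees) likewise $\operatorname{Tr}(\hat\sigma)=1$. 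Since both are quantum states, the generalized fidelity $F$ coincides with $F_s$, so $\mathbb E\big(1-F_H\big)=\mathbb E\big(1-F(\hat\sigma,\sigma)\big)$ and Theorem~\ref{theorem1} asserts that this scales as $O(1/N)$ if and only if (C1) $\mathbb E\|\hat\sigma-\sigma\|^2=O(1/N)$ and (C2) the sum of the $nd-\operatorname{rank}(\sigma)$ smallest eigenvalues of $\hat\sigma$ is $O(1/N)$. The remaining work is purely to re-express C1 and C2 in terms of the individual POVM elements.

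Translating C1 is immediate from the block structure: because the summands live in mutually orthogonal diagonal blocks (indexed by the ancilla label $j$), the Frobenius norm decouples as $\|\hat\sigma-\sigma\|^2=\frac1{d^2}\sum_{j=1}^n\|\hat P_j-P_j\|^2$. As this is a finite sum of nonnegative terms, $\mathbb E\|\hat\sigma-\sigma\|^2=O(1/N)$ holds if and only if each $\mathbb E\|\hat P_j-P_j\|^2=O(1/N)$, which is exactly the per-element MSE condition in the statement.

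The heart of the argument is C2. The spectrum of the block-diagonal $\hat\sigma$ is the disjoint union over $j$ of the eigenvalues of $\frac1d\hat P_j$. Writing $r_j=\operatorname{rank}(P_j)$, so that $\operatorname{rank}(\sigma)=\sum_j r_j$ and $nd-\operatorname{rank}(\sigma)=\sum_j(d-r_j)$, and letting $\hat\lambda_k^{(j)}$ denote the eigenvalues of $\hat P_j$ in non-increasing order, the estimated null eigenvalues of $P_j$ are $\{\hat\lambda_k^{(j)}\}_{k>r_j}$. The sufficiency direction is then immediate: the sum of the $nd-\operatorname{rank}(\sigma)$ smallest eigenvalues of $\hat\sigma$ is bounded above by the sum over any fixed family of that many eigenvalues, in particular by $\frac1d\sum_j\sum_{k>r_j}\hat\lambda_k^{(j)}$, so per-element null estimates of order $O(1/N)$ force C2. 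For necessity I would invoke C1 through Weyl's inequality (Lemma~\ref{lemma3}) applied inside each block: the $r_j$ largest eigenvalues of $\hat P_j$ converge to the positive true eigenvalues and are bounded away from zero, while the remaining $d-r_j$ vanish. Hence for large $N$ the $nd-\operatorname{rank}(\sigma)$ smallest eigenvalues of $\hat\sigma$ are exactly the block-wise null estimates, so C2 reads $\frac1d\sum_j\sum_{k>r_j}\hat\lambda_k^{(j)}=O(1/N)$; since every term is nonnegative and $n$ is finite, each block's null sum is $O(1/N)$. Blocks with $r_j=d$ contribute vacuously, exactly as in the full-rank case of Theorem~\ref{theorem1}.

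The step I expect to be the main obstacle is the necessity half of the C2 translation: rigorously justifying that the smallest $nd-\operatorname{rank}(\sigma)$ eigenvalues of $\hat\sigma$ coincide with the block-wise null estimates. Because optimality is phrased in expectation, the ``for large $N$'' separation must be upgraded to a statement controlling the low-probability event in which an estimated nonzero eigenvalue dips below a null estimate from another block and corrupts the global ordering. Since all eigenvalues of the normalized $\hat\sigma$ lie in $[0,1]$, this stray contribution can be bounded and kept $O(1/N)$, but making the separation quantitative in expectation is the delicate point.
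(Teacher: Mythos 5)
Your proposal is correct and follows essentially the same route as the paper: both reduce $F_H$ to the state fidelity of the block-diagonal operators $\sigma,\hat\sigma$ (the paper merely adds a permutation to rewrite $P_j\otimes|j\rangle\langle j|$ as $|j\rangle\langle j|\otimes P_j$ so the block structure is explicit), then invoke Theorem~\ref{theorem1} and translate C1 and C2 blockwise. In fact you are more careful than the paper on the one genuinely delicate point—the necessity direction of C2, where the globally ordered smallest eigenvalues of $\hat\sigma$ need not coincide with the blockwise null estimates on low-probability events—which the paper dismisses as ``clear''; your plan to control that event via the established $O(1/N)$ MSE and the boundedness of the eigenvalues does close the gap.
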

\begin{proof}
	According to \cite{LOAN200085}, if $A$ and $B$ are square matrices, then ${A} \otimes {B}$ and ${B} \otimes {A}$ are  permutations similar  as
	\begin{equation}
		{A} \otimes {B}=Q\left({B} \otimes {A} \right)Q^{T},
	\end{equation}
	where $ Q $ is a permutation matrix. Thus, $Q\left( P_{j} \otimes|j\rangle\langle j|\right)Q^{T}=|j\rangle\langle j|\otimes P_j  $.
	We can define $ \tau=\frac{1}{d} \sum_{j=1}^n (|j\rangle\langle j|$ $\otimes P_j)  $ and $ \hat\tau=\frac{1}{d} \sum_{j=1}^n \left(|j\rangle\langle j|\otimes \hat P_j \right) $, and thus $  F_s\left(\sigma, \hat\sigma\right)= F_s\left(\tau, \hat\tau\right) $ and $\|\hat{\tau}-\tau\|=\|\hat{\sigma}-\sigma\|$.
	Since $ \tau\!=\!\frac{1}{d}\operatorname{diag}(P_1, \! \cdots,\! P_n) $,
	the eigenvalues of $ d\times\hat \tau $ correspond to the collection of the eigenvalues of $\left\{\hat P_{j}\right\}_{j=1}^n$.
	
	It is clear that the MSE $\mathbb{E}\|\hat\tau-{\tau}\|^2$ scales as $ O(1/N) $ if and only if the MSE of each POVM element $\mathbb{E}\|\hat P_j-P_j\|^2$ scales as  $ O(1/N) $ for $ 1\leq j \leq n $. The  estimated null eigenvalues of $ \hat \tau $ scale as $ O(1/N) $ if and only if the  estimated null eigenvalues  of each POVM element scale as  $ O(1/N) $. Therefore, using Theorem \ref{theorem1}, the infidelity $1-F_s\left(\sigma, \hat\sigma\right)$ scales as $ O(1/N) $ if and only if the MSE of each POVM element $\mathbb{E}\|\hat P_j-P_j\|^2$ scales as  $ O(1/N) $ and the  estimated null eigenvalues  of each POVM element scale as  $ O(1/N) $.
\end{proof}

Using the sufficient condition of Theorem \ref{theorem1}, we aim to design adaptive algorithms achieving $1-F=O(1/N)$.
In the full-rank scenarios,  optimal infidelity scaling is achieved by the MSE scaling $O(1/N)$, which is trivial.
Thus, in the following, we focus  on  rank-deficient cases.

\section{Adaptive quantum tomography}\label{sec6}

\subsection{Two-step  adaptive quantum state tomography}\label{sec61}

We first  consider adaptive QST. Without loss of generality, in our protocol we assume the eigenvalues of $\tilde{\rho}$ are arranged as $\tilde\lambda_1\geq \cdots \geq \tilde\lambda_d $.
From Theorem \ref{theorem1}, an important problem is how to achieve $ \mathbb{E}\hat{\lambda}_i=O(1/N) $ for  $ r+1\leq i\leq d $. This can be achieved, as pointed out in \cite{PhysRevLett.111.183601} for a one-qubit system, by aligning the state with the measurement bases. Namely, by measuring a qubit state with its eigenbases, we can accurately estimate the small eigenvalues. Motivated from this principle, we design the two-step adaptive approach for QST  in  multi-qubit or qudit systems in the main text. The total procedure is as follows:
\begin{equation*}
	\rho\quad\xrightarrow[\text{MLE/LRE}]{\text{Step-1:}} \quad \tilde{ \rho}\quad\xrightarrow[\text{ adaptive measurement}]{\text{Step-2:}}\quad \hat{\rho}.
\end{equation*}

Using this two-step adaptive QST algorithm,  we propose the following theorem to characterize the scalings of the MSE and the estimated zero eigenvalues.
\begin{theorem}\label{qstt1}
	For the proposed two-step adaptive QST algorithm,
	the MSE scales as
	\begin{equation*}
		\begin{aligned}
			\mathbb{E}\|\hat{\rho}-{\rho}\|^2 =&O\left(\frac{1}{N_0}\right)+O\left(\frac{1}{N-N_0}\right)+O\left(\frac{1}{\sqrt{N_0\left(N-N_0\right)}}\right)+O\left(\frac{1}{N_0^{3 / 4}\left(N-N_0\right)^{1 / 4}}\right),
		\end{aligned}
	\end{equation*}
	and the estimated zero eigenvalues of $ \hat{\rho} $ scale as     \begin{equation*}
		\mathbb{E}\hat{\lambda}_i =O\left(\frac{1}{N_0}\right)+O\left(\frac{1}{\sqrt{N_0(N-N_0)}}\right) \text{ for } r+1\leq i\leq d.
	\end{equation*}
\end{theorem}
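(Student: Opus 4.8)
The plan is to exploit that the two steps inject two statistically independent errors --- the eigenbasis misalignment produced by Step-1 and the multinomial sampling noise produced by Step-2 --- and to track how they propagate into $\hat\rho=\sum_i\hat p_i|\tilde\lambda_i\rangle\langle\tilde\lambda_i|$. Writing $p_i\triangleq\langle\tilde\lambda_i|\rho|\tilde\lambda_i\rangle$ for the true Step-2 success probabilities, the decomposition I would build everything on is
\begin{equation*}
\hat\rho-\rho=\underbrace{\Bigl(\textstyle\sum_i p_i|\tilde\lambda_i\rangle\langle\tilde\lambda_i|-\rho\Bigr)}_{B\ (\text{pinching error})}+\underbrace{\textstyle\sum_i(\hat p_i-p_i)|\tilde\lambda_i\rangle\langle\tilde\lambda_i|}_{A\ (\text{sampling error})},
\end{equation*}
with all quantities analyzed conditionally on the Step-1 data (so $\tilde U$, hence $B$ and the $p_i$, are deterministic and $\hat p_i$ is an unbiased multinomial estimate of $p_i$) and then averaged by the law of total expectation.

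The first task is to bound $B$. Since $\tilde\rho$ is diagonal in the $\tilde U$ basis, the off-diagonal part of $\rho$ in that basis equals the off-diagonal part of $\rho-\tilde\rho$, so $\|B\|\le\|\tilde\rho-\rho\|$, and the Step-1 guarantee (Eq.~\eqref{qst1} with $N_0$ copies) yields $\mathbb E\|B\|^2=O(1/N_0)$. Conditionally on Step-1, $\mathbb E\|A\|^2=\sum_i p_i(1-p_i)/(N-N_0)\le 1/(N-N_0)$, so $\mathbb E\|A\|^2=O(1/(N-N_0))$. Expanding $\|\hat\rho-\rho\|^2=\|A\|^2+\|B\|^2+2\langle A,B\rangle$, the cleanest route uses $\mathbb E[A\mid\text{Step-1}]=0$, giving $\mathbb E\langle A,B\rangle=\langle\mathbb E[A\mid\text{Step-1}],B\rangle=0$ and hence the sharp $\mathbb E\|\hat\rho-\rho\|^2=O(1/N_0)+O(1/(N-N_0))$. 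The looser four-term form in the statement follows a fortiori by instead invoking the triangle inequality through the intermediate $\sum_i p_i|\tilde\lambda_i\rangle\langle\tilde\lambda_i|$ and bounding the cross terms by Cauchy--Schwarz between the two base scales $1/N_0$ and $1/(N-N_0)$: the leading pairing gives $O(1/\sqrt{N_0(N-N_0)})$, and a further decomposition that isolates the small null-block frequencies (whose conditional variance is suppressed to $O(1/(N_0(N-N_0)))$ because their probabilities are already $O(1/N_0)$) reproduces the intermediate scale $O(N_0^{-3/4}(N-N_0)^{-1/4})$. All four listed terms are thus genuine upper bounds, and when $N_0=\alpha N$ they collapse to $O(1/N)$.

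For the estimated null eigenvalues I would argue through interlacing rather than the naive index assignment, since the $\hat\lambda_i$ are the \emph{sorted} eigenvalues of $\hat\rho$. The compression of $\hat\rho$ onto the estimated null subspace $\tilde U_2=[|\tilde\lambda_{r+1}\rangle,\dots,|\tilde\lambda_d\rangle]$ equals $\operatorname{diag}(\hat p_{r+1},\dots,\hat p_d)$, so Cauchy interlacing gives $\hat\lambda_{r+i}\le$ (the $i$-th largest null-block frequency). The mean of each null-block frequency is controlled by Davis--Kahan exactly as in the proof of Theorem~\ref{theorem1}: $\mathbb E\,p_j=\sum_{k=1}^r\lambda_k\,\mathbb E|\langle\tilde\lambda_j|\lambda_k\rangle|^2\le\lambda_1\,\mathbb E\|\tilde U_2^\dagger U_1\|^2=O(1/N_0)$ via Proposition~\ref{theoremyu} with boundary gap $\lambda_r>0$ (Eqs.~\eqref{u1}--\eqref{u2}). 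The second scale enters only through the sorting nonlinearity: because the null-block conditional variance is $p_j(1-p_j)/(N-N_0)$ with $p_j=O(1/N_0)$, each fluctuation has standard deviation $O(1/\sqrt{N_0(N-N_0)})$, and the maximum over the block contributes $\sqrt{\sum_{j>r}\mathbb E(\hat p_j-p_j)^2}=O(1/\sqrt{N_0(N-N_0)})$. Adding the mean and this fluctuation yields $\mathbb E\hat\lambda_i=O(1/N_0)+O(1/\sqrt{N_0(N-N_0)})$ for $r+1\le i\le d$.

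I expect the main obstacle to be the statistical coupling between the two stages: the success probabilities $p_i$, their variances, and the pinching error $B$ are all functions of the random basis $\tilde U$, so every estimate must be stated conditionally and then passed through the Davis--Kahan input $\mathbb E\|\tilde U_2^\dagger U_1\|^2=O(1/N_0)$ under that averaging. Two further subtleties need care. First, Proposition~\ref{theoremyu} only requires the boundary gap $\lambda_r-\lambda_{r+1}=\lambda_r>0$, so within-block degeneracies are harmless \emph{provided} one works with the support and null \emph{subspaces} rather than individual eigenvectors; this is what makes the argument valid for degenerate $\rho$. Second, the eigenvalue-sorting nonlinearity must be tamed by Cauchy interlacing so that the large support-block fluctuations of scale $O(1/\sqrt{N-N_0})$ never leak into the null-eigenvalue bound, leaving only the sharper $O(1/\sqrt{N_0(N-N_0)})$ scale; Lemma~\ref{lemma3} (Weyl) supplies the complementary bound $\mathbb E(\hat\lambda_j-\lambda_j)^2=O(1/N_0)$ on the support block. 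The remaining work --- expanding squares and enumerating Cauchy--Schwarz cross terms --- is routine bookkeeping.
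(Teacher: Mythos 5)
Your proposal is correct. For the null-eigenvalue bound you follow essentially the paper's route: Proposition~\ref{theoremyu} with the boundary gap $\lambda_r>0$ gives $\mathbb{E}\langle\tilde{\lambda}_j|\rho|\tilde{\lambda}_j\rangle=O(1/N_0)$ on the estimated null block, and the conditional multinomial variance $p_j(1-p_j)/(N-N_0)$ supplies the $O(1/\sqrt{N_0(N-N_0)})$ fluctuation; your Cauchy-interlacing step even tidies up a point the paper passes over silently, namely that the frequencies $\hat p_i$ need not come out in non-increasing order. For the MSE, however, your decomposition is genuinely different and sharper. The paper compares $\hat\rho$ with $\tilde\rho$ coordinate-wise in the shared eigenbasis, bounds $\mathbb{E}(\hat{\lambda}_i-\tilde{\lambda}_i)^2$ by expanding both around $\lambda_i$ and controlling the cross terms with Cauchy--Schwarz, and then applies the triangle inequality against $\|\tilde\rho-\rho\|$; this bookkeeping is precisely what produces the two mixed scales $O(1/\sqrt{N_0(N-N_0)})$ and $O(N_0^{-3/4}(N-N_0)^{-1/4})$ in the statement. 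You instead pinch $\rho$ itself onto the $\tilde U$ basis, so the sampling error $A$ is diagonal and conditionally mean-zero while the pinching error $B$ is purely off-diagonal in the same basis (so $\langle A,B\rangle=0$ identically), every cross term vanishes, and $\|B\|\le\|\tilde\rho-\rho\|$ yields the clean two-term bound $O(1/N_0)+O(1/(N-N_0))$. Since by weighted AM--GM both mixed terms in the theorem are already dominated by $O(1/N_0)+O(1/(N-N_0))$, your sharper bound implies the stated one. What the paper's route buys is that it works directly with the eigenvalue errors $\hat{\lambda}_i-\tilde{\lambda}_i$ reused in later sections; what yours buys is a cross-term-free argument, a tighter bound, and a cleaner treatment of degeneracy via support and null subspaces rather than individual eigenvectors.
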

\begin{proof}
	We first prove that $\mathbb{E}(\hat{\lambda}_i)=O(1/N_0)$ for $ r+1\leq i \leq d $.
	Let the spectral decomposition of the true state be $ \rho=\sum_{m=1}^r \lambda_m\left|\lambda_m\right\rangle\left\langle\lambda_m\right| $.
	After Step-1, we have $\mathbb{E}\|\tilde{\rho}-\rho\|^2=O\left(1/{N_0}\right)$.
	Let $ \tilde{p}_{i}=\left\langle\tilde{\lambda}_i|\rho| \tilde{\lambda}_i\right\rangle $ be the measurement result when we apply $\left|\tilde{\lambda}_{i}\right\rangle\left\langle\tilde{\lambda}_{i}\right|$ as a measurement operator.
	
	We say a target matrix to be estimated is \textit{degenerate} if it equips one eigenvalue with more than one eigenvectors. For $ 1\leq i \leq r $,
	if the $i$-th eigenvalue of $\rho$ is not degenerate, using Proposition \ref{theoremyu} and similar to the proof of Eq.~\eqref{orthlam}, we have
	\begin{equation}\label{qst11}
		\mathbb{E}\left|\left\langle\tilde{\lambda}_i \mid \lambda_m\right\rangle\right|^2 =\mathbb{E}\langle\tilde{\lambda}_i| \lambda_m\rangle\langle\lambda_m| \tilde{\lambda}_i\rangle=O\left(\frac{1}{N_0}\right)
	\end{equation}
	for $ i\neq m $. For $ i=m $, we have
	\begin{equation}\label{qst12}
		\begin{aligned}
			\mathbb{E}\left|\left\langle\tilde{\lambda}_m \mid \lambda_m\right\rangle\right|^2 &=\mathbb{E}\langle\tilde{\lambda}_m|\Big(I_{d}-\sum_{j\neq m}| \lambda_j\rangle\langle\lambda_j|\Big)| \tilde{\lambda}_m\rangle\\
			&=1+O\left(\frac{1}{N_0}\right).
		\end{aligned}
	\end{equation}
	Therefore, if the $i$-th $ (1\leq i \leq r) $ eigenvalue of $\rho$ is not degenerate, using Eqs.~\eqref{qst11} and \eqref{qst12}, we have
	\begin{equation}\label{rr1}
		\begin{aligned}
			\mathbb{E}\left(\tilde{p}_i\right)&= \mathbb{E}\left\langle\tilde{\lambda}_i|\rho| \tilde{\lambda}_i\right\rangle=\mathbb{E}\left(\sum_{m=1}^r \lambda_m\left|\left\langle\tilde{\lambda}_i \mid \lambda_m\right\rangle\right|^2\right) \\
			&=  \mathbb{E}\left(\lambda_i\left|\left\langle\tilde{\lambda}_i \mid \lambda_i\right\rangle\right|^2\right)+ \mathbb{E}\left(\sum_{m\neq i} \lambda_m\left|\left\langle\tilde{\lambda}_i \mid \lambda_m\right\rangle\right|^2\right)\\
			&= \lambda_i\left(1+O\left(\frac{1}{N_0}\right)\right)+O\left(\frac{1}{N_0}\right)\\
			&=\lambda_i+O\left(\frac{1}{N_0}\right).
		\end{aligned}
	\end{equation}
	Otherwise, if the $i$-th $(1\leq i \leq r) $ eigenvalue of $\rho$ is degenerate, assuming that the degenerate interval is $ [f,h] $ and $ i\in [f,h] $, i.e., $ \lambda_f=\cdots=\lambda_i=\cdots=\lambda_h $.
	Then using Proposition \ref{theoremyu} and similarly to the proof of Eq.~\eqref{orthlam}, we have
	\begin{equation}\label{qst21}
		\mathbb{E}\langle\tilde{\lambda}_i|\Big(\sum_{m\notin [f,h]}| \lambda_m\rangle\langle\lambda_m|\Big) |\tilde{\lambda}_i\rangle=O\left(\frac{1}{N_0}\right), 
	\end{equation}
	and thus
	\begin{equation}\label{qst22}
		\begin{aligned}
			\mathbb{E}\langle\tilde{\lambda}_i|\Big(\sum_{m=f}^{h}| \lambda_m\rangle\langle\lambda_m|\Big)| \tilde{\lambda}_i\rangle&=\mathbb{E}\langle\tilde{\lambda}_i|\Big(I_{d}-\sum_{m\notin [f,h]}| \lambda_m\rangle\langle\lambda_m|\Big)| \tilde{\lambda}_i\rangle\\
			&=1+O\left(\frac{1}{N_0}\right).
		\end{aligned}
	\end{equation}
	Therefore, using Eqs.~\eqref{qst21} and \eqref{qst22}, we have
	\begin{equation}\label{pst1}
		\begin{aligned}
			\mathbb{E}\left(\tilde{p}_i\right)&= \mathbb{E}\left\langle\tilde{\lambda}_i|\rho| \tilde{\lambda}_i\right\rangle=\mathbb{E}\left(\sum_{m=1}^r \lambda_m\left|\left\langle\tilde{\lambda}_i \mid \lambda_m\right\rangle\right|^2 \right)\\
			&= \mathbb{E} \left(\sum_{m=f}^{h}\lambda_m\left|\left\langle\tilde{\lambda}_i \mid \lambda_m\right\rangle\right|^2\right)+ \mathbb{E}\left(\sum_{m\notin [f,h]} \lambda_m\left|\left\langle\tilde{\lambda}_i \mid \lambda_m\right\rangle\right|^2\right)\\
			&=\lambda_i\left(1+O\left(\frac{1}{N_0}\right)\right)+O\left(\frac{1}{N_0}\right)\\
			&=\lambda_i+O\left(\frac{1}{N_0}\right).
		\end{aligned}
	\end{equation}
	Using Eqs.~\eqref{rr1} and \eqref{pst1}, for $1\leq i \leq r $, the variance of $ \hat{\lambda}_{i} $ is
	\begin{equation}\label{varp}
		\begin{aligned}		\operatorname{var}\left(\hat{\lambda}_{i}\right)=\frac{1}{N-N_0}\left(\mathbb{E}(\tilde{p}_i-\tilde{p}_i^2)\right)=O\left(\frac{1}{N-N_0}\right).
		\end{aligned}
	\end{equation}
	Therefore, we have
	\begin{equation}
		\mathbb{E}(\hat{\lambda}_i-  \tilde{p}_i)^2=O\left(\frac{1}{N-N_0}\right),\mathbb{E}|\hat{\lambda}_i-  \tilde{p}_i|\leq \sqrt{\mathbb{E}(\hat{\lambda}_i-  \tilde{p}_i)^2}=O\left(\frac{1}{\sqrt{N-N_0}}\right),
	\end{equation}
	and thus
	\begin{equation}\label{rr2}
		\mathbb{E}(\hat{\lambda}_i)\leq\mathbb{E}\left(\tilde{p}_i\right)+O\left(\frac{1}{\sqrt{N-N_0}}\right)=\lambda_i+O\left(\frac{1}{N_0}\right)+O\left(\frac{1}{\sqrt{N-N_0}}\right).
	\end{equation}
	For $ r+1\leq i \leq d $, using Proposition \ref{theoremyu} and Eq.~\eqref{orthlam}, we have
	\begin{equation}
		\begin{aligned}
			\mathbb{E}\left(\tilde{p}_i\right)&=\mathbb{E}\left\langle\tilde{\lambda}_i|\rho| \tilde{\lambda}_i\right\rangle\\		&=\mathbb{E}\left\langle\tilde{\lambda}_i\left|\left(\sum_{m=1}^r \lambda_m\left|\lambda_m\right\rangle\left\langle\lambda_m\right|\right)\right| \tilde{\lambda}_i\right\rangle\\
			&=O\left(\frac{1}{N_0}\right),
		\end{aligned}
	\end{equation}
	and similarly,
	\begin{equation}\label{var2}	\operatorname{var}\left(\hat{\lambda}_{i}\right)=\frac{1}{N-N_0}\left(\mathbb{E}(\tilde{p}_i-\tilde{p}_i^2)\right)=O\left(\frac{1}{N_0\left(N-N_0\right)}\right).
	\end{equation}
	Therefore, we have
	\begin{equation}
		\mathbb{E}(\hat{\lambda}_i-  \tilde{p}_i)^2=O\left(\frac{1}{N_0(N-N_0)}\right),\quad \mathbb{E}|\hat{\lambda}_i-  \tilde{p}_i|\leq \sqrt{\mathbb{E}(\hat{\lambda}_i-  \tilde{p}_i)^2}=O\left(\frac{1}{\sqrt{N_0(N-N_0)}}\right),
	\end{equation}
	and thus
	\begin{equation}\label{lamp}
		\mathbb{E}(\hat{\lambda}_i)\leq \mathbb{E}\left(\tilde{p}_i\right)+O\left(\frac{1}{\sqrt{N_0(N-N_0)}}\right)=O\left(\frac{1}{N_0}\right)+O\left(\frac{1}{\sqrt{N_0(N-N_0)}}\right).
	\end{equation}

	Then, we consider the scaling of the  MSE. Since $\mathbb{E}\|\tilde{\rho}-{\rho}\|^2=O(1/N_{0})$ in Step-1, using Lemma \ref{lemma3}, we have
	\begin{equation}\label{e1}
		\mathbb{E}\left(\tilde{\lambda}_i-\lambda_i\right)^2=O\left(\frac{1}{N_0}\right).
	\end{equation}
	In Step-2, 
	for $ 1\leq i \leq r $,  using Eqs.~\eqref{varp} and \eqref{rr2}, we have
	\begin{equation}\label{la}
		\begin{aligned}
			\mathbb{E}\left(\hat{\lambda}_i-\lambda_i\right)^2&=\left(\mathbb{E}\left(\hat{\lambda}_i-\lambda_i\right)\right)^2+\operatorname{var}(\hat{\lambda}_i-\lambda_i)\\
			&=\left[O\left(\frac{1}{N_0}\right)+O\left(\frac{1}{\sqrt{N-N_0}}\right)\right]^2+O\left(\frac{1}{N-N_0}\right) \\
			&=O\left(\frac{1}{N-N_0}\right)+O\left(\frac{1}{N_0^2}\right)+O\left(\frac{1}{N_0\sqrt{N-N_0}}\right).
		\end{aligned}
	\end{equation}
	Using the Cauchy–Schwarz inequality and Eqs.~\eqref{e1} and \eqref{la}, we have
	\begin{equation}\label{r3}
		\begin{aligned}
			\left|\mathbb{E}\left(\left(\hat{\lambda}_i-\lambda_i\right)\left(\tilde{\lambda}_i-\lambda_i\right)\right)\right| &\leq \sqrt{\mathbb{E}\left(\hat{\lambda}_{i}-\lambda_i\right)^2 \mathbb{E}\left(\tilde{\lambda}_i-\lambda_i\right)^2}\\
			&=\sqrt{\left[O\left(\frac{1}{N-N_0}\right)+O\left(\frac{1}{N_0^2}\right)+O\left(\frac{1}{N_0\sqrt{N-N_0}}\right)\right]O\left(\frac{1}{N_0}\right)}\\
			&\leq O\left(\frac{1}{\sqrt{N_0\left(N-N_0\right)}}\right)+O\left(\frac{1}{N_0^{3/2}}\right)+O\left(\frac{1}{N_0(N-N_0)^{1/4}}\right),
		\end{aligned}
	\end{equation}
	where we use $ \sqrt{a+b+c}\leq \sqrt{a}+\sqrt{b}+\sqrt{c}$ for $a, b, c\geq0  $ in the second inequality.
	Therefore, for $ 1\leq i \leq r $, using Eqs.~\eqref{e1}--\eqref{r3},  we have
	\begin{equation}\label{mse1}
		\begin{aligned}
			\mathbb{E}\left(\hat{\lambda}_i-\tilde{\lambda}_i\right)^2&=\mathbb{E}\left(\left(\hat{\lambda}_i-\lambda_i\right)-\left(\tilde{\lambda}_i-\lambda_i\right)\right)^2 \\
			&=\mathbb{E}\left(\hat{\lambda}_i-\lambda_i\right)^2+\mathbb{E}\left(\tilde{\lambda}_i-\lambda_i\right)^2-2 \mathbb{E}\left(\left(\hat{\lambda}_i-\lambda_i\right)\left(\tilde{\lambda}_i-\lambda_i\right)\right) \\
			&\leq\mathbb{E}\left(\hat{\lambda}_i-\lambda_i\right)^2+\mathbb{E}\left(\tilde{\lambda}_i-\lambda_i\right)^2+2\sqrt{\mathbb{E}\left(\hat{\lambda}_{i}-\lambda_i\right)^2 \mathbb{E}\left(\tilde{\lambda}_i-\lambda_i\right)^2}\\
			&\leq O\left(\frac{1}{N-N_0}\right)+O\left(\frac{1}{N_0^2}\right)+O\left(\frac{1}{N_0\sqrt{N-N_0}}\right)+O\left(\frac{1}{N_0}\right)\\
			&\quad+O\left(\frac{1}{\sqrt{N_0\left(N-N_0\right)}}\right)
			+O\left(\frac{1}{N_0^{3/2}}\right)+O\left(\frac{1}{N_0(N-N_0)^{1/4}}\right)\\
			&\leq O\left(\frac{1}{N-N_0}\right)+O\left(\frac{1}{N_0}\right)  +O\left(\frac{1}{\sqrt{N_0\left(N-N_0\right)}}\right).
		\end{aligned}
	\end{equation}	
	For $ r+1\leq i \leq d $, $ \lambda_{i}=0 $ and  using Eqs.~\eqref{var2} and \eqref{lamp}, we have
	\begin{equation}\label{lamp2}
		\begin{aligned}
			\mathbb{E} (\hat{\lambda}_{i}^2)&=\left(\mathbb{E} \hat{\lambda}_{i}\right)^2+\text{var}\left(\hat{\lambda}_{i}\right)=O\left(\frac{1}{N_0^{2}}\right)+O\left(\frac{1}{N_0\sqrt{N_0\left(N-N_0\right)}}\right)+O\left(\frac{1}{N_0\left(N-N_0\right)}\right).
		\end{aligned}
	\end{equation}
	Using Eqs.~\eqref{e1} and \eqref{lamp2}, further, we have
	\begin{equation}\label{rr3}
		\begin{aligned}
			\left|\mathbb{E}\left(\hat{\lambda}_{i} \tilde{\lambda}_i\right)\right| &\leq \sqrt{\mathbb{E}(\hat{\lambda}_{i}^2)\mathbb{E}(\tilde{\lambda}_i^{2}})=\sqrt{\mathbb{E}(\hat{\lambda}_{i}^2) \mathbb{E}\left(\tilde{\lambda}_{i}-{\lambda}_{i}\right)^2}\\
			&=\sqrt{\left[O\left(\frac{1}{N_0^{2}}\right)+O\left(\frac{1}{N_0\sqrt{N_0\left(N-N_0\right)}}\right)+O\left(\frac{1}{N_0\left(N-N_0\right)}\right)\right] O\left(\frac{1}{N_0}\right)}\\
			&\leq O\left(\frac{1}{N_0^{3/2}}\right)+O\left(\frac{1}{N_0^{5/4}(N-N_0)^{1/4}}\right)+O\left(\frac{1}{N_0\sqrt{N-N_0}}\right),
		\end{aligned}
	\end{equation}
	where we use $ \sqrt{a+b+c}\leq \sqrt{a}+\sqrt{b}+\sqrt{c}$ for $a, b, c\geq0  $ in the second inequality.
	For $ r+1\leq i \leq d $, using Eqs.~\eqref{e1}, \eqref{lamp2} and \eqref{rr3}, we have
	\begin{equation}\label{mse2}
		\begin{aligned}
			\mathbb{E}\left(\hat{\lambda}_i-\tilde{\lambda}_i\right)^2&=\mathbb{E}(\hat{\lambda}_i^2)+\mathbb{E}(\tilde{\lambda}_i^2)-2 \mathbb{E}\left(\hat{\lambda}_i \tilde{\lambda}_i\right) \\
			&\leq\mathbb{E}(\hat{\lambda}_i^2)+\mathbb{E}(\tilde{\lambda}_i^2)+2\sqrt{\mathbb{E}(\hat{\lambda}_{i}^2) \mathbb{E}(\tilde{\lambda}_i^{2})} \\
			&\leq O\left(\frac{1}{N_0^2}\right)+O\left(\frac{1}{N_0\sqrt{N_0\left(N-N_0\right)}}\right)+O\left(\frac{1}{N_0\left(N-N_0\right)}\right)+O\left(\frac{1}{N_0}\right)\\
			&\quad+O\left(\frac{1}{N_0^{3/2}}\right)+O\left(\frac{1}{N_0^{5/4}(N-N_0)^{1/4}}\right)+O\left(\frac{1}{N_0\sqrt{N-N_0}}\right) \\
			&=O\left(\frac{1}{N_0}\right).
		\end{aligned}
	\end{equation}
	Therefore, using Eqs.~\eqref{mse1} and \eqref{mse2}, the MSE $ \mathbb{E}\|\hat{\rho}-\tilde{\rho}\|^2 $ scales as
	\begin{equation}\label{msefi}
		\begin{aligned}
			\mathbb{E}\|\hat{\rho}-\tilde{\rho}\|^2&= \mathbb{E}\sum_{i=1}^r\left(\hat{\lambda}_i-\tilde{\lambda}_i\right)^2+ \mathbb{E}\sum_{i=r+1}^d\left(\hat{\lambda}_i-\tilde{\lambda}_i\right)^2\\
			&=O\left(\frac{1}{N_0}\right)+ O\left(\frac{1}{N-N_0}\right)+O\left(\frac{1}{\sqrt{N_0\left(N-N_0\right)}}\right).
		\end{aligned}
	\end{equation}
	Since $\mathbb{E}\|\tilde{\rho}-\rho\|^2=O\left(1/{N_0}\right)$, using the Cauchy–Schwarz inequality, we have
	\begin{equation}\label{eh}
		\begin{aligned}
			\left|\mathbb{E}\left(\|\hat{\rho}-\tilde{\rho}\|\|\tilde{\rho}-\rho\|\right)\right|&\leq\sqrt{\mathbb{E}\|\hat{\rho}-\tilde{\rho}\|^2\mathbb{E}\|\tilde{\rho}-\rho\|^2}\\
			&=\sqrt{\left[O\left(\frac{1}{N_0}\right)+ O\left(\frac{1}{N-N_0}\right)+O\left(\frac{1}{\sqrt{N_0\left(N-N_0\right)}}\right)\right]O\left(\frac{1}{N_0}\right)}\\
			&\leq O\left(\frac{1}{N_0}\right)+O\left(\frac{1}{\sqrt{N_0\left(N-N_0\right)}}\right)+O\left(\frac{1}{N_0^{3 / 4}\left(N-N_0\right)^{1 / 4}}\right).
		\end{aligned}
	\end{equation}
	Then since
	\begin{equation}
		\|\hat{\rho}-\rho\|\leq \|\hat{\rho}-\tilde{\rho}\|+\|\tilde{\rho}-\rho\|,
	\end{equation}	
	using Eqs.~\eqref{msefi} and \eqref{eh}, the MSE $\mathbb{E}\|\hat{\rho}-\rho\|^2$ scales as 	
	\begin{equation}
		\begin{aligned}
			\mathbb{E}\|\hat{\rho}-\rho\|^2&=\mathbb{E}\|\hat{\rho}-\tilde{\rho}\|^2+\mathbb{E}\|\tilde{\rho}-\rho\|^2+2\mathbb{E}\left(\|\hat{\rho}-\tilde{\rho}\|\|\tilde{\rho}-\rho\|\right) \\
			&=O\left(\frac{1}{N_0}\right)+O\left(\frac{1}{N-N_0}\right)+O\left(\frac{1}{\sqrt{N_0\left(N-N_0\right)}}\right)+O\left(\frac{1}{N_0^{3 / 4}\left(N-N_0\right)^{1 / 4}}\right).
		\end{aligned}
	\end{equation}	
\end{proof}

Using Theorem \ref{qstt1}, we have the following corollary.
\begin{corollary}\label{cc2}
	If we use the two-step adaptive QST algorithm with $ N_0=\alpha N $, where $ 0<\alpha<1$ is a constant, 
	the infidelity $ \mathbb{E}\left(1-F\left(\hat{\rho}, \rho\right)\right)$ can achieve the optimal scaling $O\left(1/N\right)  $.
\end{corollary}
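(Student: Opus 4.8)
The plan is to derive Corollary~\ref{cc2} directly from Theorem~\ref{qstt1} together with the sufficiency direction of Theorem~\ref{theorem1}, by substituting the resource allocation $N_0 = \alpha N$ into the scaling expressions and checking that both conditions C1 and C2 hold. Since $\alpha \in (0,1)$ is a fixed constant, the whole argument reduces to observing that every resource-dependent denominator appearing in Theorem~\ref{qstt1} degrades to a constant multiple of $N$.

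First I would verify condition C1. From Theorem~\ref{qstt1}, the MSE of the two-step estimator satisfies
\[
\mathbb{E}\|\hat{\rho}-\rho\|^2 = O\!\left(\tfrac{1}{N_0}\right) + O\!\left(\tfrac{1}{N-N_0}\right) + O\!\left(\tfrac{1}{\sqrt{N_0(N-N_0)}}\right) + O\!\left(\tfrac{1}{N_0^{3/4}(N-N_0)^{1/4}}\right).
\]
Setting $N_0 = \alpha N$ and $N - N_0 = (1-\alpha)N$, each factor is $\Theta(N)$: indeed $N_0 = \Theta(N)$, $N-N_0 = \Theta(N)$, $\sqrt{N_0(N-N_0)} = \Theta(N)$, and $N_0^{3/4}(N-N_0)^{1/4} = \Theta(N)$. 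Hence all four terms collapse to $O(1/N)$, giving $\mathbb{E}\|\hat{\rho}-\rho\|^2 = O(1/N)$, which is precisely C1.

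Next I would verify condition C2. Again from Theorem~\ref{qstt1}, for $r+1 \leq i \leq d$ the estimated zero eigenvalues obey $\mathbb{E}\hat{\lambda}_i = O(1/N_0) + O(1/\sqrt{N_0(N-N_0)})$, and with $N_0 = \alpha N$ both denominators are $\Theta(N)$, so $\mathbb{E}\hat{\lambda}_i = O(1/N)$ for each such $i$. Summing over the $d-r$ indices, a fixed finite count independent of $N$, preserves the order, so $\mathbb{E}\sum_{j=r+1}^{d}\hat{\lambda}_j = O(1/N)$, which is C2. With C1 and C2 both established, the sufficiency direction of Theorem~\ref{theorem1} applied to $S=\rho$ immediately yields $\mathbb{E}(1-F(\hat{\rho},\rho)) = O(1/N)$.

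I expect no substantive obstacle here, since this is a corollary whose heavy lifting is already done inside Theorem~\ref{qstt1} and Theorem~\ref{theorem1}. The only points genuinely requiring care are (i) confirming that the constant-$\alpha$ assumption forces the mixed-order cross terms such as $N_0^{3/4}(N-N_0)^{1/4}$ to scale as exactly $\Theta(N)$ rather than something strictly slower, so that the slowest term is still $O(1/N)$, and (ii) noting that the finite sum defining C2 does not inflate the per-eigenvalue $O(1/N)$ bound. Both are elementary once the $N_0 = \alpha N$ substitution is made.
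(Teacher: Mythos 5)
Your proposal is correct and follows essentially the same route as the paper: substitute $N_0=\alpha N$ into the scaling bounds of Theorem~\ref{qstt1} to obtain C1 and C2, then invoke the sufficiency direction of Theorem~\ref{theorem1}. The extra care you take in checking that each mixed-order denominator is $\Theta(N)$ is a harmless elaboration of what the paper states as ``straightforward.''
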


The proof is straightforward. Using Theorem \ref{qstt1},
if we choose $ N_0=\alpha N $, where $ \alpha $ is a constant and $ 0<\alpha<1$, we have the fact that the estimated zero eigenvalues of $ \hat{\rho} $ scale as $ \mathbb{E}(\hat{\lambda}_i)=O\left({1}/{N}\right) $ 
for $r+1\leq i \leq d $ and the MSE also scales as $ \mathbb{E}\|\hat{\rho}-{\rho}\|^2= O\left(1/N\right)$. Therefore,  the conditions C1 and C2 in Theorem \ref{theorem1} are both satisfied and thus the infidelity $ \mathbb{E}\left(1-F\left(\hat{\rho}, \rho\right)\right)$ has the optimal scaling $O\left(1/N\right)  $.

The computational complexity is Step-2 is $O(d^3)$. Therefore,
if we use LRE \cite{Qi2013} in Step-1,  the total  computational complexity  is still $ O(Ld^2) $  where 
$ L\geq d^2 $ is the type number of different measurement operators for QST in Step-1.

\subsection{Two-step  adaptive quantum detector tomography}\label{sec6d}

For adaptive QDT, we first propose the following lemma.
\begin{lemma}\label{prof}
	Let $ \tilde{P}=\tilde{P}(N) $ be a positive semidefinite estimate of $ P $ depending on resource number $ N $.
	Let the spectral decomposition be $\tilde{P}=\tilde{U}\operatorname{diag}\left( \tilde{\lambda}_{1}, \cdots, \tilde{\lambda}_{d}\right) \tilde{U}^{\dagger}  $, where $ \tilde\lambda_1\geq\cdots \geq \tilde\lambda_d\geq 0$. Given an integer $ r\geq 0 $,
	if $ \mathbb{E}\left(\sum_{j=r+1}^{d}\tilde{\lambda}_j\right)=O\left(1/N\right) $, then for any bounded matrix $ \mathcal{S}=\mathcal{S}(N) \in \mathbb{C}^{d\times d}, \|\mathcal{S}\|\leq c $, where $ c $ is a constant, we have $ \mathbb{E}\left(\sum_{j=r+1}^{d}\hat{\lambda}_j\right)=  O\left(1/N\right)  $, where $ \hat{\lambda}_1 \geq \cdots \geq\hat{\lambda}_d \geq 0 $ are the eigenvalues of $ \hat{P}=\mathcal{S}\tilde{P}\mathcal{S}^{\dagger} $.
\end{lemma}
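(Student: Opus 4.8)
The plan is to bound the tail eigenvalue sum of $\hat P$ by first splitting $\tilde P$ into a rank-$r$ head plus a small-trace tail, and then transferring the tail estimate through the congruence by $\mathcal S$ via Weyl's inequality.

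First I would decompose $\tilde P=\tilde P_1+\tilde P_2$, where $\tilde P_1\triangleq \tilde U_1\operatorname{diag}(\tilde\lambda_1,\dots,\tilde\lambda_r)\tilde U_1^{\dagger}$ gathers the $r$ largest eigenpairs and $\tilde P_2\triangleq \tilde U_2\operatorname{diag}(\tilde\lambda_{r+1},\dots,\tilde\lambda_d)\tilde U_2^{\dagger}$ the rest, with $\tilde U_1,\tilde U_2$ the associated eigenvector blocks. Both summands are positive semidefinite, $\operatorname{Rank}(\tilde P_1)\leq r$, and $\operatorname{Tr}(\tilde P_2)=\sum_{j=r+1}^{d}\tilde\lambda_j$. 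Conjugating gives $\hat P=A+B$ with $A\triangleq \mathcal S\tilde P_1\mathcal S^{\dagger}\geq 0$ and $B\triangleq \mathcal S\tilde P_2\mathcal S^{\dagger}\geq 0$; the key point is that congruence does not raise the rank, so $\operatorname{Rank}(A)\leq r$ and hence $\lambda_{r+1}(A)=0$.

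Next I would invoke the Weyl inequality $\lambda_{i+j-1}(A+B)\leq \lambda_i(A)+\lambda_j(B)$ with $i=r+1$, so that $\hat\lambda_{r+j}=\lambda_{r+j}(\hat P)\leq \lambda_{r+1}(A)+\lambda_j(B)=\lambda_j(B)$ for $1\leq j\leq d-r$. Summing over $j$ and using $B\geq 0$ yields the deterministic bound
\begin{equation*}
\sum_{j=r+1}^{d}\hat\lambda_j\;\leq\;\sum_{j=1}^{d-r}\lambda_j(B)\;\leq\;\operatorname{Tr}(B).
\end{equation*}
It then remains to control $\operatorname{Tr}(B)=\operatorname{Tr}(\mathcal S^{\dagger}\mathcal S\,\tilde P_2)$. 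Using the standard estimate $\operatorname{Tr}(MK)\leq \|M\|_{\mathrm{op}}\operatorname{Tr}(K)$ for $M,K\geq 0$, together with $\|\mathcal S^{\dagger}\mathcal S\|_{\mathrm{op}}=\|\mathcal S\|_{\mathrm{op}}^2\leq \|\mathcal S\|^2\leq c^2$ (the operator norm is dominated by the Frobenius norm), I obtain $\operatorname{Tr}(B)\leq c^2\sum_{j=r+1}^{d}\tilde\lambda_j$.

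Finally I would take expectations and apply the hypothesis $\mathbb E(\sum_{j=r+1}^{d}\tilde\lambda_j)=O(1/N)$, concluding $\mathbb E(\sum_{j=r+1}^{d}\hat\lambda_j)\leq c^2\,\mathbb E(\sum_{j=r+1}^{d}\tilde\lambda_j)=O(1/N)$. I do not anticipate a genuine obstacle: the only step requiring care is the rank-preservation observation giving $\lambda_{r+1}(A)=0$, while the remaining manipulations are routine. As a cross-check, the same deterministic bound $\sum_{j=r+1}^{d}\hat\lambda_j\leq\operatorname{Tr}(B)$ also follows from the Ky Fan variational identity $\sum_{j=r+1}^{d}\lambda_j(\hat P)=\min\{\operatorname{Tr}(\hat P-R): 0\leq R\leq \hat P,\ \operatorname{Rank}(R)\leq r\}$ upon choosing $R=A$, which avoids Weyl's inequality altogether.
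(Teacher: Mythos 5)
Your proof is correct, and it takes a genuinely different route from the paper's. The paper first passes to a congruence normal form: it takes the SVD of $\mathcal{S}$, forms $\mathcal{W}=\operatorname{diag}(s_1,\dots,s_d)V_{\mathcal{S}}^{\dagger}\tilde U$ and its QR factorization $\mathcal{W}=\mathcal{Q}\mathcal{M}$, so that $\hat P$ is unitarily equivalent to $\mathcal{O}=\mathcal{M}\operatorname{diag}(\tilde\lambda_1,\dots,\tilde\lambda_d)\mathcal{M}^{\dagger}$ with $\mathcal{M}$ upper triangular and bounded; the triangular structure forces the lower-right diagonal entries of $\mathcal{O}$ to involve only $\tilde\lambda_{r+1},\dots,\tilde\lambda_d$, and a Schur-majorization result (the cited inequality $0\leq\sum_{j=k}^d\hat\lambda_j\leq\sum_{j=k}^d o_j$ comparing tail sums of eigenvalues with tail sums of sorted diagonal entries) then yields the claim. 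You instead split $\tilde P=\tilde P_1+\tilde P_2$ additively, observe that $A=\mathcal{S}\tilde P_1\mathcal{S}^{\dagger}$ has rank at most $r$ so $\lambda_{r+1}(A)=0$, and apply Weyl's inequality plus the elementary bound $\operatorname{Tr}(\mathcal{S}^{\dagger}\mathcal{S}\,\tilde P_2)\leq\|\mathcal{S}\|_{\mathrm{op}}^2\operatorname{Tr}(\tilde P_2)$. Every step checks out (including the rank-preservation under congruence, the indexing $\lambda_{r+j}(A+B)\leq\lambda_{r+1}(A)+\lambda_j(B)$, and the domination of the operator norm by the Frobenius norm used in the hypothesis $\|\mathcal{S}\|\leq c$), and your Ky Fan cross-check is also valid since $A\leq\hat P$ and $\operatorname{Rank}(A)\leq r$. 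Your argument is shorter and arguably more transparent: it gives the explicit deterministic constant $c^2$ in $\sum_{j>r}\hat\lambda_j\leq c^2\sum_{j>r}\tilde\lambda_j$ and sidesteps the paper's somewhat delicate manipulation of $\mathbb{E}\mathcal{O}$ in block form (where the random ordering of the diagonal entries requires care). What the paper's normal-form computation buys is an explicit description of where the small eigenvalues sit after conjugation, but for the stated lemma that extra structure is not needed.
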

\begin{proof}
	If $r=d$, this lemma holds obviously. Thus, we focus on $r<d$.
	Since $\tilde{P}\geq 0$, $\hat{P}= \mathcal{S}\tilde{P}\mathcal{S}^{\dagger}\geq0 $.
	Assume that the  singular value decomposition (SVD) of $ \mathcal{S} $ is
	\begin{equation*}
		\begin{aligned}
			\mathcal{S}&={U}_{\mathcal{S}} \operatorname{diag}\left(s_1, \cdots, s_d\right){V}_{\mathcal{S}} ^{\dagger},
		\end{aligned}
	\end{equation*}
	where $ {U}_{\mathcal{S}} $ and $ {V}_{\mathcal{S}} $ are two $ d\times d $ unitary matrices.
	Define
	\begin{equation*}
		\begin{aligned}
			\mathcal{W}&\triangleq\operatorname{diag}\left(s_1, \cdots, s_d\right) {V}_{\mathcal{S}} ^{\dagger} \tilde{U}.
		\end{aligned}
	\end{equation*}
	Assuming that the QR decomposition of $ \mathcal{W} $ is $ \mathcal{W}=\mathcal{Q}\mathcal{M} $, where $ \mathcal{Q} $ is a unitary matrix and $ \mathcal{M} $ is an upper triangular matrix.
	Since $ \|\mathcal{S}\|\leq c $, we have $ s_i=O(1) $ for $ 1\leq i \leq d $ and thus $ \|\mathcal{W}\|=O(1) $, $ \|\mathcal{M}\|=O(1) $.	
	Therefore,
	\begin{equation}
		\begin{aligned}
			{U}_{\mathcal{S}}^{\dagger} \hat{P} {U}_{\mathcal{S}} &={U}_{\mathcal{S}}^{\dagger} {\mathcal{S}} \tilde{P} {\mathcal{S}}^{\dagger} {U}_{\mathcal{S}}  \\
			&=\operatorname{diag}\left(s_1, \cdots, s_d\right) {V}^{\dagger}_{\mathcal{S}} \tilde{U} \operatorname{diag}\left(\tilde{\lambda}_{1}, \cdots, \tilde{\lambda}_{d}\right) \tilde{U}^{\dagger} {V}_{\mathcal{S}} \operatorname{diag}\left(s_1, \cdots, s_d\right) \\
			&=\mathcal{Q}\mathcal{M} \operatorname{diag}\left(\tilde{\lambda}_{1}, \cdots, \tilde{\lambda}_{d}\right) \mathcal{M}^{\dagger} \mathcal{Q}^{\dagger}.
		\end{aligned}
	\end{equation}
	Let $ \mathcal{O}=\mathcal{M} \operatorname{diag}\left(\tilde{\lambda}_{1}, \cdots, \tilde{\lambda}_{d}\right) \mathcal{M}^{\dagger} $.
	Since $ \mathbb{E}\left(\sum_{j=r+1}^{d}\tilde{\lambda}_j\right)=O\left(1/N\right) $, we have  $ \mathbb{E}(\tilde{\lambda}_j)=O\left(1/N\right) $ for $ r+1\leq j\leq d $.  Thus,
	\begin{equation}
		\begin{aligned}
			&\mathbb{E}\mathcal{O}=\mathbb{E}\mathcal{M}\operatorname{diag}\left(\tilde{\lambda}_{1}, \cdots, \tilde{\lambda}_{d}\right) \mathcal{M}^{\dagger}=\left[\begin{array}{cc}
				\mathcal{C}_{r \times r}^{(0)}+\mathcal{C}_{r \times r}^{(1)}O(1 / N)  & \mathcal{C}_{r \times(d-r)}^{(2)} O(1 / N) \\
				\mathcal{C}_{(d-r) \times r}^{(3)} O(1 / N) & \mathcal{C}_{(d-r) \times(d-r)}^{(4)} O(1 / N),
			\end{array}\right],
		\end{aligned}
	\end{equation}
	where $ \mathcal{C}^{(i)} $ are bounded matrices. Assuming that a rearranging of the diagonal elements of $\mathcal{O}   $ in decreasing order is $ o_1\geq \cdots \geq o_d $. Therefore, the expectation values of the $ d-r $ diagonal elements $ o_{r+1}, \cdots, o_{d} $ scale as $O(1 / N)$. Since the eigenvalues of $ \mathcal{O} $ are the same as the eigenvalues of $\hat{P}  $, from \cite{CHAN1983562}, we have
	\begin{equation}
		0 \leq \sum_{j=k}^d \hat{\lambda}_j \leq \sum_{j=k}^d o_j,
	\end{equation}
	for $1\leq k \leq d  $.  Therefore, $ \mathbb{E}\left(\sum_{j=r+1}^{d}\hat{\lambda}_j\right)=  O\left(1/N\right)  $.
\end{proof}

Now we consider adaptive QDT using a similar two-step adaptive algorithm.
In Step-1, we apply MLE \cite{PhysRevA.64.024102} or the two-stage algorithm \cite{wang2019twostage} with $ N_0 $ copies of probe states and obtain $ \{\bar P_i\}_{i=1}^{n} $ where $ \bar P_i\geq0 $ and $ \sum_{i=1}^{n}\bar P_i=I $. Assume that the spectral decomposition of the $ i $-th POVM element $ \bar P_i $ is 
\begin{equation}\label{qq1}
	\bar P_i=\bar U_{i} \operatorname{diag}(\bar\lambda_1^{i},\cdots,\bar\lambda_d^{i}) \bar U_{i}^{\dagger}.
\end{equation}
Then in Step-2, for each $i,j$, we apply adaptive tomography using $ \frac{N-N_0}{nd} $  copies of new probe states  $\tilde\rho_{j}^{i}=\left|\bar{\lambda}_{j}^{i}\right\rangle\left\langle\bar{\lambda}_{j}^{i}\right|$ and obtain the corresponding measurement data $ \tilde p_j^{i} $, where $ \sum_{i=1}^{n}\tilde p_j^{i}=1 $.
Let  $ \tilde{\lambda}_{j}^{i}=\tilde p_j^{i} $ and the estimate in Step-2 is
\begin{equation}\label{qq2}
	\tilde{P}_{i} =\bar{U}_{i} \operatorname{diag}\left(\tilde{\lambda}_{1}^{i}, \cdots, \tilde{\lambda}_{d}^{i}\right) \bar{U}_{i}^{\dagger}.
\end{equation}
However, the sum of $ \tilde{P}_{i} $ may not be equal to the identity.
Let 
\begin{equation}\label{tildei}
	\sum_{i=1}^n \tilde{P}_i=\tilde{I}_{d},  
\end{equation}
and we can assume that  $\tilde{I}_{d} >0  $ because $ \tilde{I}_{d} $  converges to $ I_{d} $.
Then, we  obtain the final estimate $\hat{P}_i $ as
\begin{equation}\label{correct}
	\begin{aligned}
		\hat{P}_i=\tilde{I}_{d}^{-1 / 2} \tilde{P}_i \tilde{I}_{d}^{-1 / 2},
	\end{aligned}
\end{equation}
where $ \hat{P}_i\geq 0 $ and $ \sum_{i=1}^{n}\hat P_i=I_{d} $.
The total procedure is
\begin{equation}\label{qdtpre}
	{P}_i\quad\xrightarrow[\text{MLE/Two-stage algorithm}]{\text{Step-1:}} \quad\bar{P}_i\quad\xrightarrow[\text{ adaptive probe states}]{\text{Step-2:}} \quad \tilde{P}_{i} \quad \xrightarrow[]{\text{Correction Eq.~\eqref{correct}}} \quad\hat{P}_i.
\end{equation}

Then using the above two-step adaptive QDT algorithm, we present the following theorem to characterize the scaling of the infidelity. 
\begin{theorem}\label{qdtt1}
	If we use the  two-step adaptive QDT algorithm and  $ N_0=\alpha N $, where $ 0<\alpha<1$ is a constant,
	the infidelity $\mathbb{E}\left(1-F\left(\hat{P}_{i}, P_{i}\right)\right)\left(\text{or } \mathbb{E}\left(1- F_{d,p}\left(\hat{P}_{i}, P_{i}\right)\right)\right) $ scales as $ O(1/N) $ for each POVM element.
\end{theorem}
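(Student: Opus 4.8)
The plan is to verify, for each fixed POVM element $P_i$ separately, the two conditions C1 and C2 of Theorem~\ref{theorem1} with $S=P_i$ and $\hat S=\hat P_i$, and then invoke Theorem~\ref{theorem1} directly. The variant for $F_{d,p}$ follows for free, since $1-F_{d,p}(\hat P_i,P_i)\le 1-F_1(\hat P_i,P_i)=(1-f)\bigl(1-F(\hat P_i,P_i)\bigr)$, both subtracted/added pieces being non-negative and $1-f$ a constant. Write $r_i=\operatorname{Rank}(P_i)$ and let $\mu^i_1\ge\cdots\ge\mu^i_{r_i}>0$ be the nonzero eigenvalues of $P_i$. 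Because $N_0=\alpha N$, the Step-1 guarantee Eq.~\eqref{pmse} gives $\mathbb{E}\|\bar P_i-P_i\|^2=O(1/N_0)=O(1/N)$ for every $i$. I would first establish C1 and C2 for the Step-2 estimator $\tilde P_i$, and then transport both through the correction Eq.~\eqref{correct} to $\hat P_i$.

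For $\tilde P_i$ I work in the basis $\bar U_i$, using that the Frobenius norm and the eigenvalue multiset are unitarily invariant. Writing $M^i=\bar U_i^\dagger P_i\bar U_i$ with diagonal entries $m^i_{jj}=\langle\bar\lambda^i_j|P_i|\bar\lambda^i_j\rangle$, the probe $\tilde\rho^i_j=|\bar\lambda^i_j\rangle\langle\bar\lambda^i_j|$ yields the frequency $\tilde p^i_j=\tilde\lambda^i_j$ estimating $m^i_{jj}$, so that $\tilde P_i-P_i=\bar U_i\bigl(\operatorname{diag}(\tilde\lambda^i)-M^i\bigr)\bar U_i^\dagger$ and
\[
\|\tilde P_i-P_i\|^2=\sum_{j}(\tilde\lambda^i_j-m^i_{jj})^2+\sum_{j\ne k}|m^i_{jk}|^2 .
\]
The first sum is purely statistical: each $\tilde p^i_j$ is a binomial frequency on $(N-N_0)/(nd)=O(N)$ copies, hence has variance $O(1/N)$. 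For the second sum, since $\bar U_i$ diagonalizes $\bar P_i$, the off-diagonal part of $M^i$ equals that of $\bar U_i^\dagger(P_i-\bar P_i)\bar U_i$, so $\sum_{j\ne k}|m^i_{jk}|^2\le\|P_i-\bar P_i\|^2$; taking expectations gives C1, $\mathbb{E}\|\tilde P_i-P_i\|^2=O(1/N)$, with no appeal to non-degeneracy. For C2, the eigenvalues of $\tilde P_i$ are exactly the multiset $\{\tilde p^i_j\}_j$, so the sum of its smallest $d-r_i$ eigenvalues is at most the sum over the particular subset $\{r_i+1,\dots,d\}$, namely $\sum_{j>r_i}\tilde p^i_j$. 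Expanding $m^i_{jj}=\sum_{k\le r_i}\mu^i_k|\langle\bar\lambda^i_j|\mu^i_k\rangle|^2$ and summing over $j>r_i$ gives $\sum_{j>r_i}m^i_{jj}=\sum_{k\le r_i}\mu^i_k\,\|\bar\Pi_0|\mu^i_k\rangle\|^2$, with $\bar\Pi_0$ the projector onto the $d-r_i$ smallest eigenvectors of $\bar P_i$; the subspace $\sin\Theta$ bound of Proposition~\ref{theoremyu} (with constant gap $\mu^i_{r_i}>0$) makes each leakage term $O(\|\bar P_i-P_i\|^2)$, whence $\mathbb{E}\sum_{j>r_i}\tilde p^i_j=O(1/N)$ and C2 holds for $\tilde P_i$.

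It remains to pass through the correction $\mathcal S=\tilde I_d^{-1/2}$, where $\tilde I_d=\sum_i\tilde P_i\to I_d$, so $\mathcal S$ is bounded and $\|\mathcal S-I\|=O(\|\tilde I_d-I\|)$ with $\|\tilde I_d-I\|\le\sum_i\|\tilde P_i-P_i\|$, giving $\mathbb{E}\|\mathcal S-I\|^2=O(1/N)$. For C2, Lemma~\ref{prof} applied with this bounded $\mathcal S$ transfers the smallness of the bottom eigenvalues from $\tilde P_i$ to $\hat P_i=\mathcal S\tilde P_i\mathcal S^\dagger$. For C1, I would write $\hat P_i-P_i=(\mathcal S-I)\tilde P_i\mathcal S^\dagger+\tilde P_i(\mathcal S^\dagger-I)+(\tilde P_i-P_i)$ and bound each summand in Frobenius norm; as $\tilde P_i$ is bounded and $\|\mathcal S-I\|$, $\|\tilde P_i-P_i\|$ are both $O(1/\sqrt N)$ in mean square, a Cauchy--Schwarz step yields $\mathbb{E}\|\hat P_i-P_i\|^2=O(1/N)$. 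With C1 and C2 verified for $\hat P_i$, Theorem~\ref{theorem1} gives $\mathbb{E}(1-F(\hat P_i,P_i))=O(1/N)$ for each $i$.

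The main obstacle I anticipate is the C2 analysis of the Step-2 estimator. The eigenvalues of $\tilde P_i$ are the frequencies $\{\tilde p^i_j\}$ in a possibly \emph{wrong} order relative to the $\bar U_i$ indexing, so the reduction of ``sum of the smallest $d-r_i$ eigenvalues'' to $\sum_{j>r_i}\tilde p^i_j$ must be justified carefully, and the null-space leakage bound must be uniform over a \emph{degenerate} null space of $P_i$ — precisely where the subspace form of Proposition~\ref{theoremyu} is essential rather than any single-eigenvector perturbation estimate. A secondary technical point is the random correction $\tilde I_d^{-1/2}$, whose boundedness holds only asymptotically; rigorously, Lemma~\ref{prof} should be applied on the high-probability event $\{\tilde I_d>\tfrac12 I_d\}$, whose complement contributes negligibly to the expectation.
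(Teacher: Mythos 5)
Your proposal is correct and reaches the theorem by the same overall skeleton as the paper: verify C1 and C2 for the Step-2 estimator, push both through the correction $\hat P_i=\tilde I_d^{-1/2}\tilde P_i\tilde I_d^{-1/2}$ using the boundedness of $\tilde I_d^{-1/2}$ and Lemma \ref{prof}, and invoke Theorem \ref{theorem1}; your observation that the $F_{d,p}$ variant follows from $1-F_{d,p}\le(1-f)(1-F)$ is exactly the paper's remark after Theorem \ref{theorem1}. The genuine difference is in the middle: the paper disposes of C1 and C2 for $\tilde P_i$ by declaring the argument ``similar to the proof of Theorem \ref{qstt1},'' which proceeds eigenvalue-by-eigenvalue with separate bias and variance estimates and an explicit case split between degenerate and non-degenerate eigenvalues of the target. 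You instead work directly in the $\bar U_i$ frame, splitting $\|\tilde P_i-P_i\|^2$ into the diagonal statistical part (conditionally unbiased binomial frequencies with $O(1/N)$ variance) and the off-diagonal part, which you bound by $\|P_i-\bar P_i\|^2$ because $\bar U_i$ diagonalizes $\bar P_i$; and for C2 you collapse the whole null-space leakage into the single trace identity $\sum_{j>r_i}m^i_{jj}=\sum_k\mu^i_k\|\bar\Pi_0|\mu^i_k\rangle\|^2$ controlled by one application of the subspace bound in Proposition \ref{theoremyu}. This buys a shorter, degeneracy-free and self-contained verification (and your handling of the ordering of the $\tilde p^i_j$ and of the only-asymptotic invertibility of $\tilde I_d$ via a high-probability event is, if anything, more careful than the paper's); what the paper's route buys is uniformity with the QST proof it reuses verbatim. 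Both arguments are sound.
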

\begin{proof}
	If we choose $ N_0=\alpha N $, where $ 0<\alpha<1$ is a constant, similar to the proof of two-step adaptive QST in Theorem \ref{qstt1}, we can prove  $\mathbb{E} (\tilde{\lambda}_j^{i})=O\left(1/N\right) $ for $ r+1\leq j\leq d $ and $\mathbb{E}\left\|\tilde{P}_i-P_i\right\|^2=O\left(1/N\right)$.
	Then  we consider
	\begin{equation}
		\|\tilde{I}_{d} -I_{d}\|=\left\|\sum_{i=1}^n\left(\tilde{P}_i-{P}_i\right)\right\| \leq \sum_{i=1}^{n}\left\|\tilde{P}_i-{P}_i\right\|.
	\end{equation}
	Thus, $ \mathbb{E}\|\tilde{I}_{d} -I_{d}\|^2= O\left(1/N\right) $.
	Assuming that the spectral decomposition of $ \tilde{I}_{d}  $ is
	\begin{equation}\label{93}
		\tilde{I}_{d} =U_{\tilde{I}_{d} } \operatorname{diag}\left(1+t_1, \cdots, 1+t_d\right) U_{\tilde{I}_{d}}^{\dagger},
	\end{equation}
	and $ \|\tilde{I}_{d} -I_{d}\|^2=\sum_{i=1}^d t_i^2 $, then we have
	\begin{equation}
		\left|\frac{1}{\sqrt{1+t_i}}-1\right| \sim \frac{t_i}{2}+o\left(t_i\right),
	\end{equation}
	and thus
	\begin{equation}\label{f12}
		\left\|\tilde{I}_{d} ^{-1 / 2}-I_{d}\right\|^2=\sum_{i=1}^{d}\left(\frac{1}{\sqrt{1+t_i}}-1\right)^2\sim\sum_{i=1}^d \frac{t_i^2}{4} = \frac{1}{4}\|\tilde{I}_{d} -I_{d}\|^2.
	\end{equation}
	Therefore, $ \mathbb{E}\left\|\tilde{I}_{d}^{-1 / 2}-I_{d}\right\|^2 =O\left(1/N\right)$ and  $ \left\|\tilde{I}_{d}^{-1 / 2}\right\|^2 \sim d$. Since $ \left\|\tilde{P}_i\right\|^2 \leq d $, we have
	\begin{equation}
		\begin{aligned}
			\quad\left\|\hat{P}_i-\tilde{P}_i\right\|&=\left\|\tilde{I}_{d}^{-1 / 2} \tilde{P}_i \tilde{I}_{d}^{-1 / 2}-\tilde{P}_i\right\| \\
			&=\left\|\tilde{I}_{d}^{-1 / 2} \tilde{P}_i \tilde{I}_{d}^{-1 / 2}-\tilde{I}_{d}^{-1 / 2} \tilde{P}_i+\tilde{I}_{d}^{-1 / 2} \tilde{P}_i-\tilde{P}_i\right\| \\
			&\leq \left\|\tilde{I}_{d}^{-1 / 2} \tilde{P}_i \tilde{I}_{d}^{-1 / 2}-\tilde{I}_{d}^{-1 / 2} \tilde{P}_i\right\|+\left\|\tilde{I}_{d}^{-1 / 2} \tilde{P}_i-\tilde{P}_i\right\|\\
			&\leq \left\|\tilde{I}_{d}^{-1 / 2}\right\|\left\|\tilde{P}_i\right\|\left\|\tilde{I}_{d}^{-1 / 2}-I_{d}\right\|+\left\|\tilde{P}_i\right\|\left\|\tilde{I}_{d}^{-1 / 2}-I_{d}\right\| \\
			&=O\left(\left(d+\sqrt{d}\right)\left\|\tilde{I}_{d}^{-1 / 2}-I_{d}\right\|\right),
		\end{aligned}
	\end{equation}
	and thus $\mathbb{E}\left\|\hat{P}_i-\tilde P_i\right\|^2= O\left(1/N\right)$. Since
	\begin{equation}\label{qdtfi}
		\left\|\hat{P}_i-P_i\right\| \leq\left\|\hat{P}_i-\tilde{P}_i\right\|+\left\|\tilde{P}_i-\bar{P}_i\right\|+\left\|\bar{P}_i-P_i\right\|,
	\end{equation}
	the MSE scales as $ \mathbb{E}\left\|\hat{P}_i-P_{i}\right\|^2 =O\left(1/{N}\right)$  and thus the condition C1 in Theorem \ref{theorem1} is satisfied.
	Since $ \hat{P}_{i}=\tilde{I}_{d}^{-1 / 2}\tilde {P}_{i} \tilde{I}_{d}^{-1 / 2} $, where $ \tilde{I}_{d}^{-1 / 2} $ is a bounded matrix and $ \mathbb{E}\left(\sum_{j=r+1}^{d}\tilde{\lambda}_j^{i}\right)=O\left(1/N\right) $, using Lemma \ref{prof}, we have $ \mathbb{E}\left(\sum_{j=r+1}^{d}\hat{\lambda}_j^{i}\right)=O\left(1/N\right) $.
	Therefore, condition C2 is also satisfied
	and thus using Theorem \ref{theorem1}, the infidelity for QDT $ \mathbb{E}\left(1-F\left(\hat{P}_{i}, P_{i}\right)\right) $$\left(\text{or } \mathbb{E}\left( 1- F_{d,p}\left(\hat{P}_{i}, P_{i}\right)\right)\right) $   has the optimal scaling $O\left(1/N\right)  $. 
\end{proof}

The computational complexities in Step-2 and  in correction using Eq.~\eqref{correct} are both $O(nd^3)$. Therefore,
if we use two-stage estimation \cite{wang2019twostage} in Step-1,  the total  computational complexity  is still $ O(nMd^2) $,  where $ M\geq d^2 $ is the type number of different probe states.

\subsection{Three-step  adaptive ancilla-assisted quantum process tomography}\label{sec63}
Our proposed adaptive AAPT method in the main text comprises three steps and is applicable to both trace-preserving and non-trace-preserving quantum processes.
We first input a pure state  $ \sigma^{\text{in}}=|\Phi\rangle\langle\Phi| $ with the full Schmidt number, i.e., $\operatorname{Sch}(\sigma^{\text{in}})=d^2  $.  Let the Schmit decomposition of $ |\Phi\rangle $ be $|\Phi\rangle=\sum_{i=1}^dh_{i}|\phi_{i}^{(1)}\rangle \otimes|\phi_{i}^{(2)}\rangle$, where $ h_i>0 $ for $ 1\leq i \leq d $, and  $ \left\{|\phi_{i}^{(1)}\rangle\right\}_{i=1}^{d} $ and $ \left\{|\phi_{i}^{(2)}\rangle\right\}_{i=1}^{d} $ are two orthonormal bases. We assume that $|\phi_{i}^{(1)}\rangle=U|i\rangle   $ and $|\phi_{i}^{(2)}\rangle=V|i\rangle   $, where $ U $ and $ V $ are two $ d\times d $ unitary matrices.
Thus, the input state can be represented as
\begin{equation}\label{aaqtin}
	\begin{aligned}
		\sigma^{\text{in}}&=|\Phi\rangle\langle\Phi |=\sum_{i, j=1}^{d, d} h_i h_j(U|i\rangle \otimes V|i\rangle)\left(\langle j| U^{\dagger} \otimes\langle j| V^{\dagger}\right) \\
		&=\sum_{i, j=1}^{d, d} h_i h_j U|i\rangle\langle j|U^{\dagger} \otimes V| i\rangle\langle j| V^{\dagger}.
	\end{aligned}
\end{equation}
For trace-preserving processes, we apply the two-step adaptive QST on the output state and obtain $\hat{\sigma}^{\text{out}}$ satisfying
$ \operatorname{Tr}\left(\hat{\sigma}^{\text{out}}\right)=1 $ and $ \hat{\sigma}^{\text{out}}\geq 0 $. Using Corollary  \ref{cc2}, we have
\begin{equation}\label{qptqst1}
	\mathbb{E}\left(1- F\left(\hat{\sigma}^{\text{out}}, \sigma^{\text{out}}\right)\right)=O(1/N).
\end{equation} 
For non-trace-preserving processes,   we also apply the same procedure as the two-step adaptive QST. In Step-1, we use LRE Eq.~\eqref{lse} without the correction algorithm in \cite{effqst} and 
obtain $ \tilde{\sigma}^{\text{out}}=\sum_{i=1}^{d^2}\tilde{\lambda}_{i}\left|\tilde{\lambda}_{i}\right\rangle\left\langle\tilde{\lambda}_{i}\right| $ with $ N_0=\alpha N $ copies where $ 0<\alpha<1$ is a constant. Here $ \tilde{\sigma}^{\text{out}} $ may be non-physical. Then in Step-2, we  use the eigenbasis $\left\{\left|\tilde{\lambda}_{i}\right\rangle\left\langle\tilde{\lambda}_{i}\right|\right\}_{i=1}^{d^2}$ as the new measurement operators where the resource number is $ N-N_0 $ and obtain the corresponding  new measurement frequency data $ \{\hat p_i\}_{i=1}^{d^2} $. Let the eigenvalues be $ \hat{\lambda}_{i}= \hat p_i$ and the final estimate is
\begin{equation}
	\hat{\sigma}^{\text{out}}=\sum_{i=1}^{d^2}\hat{\lambda}_{i}\left|\tilde{\lambda}_{i}\right\rangle\left\langle\tilde{\lambda}_{i}\right|
\end{equation}
satisfying $ \hat{\sigma}^{\text{out}} \geq 0$ because $ \hat{\lambda}_{i}= \hat p_i\geq0 $. As opposed to two-step adaptive QST, here the unit trace does not hold for $ \hat{\sigma}^{\text{out}} $.  But we still have $ \operatorname{Tr}\left({\hat\sigma}^{\text{out}}\right)<1 $ because $ \sum_{i=1}^{d^2}\hat{\lambda}_{i}=\sum_{i=1}^{d^2}\hat{p}_{i}<1 $ for non-trace-preserving processes.
We call the above procedures two-step adaptive quantum pseudo-state tomography (QPST) where the total procedure is
\begin{equation}\label{qqpst}
	{\sigma}^{\text{out}}\quad \xrightarrow[]{\text{Step-1: LRE}} \quad \tilde{\sigma}^{\text{out}} \quad \xrightarrow[\text{adaptive measurements}]{\text{Step-2: }} \quad \hat{\sigma}^{\text{out}}\; (\operatorname{Tr}\left({\hat\sigma}^{\text{out}}\right)< 1).
\end{equation}
Similar to  Theorem \ref{qstt1} and Corollary  \ref{cc2}, we can also prove
\begin{equation}\label{qpst1}
	\mathbb{E}\left(1- F\left(\hat{\sigma}^{\text{out}}, \sigma^{\text{out}}\right)\right)=O(1/N).
\end{equation}

After above two-step adaptive QST/QPST, we obtain an estimate $ \hat\sigma^{\text{out}}$.
Let $ H\triangleq\operatorname{diag}\left(h_1,\cdots, h_d\right) $ and thus $ H|i\rangle=h_i|i\rangle $. Using Eq.~\eqref{sp1}, we have
\begin{equation}\label{abcd}
	\begin{aligned}
		\sum_{i, j=1}^{d, d} h_i h_j A B|i\rangle\langle j|B^{\dagger} A^{\dagger} \otimes| i\rangle\langle j|=&\sum_{i, j=1}^{d, d} A(B H)| i\rangle\langle j|(B H)^{\dagger} A^{\dagger} \otimes| i\rangle\langle j| \\
		=&\sum_{i, j=1}^{d, d} A|i\rangle\langle j|A^{\dagger} \otimes(B H)^T| i\rangle\langle j|(B H)^*\\
		=&\sum_{i, j=1}^{d, d} A| i\rangle\langle j|A^{\dagger} \otimes H B^T| i\rangle\langle j| B^* H.
	\end{aligned}
\end{equation}
Let $ \widetilde{\mathcal{E}} $ be the estimate of the quantum process $\mathcal{E}$ using $\hat\sigma^{\text {out }}$. Using Eq.~\eqref{abcd} and let $ B=U $, we have 
\begin{equation}\label{aapt1}
	\begin{aligned}
		\sum_{i, j=1}^{d, d}  h_i h_j\tilde{\mathcal{E}}\left(U|i\rangle\langle j| U^{\dagger}\right) \otimes|i\rangle\langle j|=&\sum_{i, j=1}^{d, d} \tilde{\mathcal{E}}(|i\rangle\langle j|) \otimes H U^T| i\rangle\langle j| U^* H^*\\
		=&\left(I_{d} \otimes HU^T\right)\left(\sum_{i, j=1}^{d, d}\tilde{\mathcal{E}}(|i\rangle\langle j|) \otimes|i\rangle\langle j|\right)\left(I_{d} \otimes U^*H\right).
	\end{aligned}
\end{equation}
Therefore, using  Eqs.~\eqref{aaqtin} and \eqref{aapt1},  $ \hat\sigma^{\text {out }} \geq 0 $ can be represented as
\begin{equation}
	\begin{aligned}
		\hat{\sigma}^{\text{out}}=&(\tilde{\mathcal{E}} \otimes {I})\left(\sigma^{\text{in}} \right)=(\tilde{\mathcal{E}} \otimes {I})\left(|\Phi\rangle\langle\Phi | \right)\\
		=&(I \otimes V)\sum_{i,j=1}^{d,d}  h_{i}h_{j} \left( \tilde{\mathcal{E}}\left(U| i\rangle\langle j|U^{\dagger}\right) \otimes   | i\rangle\langle j| \right)(I \otimes V)^{\dagger}\\
		=&\left(I \otimes VHU^T\right)\sum_{i, j=1}^{d, d}(\widetilde{\mathcal{E}}(|i\rangle\langle j|) \otimes|i\rangle\langle j|)\left(I \otimes U^*HV^{\dagger}\right).
	\end{aligned}
\end{equation}
Define $ \tilde{X}_{0}\triangleq\sum_{i,j=1}^{d,d}\widetilde{\mathcal{E}}(|i\rangle\langle j|) \otimes|i\rangle\langle j| $ and
thus we can calculate it  as (note that $ H>0 $)
\begin{equation}\label{ghzprue}
	\tilde{X}_{0}=\left(I \otimes U^{*}H^{-1}V^{\dagger}\right) \hat{\sigma}^{\text {out }} \left(I \otimes VH^{-1}U^T\right).
\end{equation}
Since $ \hat{\sigma}^{\text {out }}\geq 0$, we have $ \tilde{X}_{0}\geq  0$.
Crucially, if $\widetilde{\mathcal{E}}\rightarrow\mathcal{E}$, from Eqs.~\eqref{choi} and \eqref{ghzx} we know $\tilde{X}_{0}\rightarrow X$. Therefore, $\tilde{X}_{0}$ is potentially a candidate estimate of the process matrix $X$. However, $\tilde{X}_{0}$ may not satisfy the partial trace requirement $ \operatorname{Tr}_{1}(\hat{X})=I_d $ or $ \operatorname{Tr}_{1}(\hat{X})\leq I_d $, and we thus need another step to correct it as follows.

In Step-3, our goal is to correct the partial trace of $\tilde{X}_0$. A feasible approach is to employ the second stage of the two-stage solution proposed in \cite{xiaoqpt}. For trace-preserving processes, we apply Eq.~\eqref{tss21} and obtain $ \hat X $ satisfying
$ \operatorname{Tr}_{1}(\hat{X})=I_d $, and for non-trace-preserving processes,  we apply Eq.~\eqref{tss22} and obtain $ \hat X $ satisfying $ \operatorname{Tr}_{1}(\hat{X})\leq I_d $.
The total procedure of three-step adaptive AAPT is 
\begin{equation}\label{aaptpro}
	{\sigma}^{\text {out }}\quad  \xrightarrow[\text{adaptive QST/QPST}]{\text{Steps 1-2:}} \quad \hat{\sigma}^{\text{out}} \quad \xrightarrow[]{\text{Eq}.~\eqref{ghzprue}} \quad  \tilde{X}_{0} \quad  \xrightarrow[\text{Correction on partial trace}]{\text{Step-3:}} \quad \hat{X}.
\end{equation}

Here, we present the following theorem to characterize the performance of the above algorithm for both trace-preserving and non-trace-preserving processes with a pure input state in AAPT.
\begin{theorem}\label{theorem2}
	Let the input state $ \sigma^{\text{in}} $ of AAPT be a pure state with $\operatorname{Sch}(\sigma^{\text{in}})=d^2  $.
	If we use the proposed three-step adaptive AAPT algorithm and take $ N_0=\alpha N $, where $ 0<\alpha<1$ is a constant,
	the infidelity $\mathbb{E}\left(1-F\left(\hat{X}, X\right)\right)\left(\text{or } \mathbb{E}\left(1- F_{d,p}\left(\hat{X}, X\right)\right)\right) $ scales as $ O(1/N) $.
\end{theorem}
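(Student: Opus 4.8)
The plan is to reduce the entire statement to Theorem~\ref{theorem1}: since both $\hat X\ge 0$ and the true process matrix $X\ge 0$, it suffices to verify the two equivalent conditions for $\hat X$, namely C1 (the MSE $\mathbb{E}\|\hat X-X\|^2=O(1/N)$) and C2 (the partial sum of the $d^2-r$ smallest eigenvalues of $\hat X$ is $O(1/N)$, with $r=\operatorname{rank}(X)$), and then track these two quantities through the three stages of the algorithm. The input to this chain comes from Steps~1--2: Eqs.~\eqref{qptqst1} and \eqref{qpst1} give $\mathbb{E}(1-F(\hat\sigma^{\text{out}},\sigma^{\text{out}}))=O(1/N)$, and since $\sigma^{\text{out}}$ is a positive semidefinite (pseudo-)state, the necessity direction of Theorem~\ref{theorem1} immediately yields $\mathbb{E}\|\hat\sigma^{\text{out}}-\sigma^{\text{out}}\|^2=O(1/N)$ together with $\mathbb{E}\sum_{j=r+1}^{d^2}\hat\lambda_j(\hat\sigma^{\text{out}})=O(1/N)$, where $r=\operatorname{rank}(\sigma^{\text{out}})$.

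Next I would treat the transformation Eq.~\eqref{ghzprue} as a congruence $\tilde X_0=\mathcal{S}\,\hat\sigma^{\text{out}}\,\mathcal{S}^\dagger$ with $\mathcal{S}=I\otimes U^{*}H^{-1}V^\dagger$, alongside the exact relation $X=\mathcal{S}\,\sigma^{\text{out}}\,\mathcal{S}^\dagger$. The crucial role of the hypothesis $\operatorname{Sch}(\sigma^{\text{in}})=d^2$ is that it forces all Schmidt coefficients $h_i>0$, so $H$ is invertible and $\mathcal{S}$ is a bounded, invertible matrix independent of $N$. Then $\tilde X_0-X=\mathcal{S}(\hat\sigma^{\text{out}}-\sigma^{\text{out}})\mathcal{S}^\dagger$ and submultiplicativity give C1 for $\tilde X_0$; invertibility of $\mathcal{S}$ makes the congruence rank-preserving, so $\operatorname{rank}(X)=r$; and Lemma~\ref{prof} applied with the bounded $\mathcal{S}$ transfers the small-eigenvalue bound, establishing C2 for $\tilde X_0$.

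Finally, Step~3 (Eq.~\eqref{aaptpro}) is once more a congruence $\hat X=(I\otimes\mathcal{T})\,\tilde X_0\,(I\otimes\mathcal{T})^\dagger$, with $\mathcal{T}=\hat Q^{-1/2}$ in the trace-preserving case (Eq.~\eqref{tss21}) and $\mathcal{T}=\tilde Q^{1/2}\bar Q^{-1/2}$ in the non-trace-preserving case (Eq.~\eqref{tss22}). For C1, since $\tilde X_0\ge 0$ already satisfies $\mathbb{E}\|\tilde X_0-X\|^2=O(1/N)$, it plays exactly the role of $\hat G$ in the two-stage solution, so $\mathbb{E}\|\hat X-\tilde X_0\|^2=O(1/N)$ as in Eq.~\eqref{stage22}, and a triangle inequality with Cauchy--Schwarz closes C1. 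For C2 I would show $\mathcal{T}$ is bounded: in the trace-preserving case $\hat Q=\operatorname{Tr}_1(\tilde X_0)\to I_d$, so $\hat Q^{-1/2}$ is bounded; in the non-trace-preserving case each diagonal factor of $\tilde Q^{1/2}\bar Q^{-1/2}$ is $(\tilde f_i/\bar f_i)^{1/2}=\min(1,1/\bar f_i)^{1/2}\le 1$, so $\|\mathcal{T}\|_{\mathrm{op}}\le 1$. Lemma~\ref{prof} then preserves C2 (the rank index $r$ is still that of the true $X$ because every congruence in the chain is invertible). With C1 and C2 both holding for $\hat X$, Theorem~\ref{theorem1} gives $\mathbb{E}(1-F(\hat X,X))=O(1/N)$, and the $F_{d,p}$ version follows from the non-negativity relation $1-F_{d,p}\le 1-F_1=(1-f)(1-F)$ noted after Theorem~\ref{theorem1}.

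The main obstacle is the non-trace-preserving correction. There $\bar Q^{-1/2}$ can carry eigenvalues of order $\sqrt N$, since the artificially inserted eigenvalues $\hat f_c/N$ of $\bar Q$ are inverted, so any crude bound on the congruence factor diverges and the naive preservation of C2 fails. The observation that restores boundedness is the exact cancellation $\tilde f_i^{1/2}\bar f_i^{-1/2}\le 1$ provided by the capping $\tilde f_i=\min(\bar f_i,1)$; verifying this cancellation, and confirming that the rank index $r$ in C2 remains that of the true $X$ across all (invertible) congruences, is where the care is needed. A secondary technical point is that in the trace-preserving case $\hat Q^{-1/2}$ is bounded only asymptotically, so the low-probability event of an ill-conditioned $\hat Q$ must be controlled, consistent with the standing assumption $\hat Q>0$.
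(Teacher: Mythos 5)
Your proposal is correct and follows essentially the same route as the paper's proof: extract C1 and C2 for $\hat\sigma^{\text{out}}$ from the adaptive QST/QPST guarantee via Theorem~\ref{theorem1}, transfer both conditions through the bounded invertible congruence of Eq.~\eqref{ghzprue} and then through the Stage-2 correction (Eqs.~\eqref{tss21}/\eqref{tss22}, using Eq.~\eqref{stage22} for the MSE and Lemma~\ref{prof} for the small eigenvalues), and close with the sufficiency direction of Theorem~\ref{theorem1}. Your explicit verification that the non-trace-preserving correction factor $\tilde Q^{1/2}\bar Q^{-1/2}$ stays bounded via the cancellation $\tilde f_i^{1/2}\bar f_i^{-1/2}\le 1$ is a detail the paper leaves implicit by citing \cite{xiaoqpt}, but it does not change the argument.
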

\begin{proof}
	For a trace-preserving process,
	after the two-step adaptive QST, we obtain $ \hat{\sigma}^{\text{out}}$, where $ \operatorname{Tr}\left(\hat{\sigma}^{\text{out}}\right)=1 $ and reconstruct $ \tilde{X}_{0} $  as Eq.~\eqref{ghzprue}. From Theorem \ref{qstt1}, we know $\mathbb{E}\left\|\hat{\sigma}^{\text {out }}-{\sigma}^{\text {out }}\right\|^2=O\left(1/{N}\right)$. Using Eq.~\eqref{ghzprue}, we have 
	\begin{equation}
		\mathbb{E}\left\|\tilde{X}_{0}-X\right\|^2=O\left(\frac{1}{N}\right).
	\end{equation}
	Then using Eq.~\eqref{tss21} in Step-3 for trace-preserving processes, from Eq.~\eqref{stage22}, we have $\mathbb{E}\left\|\hat{X}-\tilde{X}_{0}\right\|^2=O\left({1}/{N}\right)  $.
	Therefore, the ultimate MSE scales as $\mathbb{E}\left\|\hat{X}-X\right\|^2=O\left({1}/{N}\right)  $,  which satisfies the condition C1 in Theorem \ref{theorem1}.
	
	Since we apply the two-step adaptive QST for trace-preserving processes, using Corollary \ref{cc2},  we have $\mathbb{E}\left(1- F\left(\hat{\sigma}^{\text{out}}, \sigma^{\text{out}}\right)\right)=O(1/N)$. Let $r = \operatorname{Rank}(\sigma^{\text{out}}) = \operatorname{Rank}(X)$ (by Eq.~\eqref{ghzprue}). Then from Theorem \ref{theorem1}, the eigenvalues of $ \hat{\sigma}^{\text{out}}  $ (arranged in descending order) scale as $ \mathbb{E}\hat{\lambda}_i=O\left(1/{N}\right) $ for $ r+1 \leq i \leq d^2 $.
	Thus
	using Eq.~\eqref{ghzprue} and Lemma \ref{prof}, the $i$-th eigenvalue (in descending order) of the reconstructed process matrix $ \tilde{X}_{0} $ also scales as $ O\left(1/{N}\right) $ for $ r+1 \leq i \leq d^2 $.
	Then using Eq.~\eqref{tss21} in Step-3 and Lemma \ref{prof}, the $i$-th eigenvalue (in descending order) of the reconstructed process matrix $ \hat X$ also scales as $ O\left(1/{N}\right) $ for $ r+1 \leq i \leq d^2 $, which satisfies the condition C2 in Theorem \ref{theorem1}.
	Therefore, using Theorem \ref{theorem1}, we have  $\mathbb{E}\left( 1- F\left(\hat{X}, X\right)\right) = O(1/N) $ for trace-preserving processes.
	
	For a non-trace-preserving process,
	after the two-step adaptive QPST, we obtain $ \hat{\sigma}^{\text{out}}$, where $ \operatorname{Tr}\left(\hat{\sigma}^{\text{out}}\right)<1 $ and also reconstruct $ \tilde{X}_{0} $  as Eq.~\eqref{ghzprue}. Then using Eq.~\eqref{tss22} in Step-3, we obtain the final estimate $ \hat X $.
	Similar to trace-preserving processes, using Eqs.~\eqref{ghzprue} and \eqref{tss22}, we also have $\mathbb{E}\left\|\hat{X}-X\right\|^2=O\left({1}/{N}\right)  $ satisfying the condition C1 in Theorem \ref{theorem1}.
	Since we use the two-step  adaptive QPST for non-trace-preserving processes, from Eq.~\eqref{qpst1}, we have $\mathbb{E}\left(1- F\left(\hat{\sigma}^{\text{out}}, \sigma^{\text{out}}\right)\right)=O(1/N)$. Then using Theorem~\ref{theorem1} and similar to the case of trace-preserving processes, the $i$-th eigenvalue (in descending order) of the reconstructed process matrix $ \hat X$ also scales as $O\left(1/{N}\right) $ for $ r+1 \leq i \leq d^2 $, which satisfies the condition C2 in Theorem~\ref{theorem1}.
	Therefore, using Theorem~\ref{theorem1}, we have  $\mathbb{E}\left( 1- F\left(\hat{X}, X\right)\right)\left(\text{or } \mathbb{E}\left(1- F_{d,p}\left(\hat{X}, X\right)\right)\right) = O(1/N) $ for non-trace-preserving processes. 
\end{proof}

The computational complexity in Step-2 is $O(d^4)$ in correction of Eq.~\eqref{ghzprue} is $O(d^6)$ and in Step-3 is $O(d^6)$. Therefore,
if we use LRE \cite{Qi2013} in Step-1,  the total  computational complexity  is still $ O(Ld^4) $  where 
$ L\geq d^4 $ is the type number of different measurement operators for QST.

\begin{remark}
	To satisfy $ \operatorname{Tr}_{1}( \hat {X})= I_{d} $ or $ \operatorname{Tr}_{1}( \hat {X})\leq I_{d} $, in addition to Eqs.~\eqref{tss21} and \eqref{tss22} of the Stage-2 algorithm in the two-stage solution \cite{xiaoqpt}, one can also use other algorithms. If these algorithms can keep  $ O\left(1/{N}\right) $  scaling of the estimated zero eigenvalues, the final infidelity also has  optimal scaling.
\end{remark}

\section{Numerical results}\label{sec7}
In the numerical and experimental examples, Cube measurements are typically employed.
For one-qubit systems, the Cube measurements are $\frac{I\pm\sigma_x}{2}, \frac{I\pm\sigma_y}{2}, \frac{I\pm\sigma_z}{2}$.
For two-qubit systems, the Cube measurements are the tensor products of one-qubit Cube measurements. Therefore,   there are nine detectors, with each detector containing four POVM elements for the two-qubit Cube measurements.

\begin{figure}
	\centering
	\subfigure{
		\begin{minipage}[b]{0.9\textwidth}
			\centering	\includegraphics[width=4in]{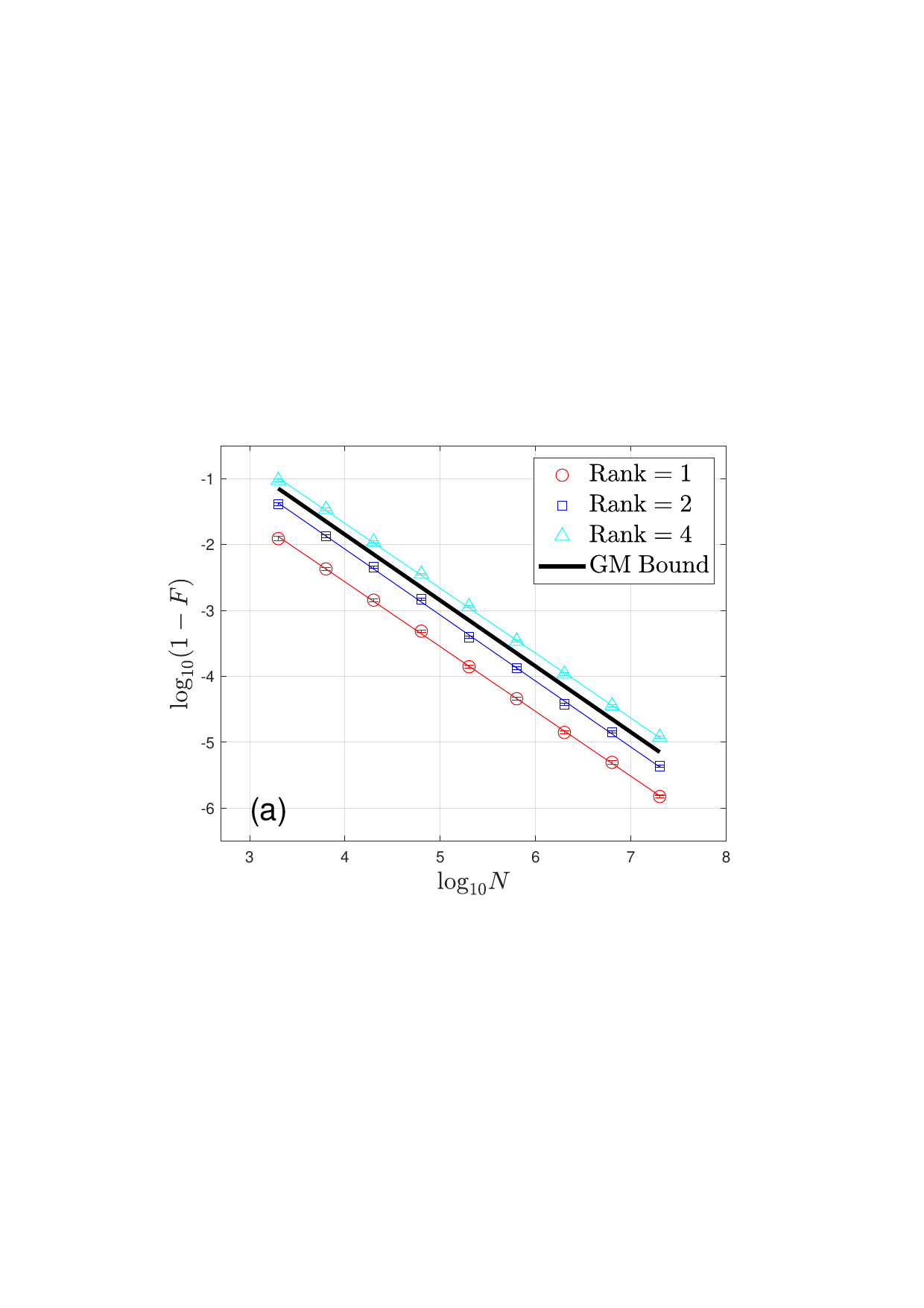}
		\end{minipage}
	}
	\subfigure{
		\begin{minipage}[b]{0.9\textwidth}
			\centering	\includegraphics[width=4in]{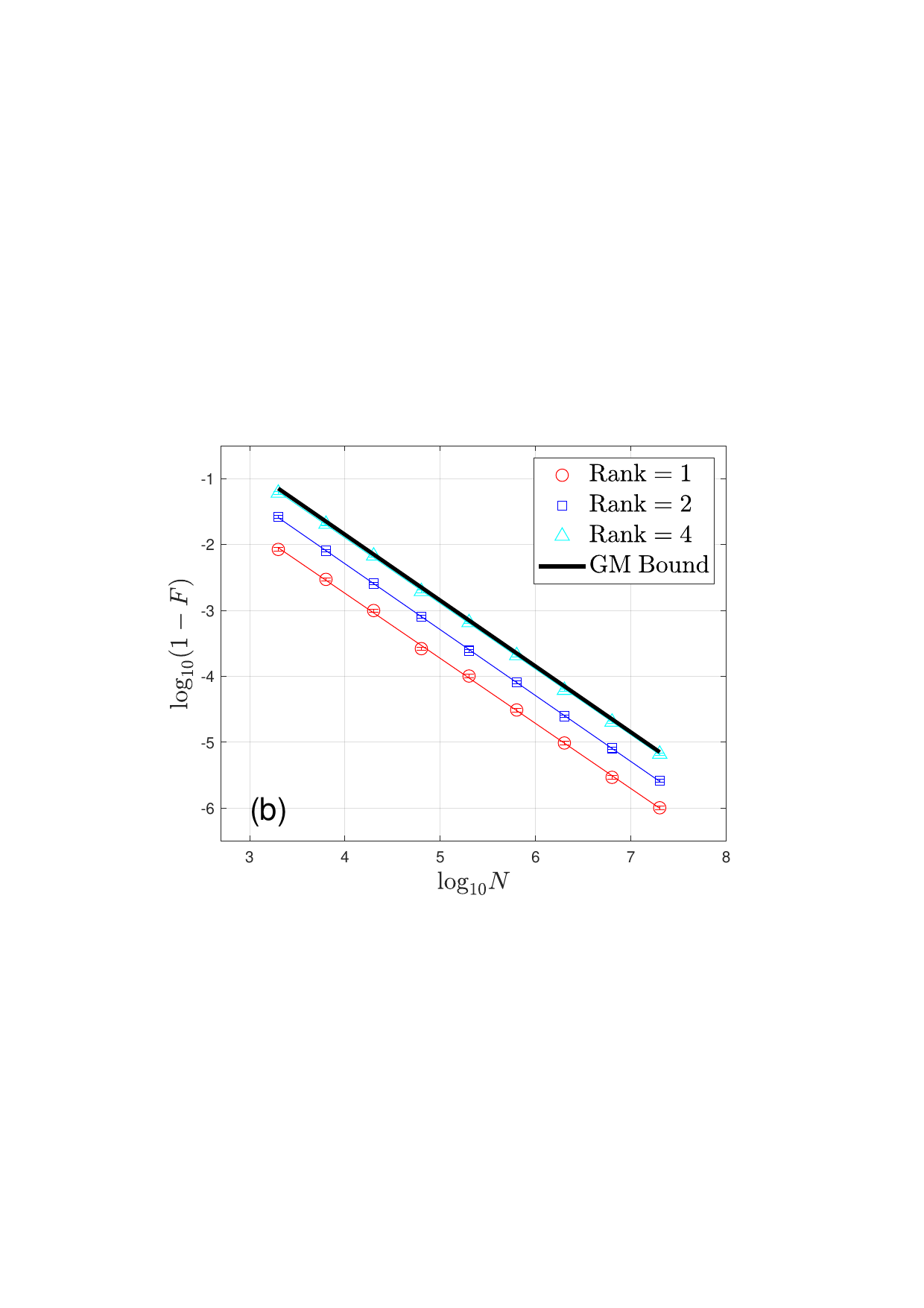}
		\end{minipage}
	}
	\caption{Log-log plot of the infidelity $ \mathbb{E}\left(1-F(\hat\rho, {\rho})\right) $ versus  the total resource number $N$ for the rank-1, rank-2, rank-4 quantum states in Eq.~\eqref{state}. (a) $ N_0=0.5N $, the infidelities for the rank-1 and rank-2 states surpass the Gill-Massar (GM) bound, but the infidelity for the rank-4 state is slightly larger than the GM bound. (b) $ N_0=0.9N $, all the infidelities  surpass the GM bound and are slightly smaller than that of $ N_0=0.5N $.} \label{adqstinf}
\end{figure}

In QST, the Gill-Massar (GM) lower bound is a version of the quantum Cram\'{e}r-Rao bound which is applicable to individual measurements on each  copy of the state \cite{gmbound,zhu}. In the case of a $d$-dimensional quantum state, the GM bound for the mean infidelity is $\frac{1}{4}(d+1)^2(d-1) \frac{1}{N}$ and holds for any unbiased estimation~\cite{gmbound,Hou2018}. However, the GM bound  may be violated when the state  has zero or near zero eigenvalues (with the specific threshold depending on $N$), in which case common estimators can be biased due to the boundary of the state space \cite{Hou2018}.
For QST, we consider the unknown states as
\begin{equation}\label{state}
	\begin{aligned}
		\rho=&U \operatorname{diag}(1,0,0,0,0,0,0,0) U^{\dagger},\\
		\rho=&U \operatorname{diag}(1/2,1/2,0,0,0,0,0,0) U^{\dagger},\\
		\rho=&U \operatorname{diag}(1/4,1/4,1/4,1/4,0,0,0,0) U^{\dagger},
	\end{aligned}
\end{equation}
and $ U $ is a random unitary matrix generated by the algorithm in \cite{qetlab,Zyczkowski_1994,qutip},
which is then fixed in each repeated simulation run such that noise in the measurement is only from the finite number of copies of the input states.

We begin by applying the two-step adaptive QST algorithm to the quantum states given in Eq.~\eqref{state}, assuming no prior knowledge of the states.
In Step-1, we employ three-qubit Cube measurements using \( N_0 = 0.5N \) state copies. 
In Step-2, we apply adaptive measurement operators \( \{ \lvert \tilde{\lambda}_i \rangle \langle \tilde{\lambda}_i \rvert \}_{i=1}^d \), using the remaining \( N - N_0 = 0.5N \) state copies.
The results are shown in Fig.~\ref{adqstinf}(a), where all the infidelities scale as $ O(1/N) $. 
In addition,  the infidelities for the rank-1 and rank-2 states surpass the GM bound because the states have zero eigenvalues. But the infidelity for the rank-4 state is slightly larger than the GM bound.

As a further investigation,
we change the resource distribution to $ N_0=0.9N $ in Step-1 and $ N-N_0=0.1N $ in Step-2. The results are shown in Fig.~\ref{adqstinf}(b) where all the infidelities  surpass the GM bound and are slightly smaller than that for $ N_0=0.5N $, indicating that the resource distribution proportion affects the tomography error. Furthermore,
as the rank increases, the mean infidelities in Fig. \ref{adqstinf}  also increase and a similar phenomenon was also observed in \cite{PhysRevA.98.012339}. The reason may be that the distance between the zero eigenvalue and the smallest positive eigenvalue decreases from $1$ to $1/4$.

\begin{figure}
	\centering
	\includegraphics[width=4in]{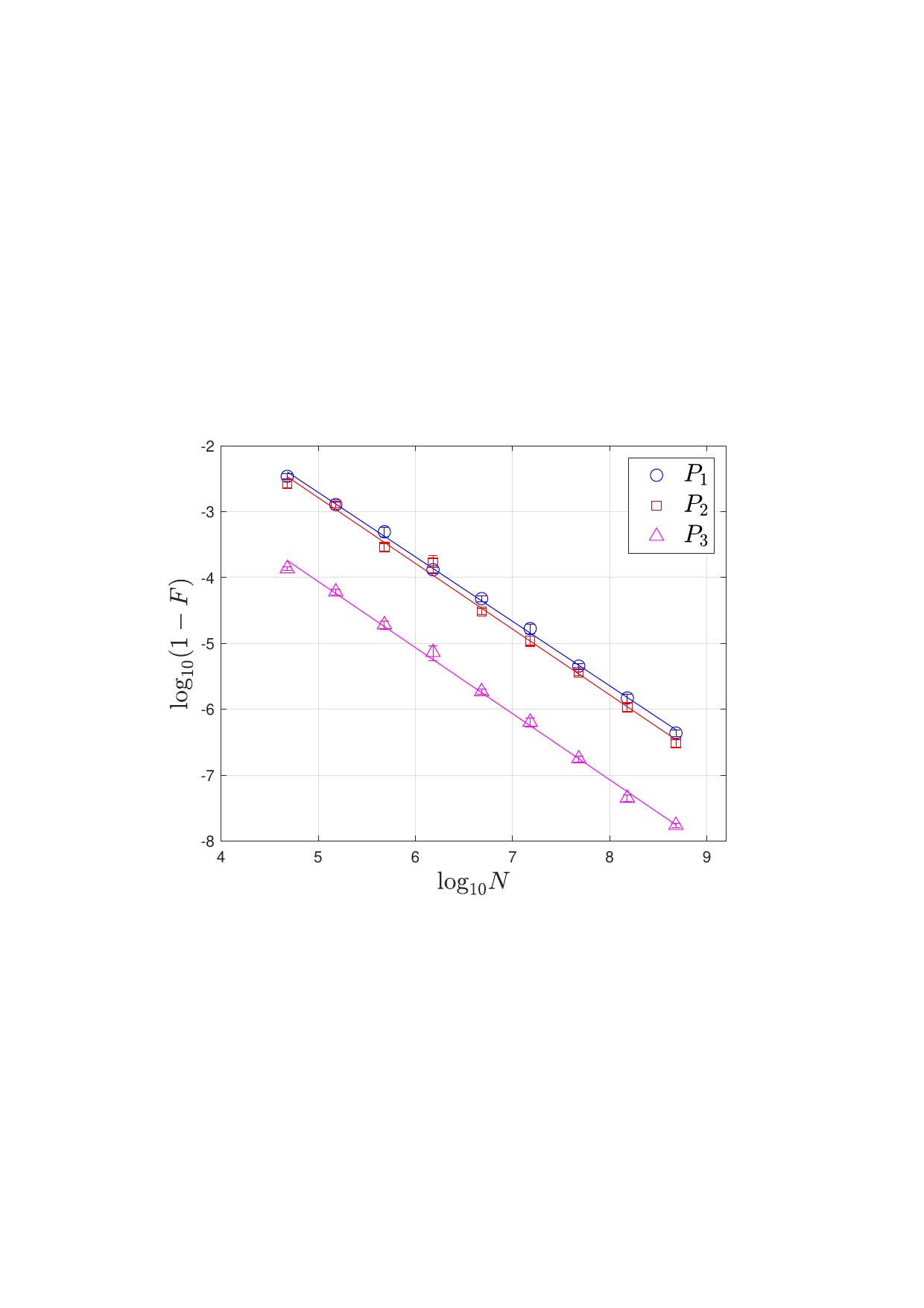}
	\centering{\caption{Log-log plot of the infidelity $ \mathbb{E}\left(1-F\left(\hat{P}_{i}, P_{i}\right)\right) $ versus  the total resource number $N$  for the three-valued detectors in Eq.~\eqref{three1} using the two-step adaptive QDT algorithm in Section \ref{sec6d}. The infidelities for all the POVM elements scale as $ O(1/N) $ satisfying Theorem \ref{qdtt1}}}\label{adqdtinf}
\end{figure}

For  QDT, we consider a three-valued detector as in \cite{xiao2021optimal}:
\begin{equation}
	\label{three1}
	\begin{aligned}
		&P_{1}+P_{2}+P_{3}=I,\\
		&P_{1}=U_{1} \operatorname{diag}\left(0.4, 0,0,0\right) U_{1}^{\dagger}=0.4 U_{1}(|00\rangle\langle 00|) U_{1}^{\dagger},\\
		&P_{2}=U_{2} \operatorname{diag}\left(0, 0.5,0,0\right) U_{2}^{\dagger}=0.5 U_{2}(|01\rangle\langle 01|) U_{2}^{\dagger},
	\end{aligned}
\end{equation}
where $ U_1 $ and $ U_2 $ are randomly generated unitary matrices \cite{qetlab,Zyczkowski_1994,qutip} that are subsequently fixed throughout the simulation.
We implement the two-step adaptive QDT algorithm in Section \ref{sec6d} on the detector in Eq.~\eqref{three1} where the resource number in Step-1 and Step-2 are both $ N/2 $. In Step-1, the probe states are $24 $ random pure states \cite{qetlab,qutip,MISZCZAK2012118} and in Step-2, we apply $ 12 $ adaptive probe states $\tilde\rho_{j}^{i}=\left|\bar{\lambda}_{j}^{i}\right\rangle\left\langle\bar{\lambda}_{j}^{i}\right|$.
The results are shown in Fig. \ref{adqdtinf}, where the infidelities for all the POVM elements scale as $ O(1/N) $ satisfying Theorem \ref{qdtt1}.

Then we apply the three-step adaptive AAPT in Section \ref{sec63} for a non-trace-preserving phase damping process characterized by two Kraus operators
\begin{equation}\label{phase2}
	\mathcal{A}_1=\operatorname{diag}(1,\sqrt{1/3}),\quad \mathcal{A}_2=\operatorname{diag}(0,\sqrt{1/3}).
\end{equation}
The input state is a random pure state and then fixed in the simulation, where the Schmidt number is four. For adaptive AAPT, we apply the two-step adaptive QPST in Section \ref{sec63}
with resource number $ N_0=N/2 $, and obtain $ \hat\sigma^{\text{out}}$ and then reconstruct $ \tilde{X}_{0} $ using Eq.~\eqref{ghzprue}.
Then in Step-3, we apply Eq.~\eqref{tss22} on $ \tilde{X}_{0} $ to obtain a physical estimate $ \hat X $.
As a comparison, we also simulate the results of a non-adaptive AAPT algorithm where we assume that we know a prior the value of $\operatorname{Tr}\left(\sigma^{\text{out}}\right)$ such that the non-adaptive version of the algorithm can be straightforwardly designed and realized.
Note that in the adaptive AAPT, we do not have this prior knowledge.
In non-adaptive AAPT,  we apply non-adaptive QST via LRE Eq.~\eqref{lse}  with the resource number $ N_0=N $ and obtain $ \tilde{\sigma}^{\text{out}}=\sum_{i=1}^{d^2}\tilde{\lambda}_{i}\left|\tilde{\lambda}_{i}\right\rangle\left\langle\tilde{\lambda}_{i}\right| $ where $\tilde{\lambda}_{1}\geq \cdots \geq \tilde{\lambda}_{k} \geq 0 > \tilde{\lambda}_{k+1} \geq \cdots \geq \tilde{\lambda}_{d^2} (k\leq d^2)  $. To satisfy the positive semidefinite constraint and the trace value $\operatorname{Tr}\left(\sigma^{\text{out}}\right)$, the final estimate is given by
\begin{equation}
	\hat{\sigma}^{\text {out }}=\frac{\operatorname{Tr}\left({\sigma}^{\text{out}}\right)}{\sum_{i=1}^k \tilde{\lambda}_i}\sum_{i=1}^k \tilde{\lambda}_i
	\left|\tilde{\lambda}_i\right\rangle\left\langle\tilde{\lambda}_i\right|.
\end{equation}
Then we obtain a physical estimate $\hat X$ using Step-3 of the adaptive AAPT.

\begin{figure}
	\centering
	\includegraphics[width=4in]{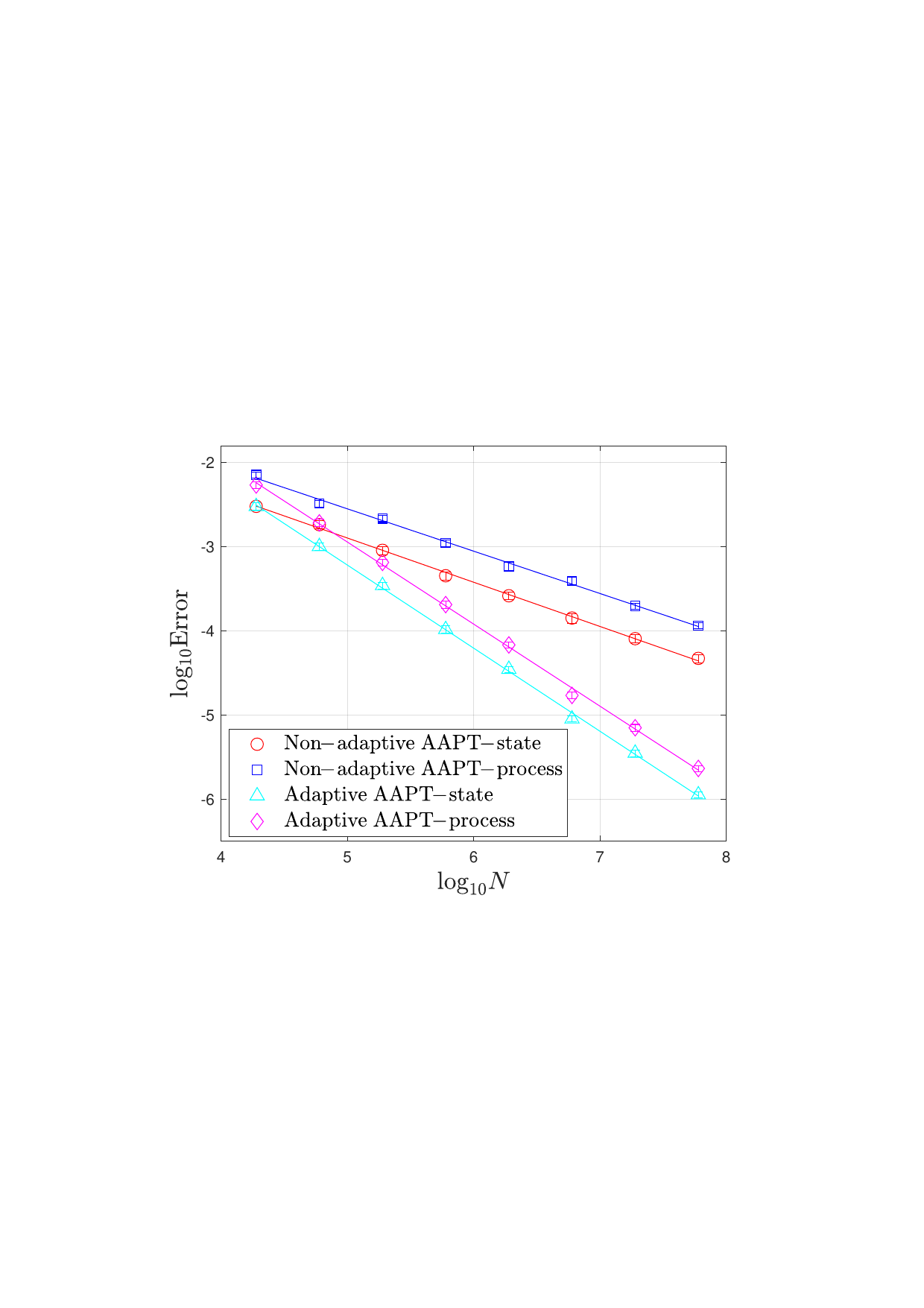}
	\centering{\caption{Log-log plot of the infidelities $\mathbb{E}\left(1- F\left(\hat{\sigma}^{\text{out}}, \sigma^{\text{out}}\right)\right) $ for the reconstructed output state $ \hat{\sigma}^{\text{out}} $ and $\mathbb{E}(1-F(\hat X,X )) $ for the reconstructed process $ \hat{X} $ versus  the total resource number $N$ for the phase damping process in Eq.~\eqref{phase2} using the three-step adaptive AAPT algorithm in Section \ref{sec63} and its non-adaptive version. The infidelities of adaptive methods scale as $O(1/N)$ satisfying Theorem \ref{theorem1} and Theorem \ref{theorem2}, while non-adaptive methods scale as $O(1/\sqrt{N})$.}\label{adaaptinf1}}
\end{figure}

The results are shown in Fig. \ref{adaaptinf1}, where we plot the infidelities $\mathbb{E}\left(1- F\left(\hat{\sigma}^{\text{out}}, \sigma^{\text{out}}\right)\right) $ for the reconstructed output state $ \hat{\sigma}^{\text{out}} $ and $\mathbb{E}(1-F(\hat X,X )) $ for the reconstructed process $ \hat{X} $.
From Fig. \ref{adaaptinf}, using non-adaptive AAPT, even with a prior knowledge of $ \operatorname{Tr}\left({\sigma}^{\text{out}}\right) $, the infidelities of the reconstructed output state $\mathbb{E}\left(1- F\left(\hat{\sigma}^{\text{out}}, \sigma^{\text{out}}\right)\right) $ and of the reconstructed process matrix  $\mathbb{E}(1- F(\hat{X}, X)) $ both scale only  as $ O(1/\sqrt{N}) $ because the non-adaptive method cannot ensure to achieve $ O(1/{N}) $ scaling of the estimated zero eigenvalues. However, using the three-step adaptive AAPT in Section \ref{sec63}, the infidelities $\mathbb{E}\left(1- F\left(\hat{\sigma}^{\text{out}}, \sigma^{\text{out}}\right)\right) $ and $\mathbb{E}(1- F(\hat{X}, X)) $  both scale as $ O(1/{N}) $.

\section{Ancilla-assisted quantum process tomography experimental setup and results}

\subsection{Bell state preparation}

In the state preparation stage, a Ti-sapphire laser first outputs a light pulse centered at a wavelength of $780$ nm, with a repetition rate of approximately $76$ MHz and a pulse duration of about $150$ fs. After passing through a frequency doubler, the pulse frequency is doubled, converting the infrared light into ultraviolet light. Subsequently, the enhanced ultraviolet pulse is focused onto a BBO crystal that is specifically cut to achieve a type-II phase-matched spontaneous parametric down-conversion (SPDC). High-intensity ultraviolet pulses within the crystal trigger nonlinear optical effects, thereby splitting a high-energy ultraviolet photon into a pair of lower-energy infrared photons.

One of the photons serves as a trigger and is detected by a single-photon counter, while the other photon is used as a single-photon source and input at the beginning of the optical path. This photon is prepared in the $\lvert H\rangle$ state  after passing through a polarizing beam splitter (PBS). It then becomes a $\frac{1}{\sqrt{2}}(\lvert H \rangle + \lvert V\rangle)$ state after passing through a half-wave plate oriented at $22.5^{\circ}$. To transform the polarization state into a path state, a beam displacer (BD) is used to direct the $H$ component into path $1$ and the $V$ component into path $0$. A half-wave plate oriented at $0^{\circ}$ is placed in path $1$, and a half-wave plate oriented at $90^{\circ}$ is placed in path $0$. This successfully prepares the photon in the $\frac{1}{\sqrt{2}}(\lvert H,1\rangle + \lvert V,0\rangle)$ state, which represents the maximally entangled state of the polarization and path qubits.

\subsection{Measurement setup}

In this section, we provide more detailed information about the adaptive AAPT experiment. The experimental optical setup for the measurement part is shown in Fig. \ref{measurement}. The optical setup allows for arbitrary two-qubit projective measurements, with the four POVM elements corresponding to the four exits E$_1$-E$_4$ in Fig. \ref{measurement}(a). It can be proven that the combination of waveplates in the form of Q-H-Q can implement any two-dimensional unitary transformation, while the combination Q-H can convert the $|H\rangle$ state into any state. The six variable waveplate combinations in the figure implement six unitary transformations, which are the six so-called coin operators $C(x,t)$. Each passage through BD corresponds to the application of the operator $T = \lvert x,H\rangle\langle x,H\lvert \ + \lvert x-1,V\rangle\langle x,V\lvert$, where $x$ represents the path. Let the upper path in the diagram be $x_1$ and the lower path be $x_0$. BD$_1$ is a special BD composed of two BDs bonded together, as illustrated in Fig. \ref{measurement}(b). Its overall effect is equivalent to applying the operator $T_1 = \lvert x_0,-V\rangle\langle x_1,H\lvert + \lvert x_1,V\rangle\langle x_1,V\lvert + \lvert x_0,-H\rangle\langle x_0,V\lvert + \lvert x_0,H\rangle\langle x_1,H\lvert$. The outputs of BD$_i$ divide Fig. \ref{measurement}(a) into four stages, and the evolution in each stage is determined by a unitary transformation of the form $U(t) = TC(t)$, where $C(t) = \sum_x \lvert x\rangle\langle x\lvert \otimes C(x,t)$ is determined by site-dependent coin operators $C(x,t)$. After $k$ stages, the unitary operator generated by the coin operators and translation operator reads $U = TC(k)\cdots TC(2)TC(1)$.  

\begin{figure}
	\centering 
	\includegraphics[width=6.25in]{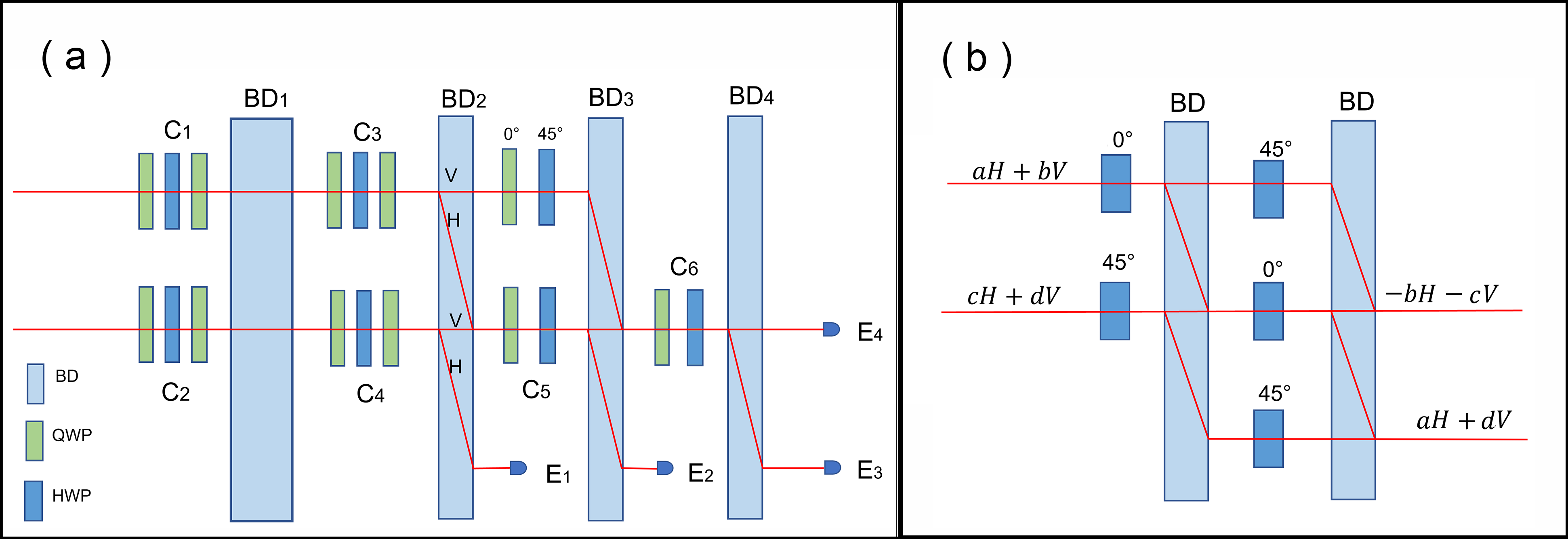}
	\centering{\caption{(a): Realization of arbitrary projective measurements using photonic quantum walks. The translation operator is realized by beam displacers (BDs). The nontrivial coin operators are realized by half wave plates (HWPs) and quarter wave plates (QWPs) with rotation angles specified in Table \ref{tab}. (b): The specific construction of BD$_1$ involves two BDs, with small HWPs fixed at specific angles attached at the front and in the middle.}\label{measurement} }  
\end{figure}

\begin{table}
	\caption{The waveplate angles for the nine detectors, with C$_1$-C$_4$ using the Q-H-Q waveplate combination, while C$_5$ and C$_6$ using the Q-H waveplate combination.}
	\label{tab}
	
	\small %
	\centering
	
	\resizebox{\textwidth}{!}{
		\begin{tabular}{ccccccc} 
			\toprule
			detector& \textbf{C$_1$} & \textbf{C$_2$} & \textbf{C$_3$} & \textbf{C$_4$} & \textbf{C$_5$} & \textbf{C$_6$} \\
			\midrule
			\begin{tabular}{c}
				\\ D1 \\ D2 \\ D3 \\ D4 \\ D5 \\ D6 \\ D7 \\ D8 \\ D9  
			\end{tabular} 
			& \begin{tabular}{ccc}
				QWP & HWP & QWP \\67.5 & 135 & 22.5\\0   & 67.5 & 0\\0   &  0   & 0\\157.5& 22.5 & 157.5\\90  & 22.5 & 0\\0   & 90 & 90\\0  & 0 & 0\\0  & 0 & 0\\0  & 0 & 0\\
			\end{tabular} 
			& \begin{tabular}{ccc}
				QWP & HWP & QWP \\ 22.5 & 135 & 67.5\\ 0 & 112.5 & 90\\  0   & 135 & 90\\  22.5  & 135 & 67.5\\  0  & 112.5 & 90\\  0  & 135 & 90\\  22.5 & 135 & 67.5\\ 0 & 112.5 & 90\\ 0 & 135 & 90\\
			\end{tabular}
			& \begin{tabular}{ccc}
				QWP & HWP & QWP \\ 22.5 & 135 & 67.5\\112.5 & 157.5 & 112.5\\112.5 & 135 & 157.5\\157.5 & 135 & 112.5\\ 67.5 & 112.5 & 67.5\\ 67.5 & 135 & 22.5\\ 0 & 135 & 90\\ 90 & 135 & 90\\ 90 & 135 & 0\\
			\end{tabular}
			& \begin{tabular}{ccc}
				QWP & HWP & QWP\\ 67.5 & 135 & 22.5\\ 67.5 & 135 & 22.5\\ 67.5 & 135 & 22.5\\ 67.5 & 135 & 22.5\\ 67.5 & 135 & 22.5\\
				67.5 & 135 & 22.5\\ 90 & 135 & 0\\  90 & 135 & 0\\ 90 & 135 & 0\\
			\end{tabular}
			& \begin{tabular}{cc}
				QWP & HWP \\ 0 & 0 \\ 0 & 0 \\ 0 & 0 \\ 0 & 0 \\  0 & 0 \\  0 & 0 \\  0 & 0 \\ 0 & 0 \\0 & 0\\
			\end{tabular}
			& \begin{tabular}{cc}
				QWP & HWP \\  0 & 45 \\ 0 & 45 \\ 0 & 45 \\ 0 & 45 \\ 0 & 45 \\ 0 & 45 \\  45 & 22.5 \\ 45 & 67.5 \\ 0 & 45 \\
				
			\end{tabular}\\
			\bottomrule
	\end{tabular}}
\end{table}

Assume that the four POVM elements of a measurement basis are $f_1$, $f_2$, $f_3$ and $f_4$. Since they are mutually orthogonal, by adjusting the angles of the waveplate combinations C$_1$ to C$_6$ as shown in Table~\ref{tab}, $U$ can be set to satisfy $Uf_1 = [0,1,0,0]^{T}$, $Uf_2 = [1,0,0,0]^{T}$, $Uf_3 = [0,0,1,0]^{T}$, $Uf_4 = [0,0,0,1]^{T}$, which can then be output from the four exits, respectively. The four exits are all connected to single-photon counters via fiber optic coupling, with a coupling efficiency of about $80\%$.  By comparing the number of photons counted in each single-photon counter, the values of the detectors  can be obtained.

\subsection{Experimental results with different resource distributions}

In Section \ref{sec7}, we have discussed how the proportion of resource allocation influences tomography precision. We experimentally validate this effect using the same phase damping process as in the main text. Specifically, we examine the case where the resource allocation in the first step in three-step adaptive AAPT is  $N_0=0.9N$, as shown in Fig.~\ref{0.9-0.1}. The total numbers of resources used are set to $N=30$, $100$, $300$, $950$, $3000$, $9490$, $30000$ and $94870$, respectively. The results indicate that with a $0.9N$--$0.1N$ allocation in two steps, compared to a $0.5N$--$0.5N$ allocation, higher precision is achievable with the same number of photons, and the overall scaling of infidelity remains consistent as $O(1/N)$.

\begin{figure}[!t]
	\centering 
	\includegraphics[width=4in]{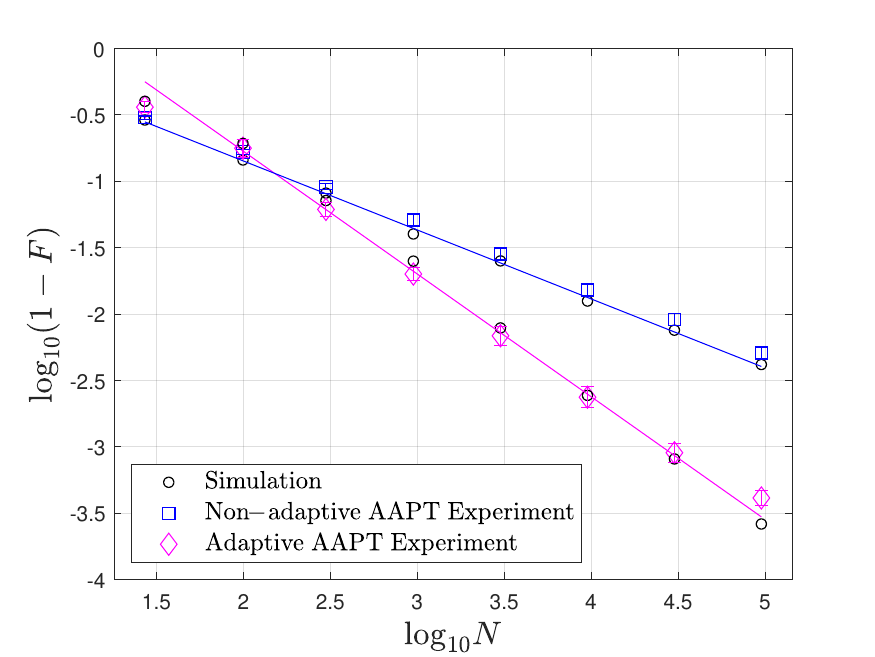}
	\centering\caption{Log-log plot of the infidelity $\mathbb{E}\left(1-F\left(\hat{X}, X\right)\right)$ versus  the total resource number $N$ for the phase damping process with a resource distribution of $N_0=0.9N$, where all the other conditions remain unchanged. Square markers indicate experimental results obtained via the non-adaptive method, and diamond markers represent results using the adaptive AAPT method. Black circles represent simulation outcomes, with solid lines representing fitted simulations. Error bars show the standard deviations across 100 repeated experiments. The adaptive method achieves infidelity scaling as $O(1/N)$, whereas the non-adaptive method scales as $O(1/\sqrt{N})$.}
	\label{0.9-0.1} 
\end{figure}

\end{document}